\def\card{\parallel\!\!\!\!=}
\def\l{\left}
\def\r{\right}
\def\Rd{\mathbb{R}^d}
\def\Cinf{\mathcal{C}_\infty}
\newtheorem{lemma}{Lemma}[section]
\newtheorem{corollary}[lemma]{Corollary}
\newtheorem{theorem}[lemma]{Theorem}
\newtheorem{proposition}[lemma]{Proposition}
\newtheorem{definition}[lemma]{Definition}
\newtheorem{remark}[lemma]{Remark}
\newcommand{\eh}{\hfill}
\newlength{\sperrT}
\newlength{\sperrTT}
\numberwithin{equation}{section}
\begin{document}

\title{Estimating the number of negative eigenvalues of a
relativistic Hamiltonian
with regular magnetic field}

\date{\today}

\author{Viorel Iftimie, Marius M\u antoiu and Radu Purice\footnote{Institute
of Mathematics ``Simion Stoilow'' of
the Romanian Academy, P.O.  Box 1-764, Bucharest, RO-70700, Romania,
Email: viftimie@math.math.unibuc.ro, mantoiu@imar.ro, purice@imar.ro}}

\maketitle

\begin{abstract}
We prove the analog of the Cwickel-Lieb-Rosenblum estimation for
the number of negative eigenvalues of a relativistic Hamiltonian
with magnetic field $B\in C^\infty_{\rm{pol}}(\mathbb
R^d)$ and an electric potential $V\in L^1_{\rm{loc}}(\mathbb
R^d)$, $V_-\in L^d(\mathbb R^d)\cap L^{d/2}(\mathbb R^d)$. 
Compared to the nonrelativistic case, this estimation involves both norms of $V_-$ in $L^{d/2}(\mathbb R^d)$ and in $L^{d}(\mathbb R^d)$. A
direct consequence is a Lieb-Thirring inequality for the sum of
powers of the absolute values of the negative eigenvalues.
\end{abstract}

\section{Introduction}
\label{sec:Intro}

For the Schr\"odinger operator $-\Delta+V$ on $L^2(\mathbb{R}^d)$
($d\geq3$), one has the well-known CLR (Cwikel-Lieb-Rosenblum)
estimation for $N(V)$, {\it the number of negative eigenvalues}:
\begin{equation}\label{CLR}
 N(V)\ \leq\ c(d)\int_{\mathbb{R}^d}dx\,\left|V_-(x)\right|^{d/2}.
\end{equation}
$V$ is the multiplication operator with the function $V\in
L^1_{\rm{loc}}(\mathbb{R}^d)$ and $V_-:=(|V|-V)/2\in
L^{d/2}(\mathbb{R}^d)$; the constant $c(d)>0$ only depends on the
dimension $d\geq3$ (see \cite{RS}, Th. XII.12).

There exist at least four different proofs of this inequality. Rosenblum \cite{R} uses
"piece-wise polynomial approximation in Sobolev spaces". Lieb \cite{L} relies on the Feynman-Kac formula.
Cwickel \cite{C} uses ideas from interpolation theory. Finally,
Li and Yau \cite{LY} make a heat kernel analysis.

The inequality (\ref{CLR}) has been extended in \cite{AHS} and \cite{S1} to the case of operators with
magnetic fields $(-i\nabla-A)^2+V$, where the components of the vector potential $A=(A_1,\dots,A_d)$
belong to $L^2_{\rm{loc}}(\mathbb R^d)$. The basic ingredient of the proof is the Feynman-Kac-Ito formula.
Melgaard and Rosenblum \cite{MR} generalizes this result (by a different method) to a class of differential
operators of second order with variable coefficients. The idea for treating the relativistic Hamiltonian (without a magnetic field), by replacing Brownian motion with a L\'{e}vy process, appears in \cite{D} and we follow it in our work giving all the technical details. Some similar results but for a different Hamiltonian and with different techniques have been obtained recently in \cite{FLS}.

Our aim in this paper is to obtain an estimation of the type (\ref{CLR}) for an operator that is a good
candidate for a relativistic Hamiltonian with magnetic field (for scalar particles); it is gauge covariant and obtained through a quantization procedure from the classical candidate. We shall make use of
a "magnetic pseudodifferential calculus" that has been introduced and developed in some previous papers \cite{M},
\cite{MP1}, \cite{KO1}, \cite{KO2}, \cite{MP2}, \cite{MP4}, \cite{IMP}.

Let us denote by $C^\infty_{\rm{pol}}(\mathbb R^d)$ the family of functions $f\in C^{\infty}(\mathbb R^d)$ for
which all the derivatives $\partial^\alpha f$, $\alpha\in\mathbb N^d$ have polynomial growth.

Let $B$ be a magnetic field (a $2$-form) with components $B_{jk}\in C^\infty_{\rm{pol}}(\Rd)$. It is
known that it can be expressed as the differential $B=dA$ of a vector potential (a $1$-form)
$A=(A_1,\dots,A_d)$ with $A_j\in C^\infty_{\rm{pol}}(\mathbb R^d)$, $j=1,\dots,d$;
an example is the transversal gauge:
$$
A_j(x)=-\sum_{k=1}^n\int_0^1ds\;B_{jk}(sx)sx_k.
$$

We denote by 

\begin{equation}
\Gamma^A(x,y):=\int^1_0ds\,A((1-s)x+sy)=\int_{[x,y]}A,\ \ x,y\in\Rd.
\end{equation}
 the circulation of $A$ along the segment $[x,y]$, $x,y\in\Rd$.
If $a$ is a symbol on $\mathbb{R}^d$, one defines by an oscillatory integral
the linear continuous operator
$\mathfrak{Op}^A(a):\mathcal S(\mathbb R^{d})\rightarrow\mathcal S^*(\mathbb R^{d})$ by

\begin{equation}\label{Op}
\left[\mathfrak{Op}^A(a)\right](x):=(2\pi)^{-d}\int_{\Rd}\int_{\Rd} dy\,d\xi\,e^{i(x-y)\cdot\xi}
e^{-i\int_{[x,y]}A}a\left(\frac{x+y}{2},\xi\right)u(y),
\end{equation}
The correspondence $a\mapsto \mathfrak{Op}^A(a)$ is meant to be a quantization and could be regarded as
a functional calculus $\mathfrak{Op}^A(a)=a(Q,\Pi^A)$ for the family of non-commuting operators
$(Q_1,\dots,Q_d;\Pi^A_1,\dots,\Pi^A_d)$, where $Q$ is the position operator, $\Pi^A:=D-A(Q)$ is the magnetic
momentum, with $D:=-i\nabla$.

If $a$ belongs to the Schwartz space $\mathcal S(\mathbb R^{2d})$,
then $\mathfrak{Op}^A(a)$ acts continuously in the spaces $\mathcal S(\Rd)$ and $\mathcal S^*(\Rd)$, respectively.
It enjoys the important physical property of being gauge covariant: if $\varphi\in C^\infty_{\rm{pol}}(\Rd)$
is a real function, $A$ and $A':=A+d\varphi$ define the same magnetic field and one prove easily that
$\mathfrak{Op}^{A'}(a)=e^{i\varphi}\mathfrak{Op}^A(a)e^{-i\varphi}$. The property is not shared by the quantization
$a\mapsto \mathfrak{Op}_A(a):=\mathfrak{Op}(a\circ\nu_A)$, where $\mathfrak{Op}$ is the usual Weyl quantization
and $\nu_A:\mathbb R^d\rightarrow\mathbb R^d$, $\nu_A(x,\xi):=(x,\xi-A(a))$ is an implementation of
"the minimal coupling".

We mention that in the references quoted above, a symbolic calculus is developed for the magnetic pseudodifferential
operators (\ref{Op}). In particular, a symbol composition $(a,b)\mapsto a\sharp^B b$ is defined and studied, verifying
$\mathfrak{Op}^A(a)\mathfrak{Op}^A(b)=\mathfrak{Op}^A(a\sharp^B b)$. It depends only on the magnetic field $B$,
no choice of a gauge being needed. The formalism has a $C^*$-algebraic interpretation in terms of twisted crossed
products, cf. \cite{MP1}, \cite{MP3}, \cite{MPR1} and it has been used in \cite{MPR2} for the spectral theory
of quantum Hamiltonians with anisotropic potentials and magnetic fields.

We shall denote by $H_A$ the unbounded operator in $L^2(\Rd)$ defined on $C_0^\infty(\Rd)$ by
$H_Au:=\mathfrak{Op}^A(h)u$, with $h(x,\xi)\equiv h(\xi):=<\xi>-1=(1+|\xi|^2)^{1/2}-1$. One can express it as
\begin{equation}\label{express}
\left(H_Au\right)(x)=(2\pi)^{-d}\int_{\Rd}\int_{\Rd} dy\,d\xi\,e^{i(x-y)\cdot\xi}h\l(\xi-\Gamma^A(x,y)\r)u(y).
\end{equation}
$H_A$ is a symmetric operator and, as seen below, essentially self-adjoint on $C_0^\infty(\Rd)$.
Also denoting its closure by $H_A$, we will have $H_A\ge 0$.

Ichinose and Tamura \cite{IT1}, \cite{IT2}, using the quantization $a\mapsto \mathfrak(Op)_A(a)$,
study another relativistic Hamiltonian with magnetic field defined by
\begin{equation}\label{expres}
\left(H'_Au\right)(x)=(2\pi)^{-d}\int_{\Rd}\int_{\Rd} dy\,d\xi\,e^{i(x-y)\cdot\xi}
h\l(\xi-A\l(\frac{x+y}{2}\r)\r)u(y),
\end{equation}
for which they prove many interesting properties. Unfortunately, $H'_A$ is not gauge covariant (cf. \cite{IMP}). Many of the properties of $H'_A$ also hold for $H_A$ (by replacing
$A\l(\frac{x+y}{2}\r)$ with $\Gamma^A(x,y)$ in the statements and proofs) and this will be used in the sequel.

Aside the magnetic field $B=dA$, we shall also consider an electric potential $V\in L^1_{\rm{loc}}(\Rd)$,
real function expressed as $V=V_+-V_-$, $V_\pm\ge 0$, such that $V_-\in L^{d+k}(\Rd)\cap L^{d/2+k}(\Rd)$ for some $k\ge 0$. We are
interested in the operator $H(A,V):=H_A+V$; it will be shown that it is well-defined in form sense as a self-adjoint
operator in $L^2(\Rd)$, with essential spectrum included into the positive real axis. Taking advantage of gauge
covariance, we denote by $N(B,V)$
the number of strictly negative eigenvalues of $H(A,V)$ (multiplicity counted); it only depends on the potential $V$
and the magnetic field $B$.

The main result of the article is

\begin{theorem}\label{Main}
Let $B=dA$ be a magnetic field with $B_{jk}\in C^\infty_{\rm{pol}}(\Rd)$,
$A_j\in C^\infty_{\rm{pol}}(\Rd)$ and let $V=V_+-V_-\in L^1_{\rm{loc}^(\Rd)}$ be a real function
with $V_\pm\geq0$ and $V_-\in L^{d}(\Rd)\cap L^{d/2}(\Rd)$. Then there exists a constant $C_d$, only depending on the dimension $d\ge 3$, such that
\begin{equation}\label{main}
N(B,V)\le C_d\left(\int_{\Rd} dx\,V_-(x)^d+\int_{\Rd} dx\,V_-(x)^{d/2}\right).
\end{equation}
\end{theorem}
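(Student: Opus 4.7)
The plan is to follow the classical three-step strategy: a Birman--Schwinger reduction, a diamagnetic inequality to eliminate the vector potential, and finally a trace-class estimate for the scalar (zero-field) relativistic operator.

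First, since the nonnegative part $V_+$ only raises the operator in form sense, it suffices to bound $N(B,-V_-)$. The Birman--Schwinger principle applied to $H_A - V_-$, after regularizing $H_A \mapsto H_A + \epsilon$ and letting $\epsilon \downarrow 0$, yields
\begin{equation}
N(B, V) \;\leq\; n\bigl(1;\, K_A\bigr), \qquad K_A := V_-^{1/2}\,(H_A + \epsilon)^{-1}\,V_-^{1/2},
\end{equation}
where $n(1; K_A)$ counts eigenvalues of the self-adjoint compact operator $K_A$ that are $\geq 1$. Compactness of $K_A$ and the form-domain construction of $H(A,V)$ are handled through the magnetic pseudodifferential calculus recalled in the introduction.

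Next, I would establish a diamagnetic inequality
\begin{equation}
\bigl|\bigl(e^{-tH_A} u\bigr)(x)\bigr| \;\leq\; \bigl(e^{-tH_0}\, |u|\bigr)(x), \qquad t > 0,\; u \in L^2(\Rd),
\end{equation}
where $H_0$ denotes $H_A$ with $A \equiv 0$. The natural route is a Feynman--Kac--It\^o representation of $e^{-tH_A}$ in terms of the L\'evy (subordinated Brownian) process $\{X_s\}_{s \geq 0}$ whose generator is $-H_0$, together with a stochastic line integral of $A$ along the sample paths; taking moduli kills the magnetic phase. Writing $(H_A + \epsilon)^{-1} = \int_0^\infty e^{-t\epsilon}\, e^{-tH_A}\, dt$ then gives a pointwise domination of the integral kernel of $K_A$ by that of $K_0 := V_-^{1/2}(H_0 + \epsilon)^{-1} V_-^{1/2}$, from which $n(1; K_A) \leq n(1; K_0)$ follows by a standard singular-value comparison.

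It remains to estimate $n(1; K_0)$. I would use Lieb's trace trick: for a suitable nonnegative function $F$ with $F(\lambda) \geq \chi_{[1,\infty)}(\lambda)$ one has $n(1; K_0) \leq \mathrm{tr}\, F(K_0)$, and inserting the resolvent representation reduces the right-hand side to spatial integrals against the relativistic heat kernel $p_t(x,y) := e^{-tH_0}(x,y)$. The crucial input is the two-regime asymptotics of $p_t$: as $t \downarrow 0$ one has $p_t(x,x) \sim c\, t^{-d}$ (Cauchy-like, ultrarelativistic), whereas as $t \to \infty$ the mass term dominates and $p_t(x,x) \sim c\, t^{-d/2}$ (Gaussian, nonrelativistic). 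Splitting the $t$-integral at $t = 1$ and optimizing produces respectively $\int V_-^d$ from small $t$ and $\int V_-^{d/2}$ from large $t$, accounting for the two terms in (\ref{main}).

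The principal obstacle is the second step: pushing Feynman--Kac--It\^o through for $H_A$ defined via the \emph{magnetic} pseudodifferential calculus (\ref{Op}), where $A$ enters through the geometric circulation $\Gamma^A(x,y)$ rather than the minimal coupling $\xi - A\bigl(\tfrac{x+y}{2}\bigr)$ used by Ichinose and Tamura. A stochastic representation for the latter (their $H'_A$) is available in the literature, so the task is to adapt that derivation to the line-integral quantization, or alternatively to estimate the difference $H_A - H'_A$ sharply enough to transfer the diamagnetic bound from $H'_A$ to $H_A$. Once this is in place, the Birman--Schwinger reduction, heat-kernel asymptotics, and Lieb's trace method are essentially adaptations of arguments from \cite{AHS}, \cite{S1}, and \cite{D}.
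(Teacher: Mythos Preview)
Your overall strategy---reduce to $V_+=0$, Birman--Schwinger, diamagnetic comparison, Lieb's trace method with the two-regime heat-kernel bound---is exactly the one the paper follows. However, one step as you have written it is genuinely wrong, and it is \emph{not} the one you flag as the principal obstacle.

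The problematic step is: ``pointwise domination of the integral kernel of $K_A$ by that of $K_0$ \ldots\ from which $n(1;K_A)\le n(1;K_0)$ follows by a standard singular-value comparison.'' No such comparison exists. Pointwise domination $|K_A(x,y)|\le K_0(x,y)$ does \emph{not} imply that $K_A$ has fewer eigenvalues above $1$ than $K_0$. A $2\times2$ counterexample: $A=\begin{pmatrix}1&0\\0&1\end{pmatrix}$ and $B=\begin{pmatrix}1&1\\1&1\end{pmatrix}$ satisfy $|A_{ij}|\le B_{ij}$, both are positive self-adjoint, yet $A$ has two eigenvalues $\ge 1$ while $B$ has only one. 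What \emph{does} transfer under the domination $|Tu|\le S|u|$ is the trace norm (Simon's Lemma~15.11 in \cite{S1}), provided $S$ is already trace class---and $K_0$ is not.

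The paper's cure, and the correct ordering of your own ingredients, is to apply the Lieb function $F_\infty$ \emph{before} comparing. From the identity $F_\lambda(K_\alpha(A))=V_-^{1/2}(H_A+\lambda V_-+\alpha)^{-1}V_-^{1/2}$ and the diamagnetic/Feynman--Kac--It\^o bound one obtains $|F_\infty(K_\alpha(A))u|\le F_\infty(K_\alpha)|u|$ directly; since $F_\infty(K_\alpha)$ \emph{is} trace class (this is where the heat-kernel estimate and Jensen's inequality for the convex $\tilde g_\infty$ enter), Simon's lemma gives $\mathrm{Tr}\,F_\infty(K_\alpha(A))\le \mathrm{Tr}\,F_\infty(K_\alpha)$, and then $N_{-\alpha}(B,-V_-)\le F_\infty(1)^{-1}\mathrm{Tr}\,F_\infty(K_\alpha)$ yields the bound. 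So your two halves are right but must be interleaved, not run in series.

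A secondary remark: the obstacle you identify---adapting Feynman--Kac--It\^o from $H'_A$ to the gauge-covariant $H_A$---is real but less central than you suggest. The paper obtains the diamagnetic inequality (Proposition~\ref{diamagnetic}) via a Kato-type pointwise inequality and Simon's semigroup argument, independently of any path-integral representation; the Feynman--Kac--It\^o formula is invoked only to produce the explicit kernel of $F_\infty(K_\alpha(A))$ needed for the trace comparison above, and the adaptation from Ichinose--Tamura amounts to replacing $A\bigl(\tfrac{x+y}{2}\bigr)$ by $\Gamma^A(x,y)$ throughout.
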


A standard consequence is the next Lieb-Thirring-type estimation:

\begin{corollary}\label{LT}
We assume that the components of $B$ belong to $C^\infty_{\rm{pol}}(\Rd)$ and that
$V=V_+-V_-\in L^1_{\rm{loc}}(\Rd)$ is a real function
with $V_\pm\geq0$ and $V_-\in L^{d+k}(\Rd)\cap L^{d/2+k}(\Rd)$, $k>0$. We denote by $\lambda_1\le\lambda_2\le\dots$ the strictly negative
eigenvalues of $H(A,V)$ (with multiplicity). For any $d\ge 2$ there exists a constant $C_d(k)$ such that
\begin{equation}\label{lt}
\sum_j|\lambda_j|^k\le C_d(k)\left(\int_{\Rd} dx\,V_-(x)^{d+k}+\int_{\Rd} dx\,V_-(x)^{d/2+k}\right).
\end{equation}
\end{corollary}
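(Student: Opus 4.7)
The plan is to deduce Corollary \ref{LT} from Theorem \ref{Main} by the standard Aizenman--Lieb argument. The first step is a min-max observation: for each $t>0$, since $H(A,V)$ has essential spectrum in $[0,\infty)$ and is defined by a closed quadratic form, the variational characterization of eigenvalues below the essential spectrum shows that the number of eigenvalues of $H(A,V)$ strictly below $-t$ coincides with the number of strictly negative eigenvalues of $H(A,V)+t=H_A+(V+t)$. Observing that $(V+t)_-=(V_- - V_+ - t)_+\le(V_- - t)_+\le V_-$, Theorem \ref{Main} applies to the shifted potential and gives
\[
N_t\;:=\;\#\{j:\lambda_j<-t\}\;\le\;C_d\int_{\Rd}\l[(V_- - t)_+^d+(V_- - t)_+^{d/2}\r]\,dx.
\]

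The second step is the layer-cake identity $\sum_j|\lambda_j|^k=k\int_0^\infty t^{k-1}N_t\,dt$, combined with Tonelli and the elementary Beta-function computation
\[
\int_0^\infty t^{k-1}(s-t)_+^p\,dt\;=\;B(k,p+1)\,s^{p+k},\qquad s\ge 0,\ p,k>0,
\]
proved by the substitution $u=t/s$. Applying this with $p=d$ and $p=d/2$ and $s=V_-(x)$, and then integrating in $x$, yields exactly (\ref{lt}) with $C_d(k):=C_d\cdot k\cdot\l[B(k,d+1)+B(k,d/2+1)\r]$. The finiteness of the right-hand side is guaranteed by the hypothesis $V_-\in L^{d+k}(\Rd)\cap L^{d/2+k}(\Rd)$.

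The only point requiring a little care is the verification that Theorem \ref{Main} is actually applicable to $V+t$ for every $t>0$, i.e.\ that $(V_- - t)_+\in L^d(\Rd)\cap L^{d/2}(\Rd)$. This is elementary: $(V_- - t)_+$ is supported on $\{V_->t\}$, and on this set one has $V_-^{d}\le t^{-k}V_-^{d+k}$ and $V_-^{d/2}\le t^{-k}V_-^{d/2+k}$, so both quantities are integrable. These same pointwise bounds also make clear that Tonelli's theorem applies to the double integral, since all integrands are non-negative. The substantive work is entirely contained in Theorem \ref{Main}; once that is in hand, the passage to the Lieb--Thirring form above is a classical one-page argument and the main obstacle is only notational bookkeeping of the two different exponents $d+k$ and $d/2+k$ inherited from the two terms in (\ref{main}).
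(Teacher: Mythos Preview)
Your Aizenman--Lieb argument is correct and complete for $d\ge 3$, and it is a genuinely different route from the paper's. The paper does \emph{not} apply Theorem \ref{Main} as a black box to the shifted potential; instead it goes back into the proof of Theorem \ref{Main}, uses the trace identity (\ref{assez}) for $F_\infty(K_\alpha)$ together with the pointwise estimate for $\Psi_\infty$, and only then performs the $\alpha$-integration $\int_0^\infty \alpha^{k-1}e^{-\alpha t}\,d\alpha\le C t^{-k}$. This produces integrals of the form $\int_1^\infty s^{-d-k-1}(s-1)\,ds$ and $\int_1^\infty s^{-d/2-k-1}(s-1)\,ds$, which converge for $d\ge 2$. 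Your approach is shorter and conceptually cleaner when it applies, since it uses only the final statement of Theorem \ref{Main} and the Beta-function identity; the paper's approach is heavier but buys one extra dimension.

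That extra dimension is precisely where your argument has a gap. The Corollary is stated for $d\ge 2$, while Theorem \ref{Main} is only available for $d\ge 3$ (the constant $\overline C_d=C\int_1^\infty s^{-d/2}g_\infty(s)\,ds$ in Lemma \ref{if} diverges when $d=2$). Your proof therefore does not cover the case $d=2$. To recover it you would have to imitate the paper: bypass the CLR bound (\ref{main}) and estimate $\int_0^\infty \alpha^{k-1}N_{-\alpha}\,d\alpha$ directly via (\ref{shp}) and (\ref{assez}), so that the extra factor $t^{-k}$ coming from the $\alpha$-integral restores convergence of $\int_1^\infty s^{-d/2-k-1}(s-1)\,ds$ when $d=2$ and $k>0$.
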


Sections 2,3,4 will contain essentially known facts (usually presented without proofs), needed for checking
Theorem \ref{Main}. So, in Section $2$ we introduce the Feller semigroup (\cite{IT2}, \cite{Ic2}, \cite{J})
associated to the operator $H_0:=<D>-1$. In the third section we define properly the operator $H(A,V)$
and study its basic properties. In Section 4 we recall
some probabilistic results, as the Markov process associated to the semigroup defined by $H_0$ (\cite{IW}, \cite{DvC}, \cite{J}) and the
Feynman-Kac-It\^o formula adapted to a L\'evy process (\cite{IT2}).

In Section $5$ we prove Theorem \ref{Main}  for $B=0$, using
some of Lieb's ideas for the non-relativistic case (see \cite{S1}) in the setting proposed in \cite{D}. The last section contains the proof of Theorem \ref{Main}
with magnetic field as well as Corollary \ref{LT}. The main ingredient is the Feynman-Kac-It\^o formula.

\section{The Feller semigroup.}
\label{sec:Feller-sg}

We consider the following symbol (interpreted as a classical relativistic Hamiltonian for $m=1, c=1$)
$h:\mathbb{R}^d\rightarrow\mathbb{R}_+$ defined by $h(\xi):=<\xi>-1\equiv\sqrt{1+|\xi|^2}-1$.
Ley us observe (as in \cite{Ic2}) that it defines a {\it conditional negative definite function} (see \cite{RS}) and thus has a L\'{e}vy-Khincin decomposition (see Appendix 2 to Section XIII of \cite{RS}). Computing $(\nabla h)(\xi)$ and $(\Delta h)(\xi)$ and using the general L\'{e}vy-Khincin decomposition (see for example \cite{RS}), one obtains that there exists a L\'{e}vy
measure $\mathsf{n}(dy)$, i.e. a non-negative, $\sigma$-finite measure on
$\mathbb{R}^d$, for which $\min\{1,|y|^2\}$ is
integrable on $\mathbb{R}^d$, such that
\begin{equation}\label{Levy-mes}
 h(\xi)\ =\ -\int_{\mathbb{R}^d} \mathsf{n}(dy)\left\{e^{iy\cdot\xi}-1-i\,(y\cdot\xi) \,I_{\{|x|<1\}}(y)\right\},
\end{equation}
where $I_{\{|x|<1\}}$ is the characteristic function of the open unit
ball in $\mathbb{R}^d$. One has the following explicit
formula (see \cite{Ic2}):
\begin{equation}
 \mathsf{n}(dy)\ =\ 2(2\pi)^{-(d+1)/2}|y|^{-(d+1)/2}K_{(d+1)/2}(|y|)\,dy,
\end{equation}
with $K_{\nu}$ the modified Bessel function of third type and
order $\nu$. We recall the following asymtotic behaviour of these functions:
\begin{equation}\label{iata}
 0\,<\,K_\nu(r)\,\leq\,C\max(r^{-\nu}, r^{-1/2})e^{-r},\quad\forall r>0,\quad\forall\nu>0.
\end{equation}

We shall denote by $\mathcal{H}^s(\mathbb{R}^d)$ the usual Sobolev spaces of order
$s\in\mathbb{R}$ on $\mathbb{R}^d$ and by $H_0$ the pseudodifferential operator $h(D)\equiv\mathfrak{Op}(h)$
considered either as a continuous operator on $\mathcal{S}(\mathbb{R}^d)$ and on
$\mathcal{S}^*(\mathbb{R}^d)$ or as a self-adjoint operator in $L^2(\mathbb{R}^d)$ with domain
$\mathcal{H}^1(\mathbb{R}^d)$. The semigroup generated by $H_0$ is explicitly given by the convolution with the
following function (for $t>0$ and $x\in\mathbb{R}^d$):
$$
 \overset{\circ}{\wp}_t(x)\ :=\ (2\pi)^{-d}\frac{t}{\sqrt{|x|^2+t^2}}\int_{\mathbb{R}^d}d\xi\,
 e^{\l(t-\sqrt{(|x|^2+t^2)(|\xi|^2+1)}\r)}\ =
$$
\begin{equation}\label{iato}
=\ 2^{-(d-1)/2}\,\pi^{-(d+1)/2}\,te^t(|x|^2+t^2)^{-(d+1)/4}K_{(d+1)/2}(\sqrt{|x|^2+t^2})
\end{equation}
(see \cite{IT2}, \cite{CMS}). We have
\begin{equation}
 \overset{\circ}{\wp}_t(x)\ >\ 0\quad\text{and}\quad
 \int_{\Rd}dx\,\overset{\circ}{\wp}_t(x)\ =\ 1.
\end{equation}
From (\ref{iata}) one easily can deduce the following estimation
\begin{equation}\label{vine}
 \exists C>0\quad\text{such that}\quad\overset{\circ}{\wp}_t(0)\ \leq Ct^{-d}(1+t^{d/2}),\quad\forall t>0.
\end{equation}

Let us set
\begin{equation}
 C_\infty(\Rd)\ :=\ \left\{\,f\in C(\Rd)\,\mid\,\lim_{|x|\rightarrow\infty}f(x)=0\,\right\}
\end{equation}
and endow it with the Banach norm $\|f\|_\infty:=\sup_{x\in\Rd}|f(x)|$.
Using the above properties of the function $\overset{\circ}{\wp}_t$ we can extend $e^{-tH_0}$
to a well-defined bounded operator $P(t)$ acting in $C_\infty(\Rd)$.
\begin{remark}\label{Feller}
  One can easily verify that $\{P(t)\}_{t\geq0}$ is a Feller semigroup, i.e.:
\begin{enumerate}
 \item $P(t)$ is a contraction: $\|P(t)f\|_\infty\leq\|f\|_\infty$, $\forall f\in C_\infty(\Rd)$;
 \item $\{P(t)\}_{t\geq0}$ is a semigroup: $P(t+s)=P(t)P(s)$;
\item $P(t)$ preserves positivity: $P(t)f\geq0$ for any $f\geq0$ in $C_\infty(\Rd)$;
\item We have $\lim_{t\searrow0}\|P(t)f-f\|_\infty=0,\ \forall f\in C_\infty(\Rd)$.
\end{enumerate}
\end{remark}

\section{The perturbed Hamiltonian.}
\label{sec:pertHam}

Suppose given a magnetic field of class
$\mathcal{C}^\infty_\mathsf{pol}(\Rd)$ and let us choose a
potential vector $A$, such that $B=dA$,  with components also of
class $\mathcal{C}^\infty_\mathsf{pol}(\Rd)$ (this is always possible, as said before). We shall denote by $H_A$ the operator
$\mathfrak{Op}^A(h)$, considered either as a continuous operator on
$\mathcal{S}(\Rd)$ and on $\mathcal{S}^*(\Rd)$ (by duality)
or as an unbounded operator on $L^2(\Rd)$ with domain
$\mathcal{C}^\infty_0(\Rd)$.

Using the Fourier transform one easily proves that for $u\in\mathcal{C}^\infty_0(\Rd)$:
\begin{equation}
 [H_0u](x)\ =\ -\int_{\Rd}n(dy)\,\left[u(x+y)-u(x)-I_{\{|z|<1\}}(y)\left(y\cdot\partial_xu\right)(x)\right].
\end{equation}
Recalling the definition of $\mathfrak{Op}^A(h)$, we remark that
\begin{equation}
 [H_Au](x)\ =\ \left[\mathfrak{Op}^A(h)u\right](x)\ =\ \left[\mathfrak{Op}(h)
 \left(e^{i(x-.)\cdot\Gamma^A(x,.)}u\right)\right](x)\ =
\end{equation}
$$
=\ \left[H_0\left(e^{i(x-.)\cdot\Gamma^A(x,.)}u\right)\right](x).
$$
Combining the above two equations one gets easily

\begin{equation}
 [H_Au](x)\ =\ -\int_{\Rd}n(dy)\,\left[e^{-iy\cdot\Gamma^A(x,x+y)}u(x+y)-u(x)-\right.
\end{equation}
$$
\left.-I_{\{|z|<1\}}(y)\left(y\cdot(\partial_x-iA(x))u\right)(x)\right].
$$
Repeating the arguments in \cite{Ic2} with $\Gamma^A(x,x+y)$ replacing $A((x+y)/2)$ one proves the following
results similar to those in \cite{Ic2}.

\begin{proposition}
 Considered as unbounded operator in $L^2(\Rd)$, $H_A$ is essential self-adjoint on $\mathcal{C}^\infty_0(\Rd)$.
 Its closure, also denoted by $H_A$, is a positive operator.
\end{proposition}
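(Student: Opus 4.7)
The plan is to mirror the proof in \cite{Ic2} with the substitution $A((x+y)/2)\rightsquigarrow\Gamma^A(x,x+y)$ indicated in the text. Three properties have to be established: symmetry of $H_A$ on $\mathcal{C}^\infty_0(\Rd)$, non-negativity of $\langle H_A u,u\rangle$, and essential self-adjointness.

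Symmetry and non-negativity both rest on a single symmetrisation. Starting from the integral representation of $[H_A u](x)$ given just above the statement, I would write $\langle H_A u,v\rangle$ as a double integral in $dx\,\mathsf{n}(dy)$ and apply the change of variables $(x,y)\mapsto(x+y,-y)$. This transformation is $\mathsf{n}$-measure preserving because $\mathsf{n}(-dy)=\mathsf{n}(dy)$, which in turn reflects the parity $h(\xi)=h(-\xi)$, and it leaves $\Gamma^A(x,x+y)$ invariant since reversing the affine parameter $s\mapsto 1-s$ in the definition yields the identity $\Gamma^A(x,x+y)=\Gamma^A(x+y,x)$. Adding the original and transformed expressions and handling the $I_{\{|z|<1\}}(y)\,y\cdot(\partial_x-iA(x))u$ regulariser by the usual addition and subtraction of quantities odd in $y$ produces the Dirichlet-form representation
\begin{equation*}
\langle H_A u,u\rangle\ =\ \tfrac12\int_{\Rd}dx\int_{\Rd}\mathsf{n}(dy)\,\bigl|e^{-iy\cdot\Gamma^A(x,x+y)}u(x+y)-u(x)\bigr|^2\ \ge\ 0,
\end{equation*}
whose polarisation in $(u,v)$ is manifestly Hermitian. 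This is the magnetic L\'evy analogue of the classical quadratic form for $H_0$, and along the way it already suggests the diamagnetic inequality $\langle H_A u,u\rangle\ge\langle H_0|u|,|u|\rangle$ that one expects to invoke later in the paper.

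The main obstacle is essential self-adjointness. Following \cite{Ic2}, I would apply Nelson's commutator theorem with comparison operator $N:=H_0+1$, which is itself essentially self-adjoint on $\mathcal{C}^\infty_0(\Rd)$ by Fourier analysis. The two Nelson estimates to verify on $\mathcal{C}^\infty_0(\Rd)$ are $\|H_A u\|\le c_1\|Nu\|$ and $|\langle H_A u,Nu\rangle-\langle Nu,H_A u\rangle|\le c_2\langle u,Nu\rangle$. This is where the polynomial growth of $A$ creates genuine difficulty, since a naive Sobolev bound is impossible; the observation that unlocks the argument is that $\Gamma^A(x,x+y)$ and the midpoint gauge $A((x+y)/2)$ are both first-order averages of $A$ along the segment $[x,x+y]$ and therefore share the same symbolic behaviour with respect to $y$, so the estimates of \cite{Ic2} for $H'_A$ transfer line-by-line, refined if needed through the magnetic pseudodifferential calculus of \cite{IMP,MP2}. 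The resulting closure inherits non-negativity from the Dirichlet-form representation above by continuity.
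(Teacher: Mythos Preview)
Your proposal is correct and follows exactly the approach the paper indicates: the paper gives no explicit proof of this proposition, stating only that one repeats the arguments of \cite{Ic2} with $\Gamma^A(x,x+y)$ replacing $A((x+y)/2)$, and your outline (Dirichlet-form symmetrisation for positivity and symmetry, Nelson's commutator theorem with comparison operator $N=H_0+1$ for essential self-adjointness) is precisely that argument.
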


\begin{proposition}
For any $u\in L^2(\Rd)$ such that $H_Au\in L^1_{\mathsf{loc}}(\Rd)$
$$
 \Re\left[(\mathrm{sign}u)(H_Au)\right]\ \geq\ H_0|u|.
$$
\end{proposition}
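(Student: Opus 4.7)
The strategy is a pointwise Kato-type inequality applied to the L\'evy--Khincin representation of $H_A$. For $u\in\mathcal C_0^\infty(\Rd)$ (the dense case we handle first), both $H_A u$ and $H_0|u|$ are given by the integral formulas displayed just before the two propositions, namely
\[
[H_A u](x)=-\int_{\Rd}\mathsf{n}(dy)\bigl\{e^{-iy\cdot\Gamma^A(x,x+y)}u(x+y)-u(x)-I_{\{|y|<1\}}(y)\,y\cdot(\partial_x-iA(x))u(x)\bigr\}
\]
and the analogous formula for $H_0|u|$ with $e^{-iy\cdot\Gamma^A}u(x+y)$ replaced by $|u|(x+y)$ and the drift term by $y\cdot\partial_x|u|(x)$. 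I set $\mathrm{sign}\,u:=\bar u/|u|$ where $u\neq 0$ and $0$ otherwise, so that $(\mathrm{sign}\,u)\,u=|u|$ and $|\mathrm{sign}\,u|\le 1$.

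The first key step is the elementary pointwise bound, valid for every $y$,
\[
\Re\bigl[(\mathrm{sign}\,u(x))\,e^{-iy\cdot\Gamma^A(x,x+y)}\,u(x+y)\bigr]\ \le\ |u(x+y)|,
\]
since $|(\mathrm{sign}\,u(x))\,e^{-iy\cdot\Gamma^A(x,x+y)}|\le 1$. The second key step is the classical diamagnetic/Kato identity
\[
\Re\bigl[(\mathrm{sign}\,u)(y\cdot(\partial_x-iA)u)\bigr]\ =\ y\cdot\partial_x|u|,
\]
which follows from $|u|\,\partial_j|u|=\Re(\bar u\,\partial_j u)$ and $\Re[(\mathrm{sign}\,u)(-iA_j u)]=-A_j\,\Im|u|=0$. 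Combined with $\Re[(\mathrm{sign}\,u(x))u(x)]=|u(x)|$, subtracting the integrand of $H_0|u|$ from the real part of $(\mathrm{sign}\,u)H_A u$ (note the common minus sign in front of the L\'evy integral, which reverses the inequality) gives
\[
\Re[(\mathrm{sign}\,u)(H_A u)](x)-(H_0|u|)(x)=\int_{\Rd}\mathsf{n}(dy)\bigl\{|u(x+y)|-\Re[(\mathrm{sign}\,u(x))e^{-iy\cdot\Gamma^A(x,x+y)}u(x+y)]\bigr\}\ \ge\ 0,
\]
the drift and $u(x)$ terms cancelling exactly. This gives the inequality pointwise for smooth $u$.

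The third step is the extension to arbitrary $u\in L^2(\Rd)$ with $H_A u\in L^1_{\mathrm{loc}}(\Rd)$, and this is the main technical obstacle. One regularises by $u_\varepsilon:=\rho_\varepsilon\ast u$ with a standard mollifier $\rho_\varepsilon$; then $u_\varepsilon\in\mathcal C^\infty$ and $u_\varepsilon\to u$ in $L^2$. The essential point is to relate $H_A u_\varepsilon$ to $(H_A u)_\varepsilon$: the magnetic commutator $[H_A,\rho_\varepsilon\ast]$ is not zero because $A$ is variable, but using $A_j\in\mathcal C^\infty_{\mathrm{pol}}$ and the explicit integral representation one obtains, in $L^1_{\mathrm{loc}}$, a remainder that vanishes as $\varepsilon\to 0$ (this is the magnetic analogue of the commutator estimate in the Kato--Ichinose argument in \cite{Ic2}, with $\Gamma^A(x,x+y)$ playing the role of $A((x+y)/2)$). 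A second, simultaneous regularisation is needed to handle the zero set of $u$: one replaces $|u|$ by $|u|_\delta:=\sqrt{|u|^2+\delta^2}-\delta$ and observes that $\mathrm{sign}_\delta u:=\bar u/\sqrt{|u|^2+\delta^2}$ is smooth, converges to $\mathrm{sign}\,u$ almost everywhere and is bounded by $1$. Applying the pointwise inequality to $u_\varepsilon$ with this regularised sign, integrating against a non-negative test function $\varphi\in\mathcal C_0^\infty(\Rd)$, and then passing to the limit $\varepsilon\to 0$ (using the commutator estimate and dominated convergence on the left, and the fact that $H_0$ defines a Feller semigroup so that $H_0|u|_\delta\to H_0|u|$ in $\mathcal S^*$ on the right) and finally $\delta\to 0$, yields the inequality in the sense of distributions, hence a.e.\ since both sides are locally integrable.
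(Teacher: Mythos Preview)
Your proposal is correct and follows exactly the route the paper indicates: the paper does not give its own proof but simply says ``Repeating the arguments in \cite{Ic2} with $\Gamma^A(x,x+y)$ replacing $A((x+y)/2)$ one proves the following results,'' and what you have written is precisely that argument --- the pointwise Kato inequality via the L\'evy--Khincin integral representation, the cancellation of the drift term through $\Re[(\mathrm{sign}\,u)(\partial_x-iA)u]=\partial_x|u|$, and the extension to general $u$ by the mollifier/regularised-sign limiting procedure of Ichinose. There is nothing to add.
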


Using the method in \cite{S2} we can prove the following result.
\begin{proposition}\label{diamagnetic}
 For any $u\in L^2(\Rd)$ we have:
\begin{enumerate}
 \item for any $\lambda>0$ and for any $r>0$
\begin{equation}\label{unica}
\left|\left(H_A+\lambda\right)^{-r}u\right|\ \leq\ \left(H_0+\lambda\right)^{-r}|u|;
\end{equation}
\item for any $t\geq0$
\begin{equation}\label{doica}
\left|e^{-tH_A}u\right|\ \leq\ e^{-tH_0}|u|.
\end{equation}
\end{enumerate}
\end{proposition}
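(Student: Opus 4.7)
The plan is to follow the scheme of Simon \cite{S2}: use Kato's inequality (Proposition 3.2) to establish (1) for $r=1$, iterate to get integer powers, deduce the semigroup bound (2) from the Hille--Yosida exponential formula, and finally obtain (1) for arbitrary $r>0$ via subordination. The other ingredient, used at every step, is the positivity preservation of $(H_0+\lambda)^{-1}$: by Remark 2.1 the Feller semigroup $\{P(t)\}_{t\geq 0}$ preserves positivity, and the Laplace representation $(H_0+\lambda)^{-1}=\int_0^\infty e^{-\lambda t}e^{-tH_0}\,dt$ transfers this property to the resolvent.

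For the case $r=1$ of (1), I would set $v:=(H_A+\lambda)^{-1}u$, so that $v\in D(H_A)\subset L^2(\Rd)$ and $H_A v=u-\lambda v\in L^2(\Rd)\subset L^1_{\mathsf{loc}}(\Rd)$. Proposition 3.2 then gives $H_0|v|\leq \Re[(\mathrm{sign}\,v)(H_A v)]=\Re[(\mathrm{sign}\,v)\,u]-\lambda|v|\leq|u|-\lambda|v|$, i.e.\ $(H_0+\lambda)|v|\leq|u|$. Applying $(H_0+\lambda)^{-1}$ to both sides and using its positivity yields $|v|\leq(H_0+\lambda)^{-1}|u|$. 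Iterating this bound $n$ times gives $|(H_A+\lambda)^{-n}u|\leq(H_0+\lambda)^{-n}|u|$ for every integer $n\geq 1$.

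For (2) I would invoke the Hille--Yosida exponential formula $e^{-tH}=\mathrm{s\text{-}lim}_{n\to\infty}(n/t)^{n}(H+n/t)^{-n}$, valid for any non-negative self-adjoint operator $H$. Substituting $\lambda=n/t$ in the iterated resolvent bound, multiplying by $(n/t)^{n}$ and passing to the strong limit delivers $|e^{-tH_A}u|\leq e^{-tH_0}|u|$. Once (2) is in hand, the general $r>0$ case of (1) follows from the subordination identity $(H+\lambda)^{-r}=\Gamma(r)^{-1}\int_0^\infty s^{r-1}e^{-\lambda s}e^{-sH}\,ds$: comparing the integrands pointwise via (2), and using that each $e^{-sH_0}$ preserves positivity, gives the claim.

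The main obstacle is the rigor of the Kato step. The pointwise object $\mathrm{sign}\,v$ is ambiguous on the zero set of $v$, and the inequality $(H_0+\lambda)|v|\leq|u|$ has to be interpreted distributionally before one can apply the positivity-preserving $(H_0+\lambda)^{-1}$. I expect this to be resolved by the standard regularization $|v|_\varepsilon:=(|v|^2+\varepsilon)^{1/2}$, for which the non-local operator $H_0$ can be handled using the L\'evy--Khincin representation \eqref{Levy-mes}, followed by $\varepsilon\searrow 0$; this is precisely the technical point that \cite{S2} addresses in the non-relativistic case and that has to be transplanted to the integro-differential setting via the arguments of \cite{Ic2}. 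After this, the remaining steps are essentially formal manipulations with the functional calculus.
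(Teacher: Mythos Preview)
Your proposal is correct and follows precisely the route the paper indicates: the paper does not spell out a proof but simply states that Proposition~\ref{diamagnetic} follows ``using the method in \cite{S2}'', and your outline---Kato's inequality (Proposition 3.2) for $r=1$, iteration to integer $r$, the exponential/Hille--Yosida formula for the semigroup, and subordination for general $r>0$---is exactly Simon's scheme from \cite{S2}. Your identification of the distributional-inequality step as the only genuine technical point is also accurate.
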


We associate to $H_A$ its sesquilinear form

\begin{equation*}
\mathcal{D}(\mathfrak{h}_A)=\mathcal{D}(H_A^{1/2}),
\end{equation*}

\begin{equation}
\mathfrak{h}_A(u,v):=(H_A^{1/2}u,H_A^{1/2}v),\quad\forall(u,v)\in\mathcal{D}(\mathfrak{h}_A)^2.
\end{equation}

Consider now a function $V\in L^1_{\mathsf{loc}}(\Rd)$, $V\geq0$ and associate to it the sesquilinear form

\begin{equation*}
 \mathcal{D}(\mathfrak{q}_V):=\{u\in L^2(\Rd)\,\mid\,\sqrt{V}u\in L^2(\Rd)\},
\end{equation*}

\begin{equation}\label{qformV}
 \mathfrak{q}_V(u,v):=\int_{\Rd}dx\,V(x)u(x)\overline{v(x)},\ \
\forall(u,v)\in\mathcal{D}(\mathfrak{q}_V)^2.
\end{equation}

Both these sesquilinear forms are symmetric, closed and positive.
We shall abbreviate $\mathfrak{h}_A(u)\equiv\mathfrak{h}_A(u,u)$ and
$\mathfrak{q}_V(u)\equiv\mathfrak{q}_V(u,u)$.

\begin{proposition}\label{Ham-rel-pert}
 Let $V:\Rd\rightarrow\mathbb{R}$ be a measurable function that can be decomposed as $V=V_+-V_-$ with
 $V_\pm\geq0$ and $V_\pm\in L^1_{\mathsf{loc}}(\Rd)$. Moreover let us suppose that the sesquilinear form
 $\mathfrak{q}_{V_-}$ is small with respect to $\mathfrak{h}_0$ (i.e. it is $\mathfrak{h}_0$-relatively
 bounded with bound strictly less then 1). Then the sesquilinear form
 $\mathfrak{h}_A+\mathfrak{q}_{V_+}-\mathfrak{q}_{V_-}$, that is well defined on
 $\mathcal{D}(\mathfrak{h}_A)\bigcap\mathcal{D}(\mathfrak{q}_{V_+})$, is symmetric, closed and bounded from below, defining thus an inferior semibounded self-adjoint operator $H(A;V)\equiv H:=H_A\dotplus V$ (sum in sense
 of forms).
\end{proposition}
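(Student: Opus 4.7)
The plan is to verify the hypotheses of the KLMN theorem for the difference of forms $\mathfrak{h}_A + \mathfrak{q}_{V_+} - \mathfrak{q}_{V_-}$. The decisive ingredient will be passing the $\mathfrak{h}_0$-relative form bound on $\mathfrak{q}_{V_-}$ to an $\mathfrak{h}_A$-relative form bound by means of the diamagnetic inequality of Proposition \ref{diamagnetic}.

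First I would handle the positive part. The form $\mathfrak{h}_A$ is symmetric, closed and positive because it is the form of the non-negative self-adjoint operator $H_A$; the form $\mathfrak{q}_{V_+}$ is symmetric, closed and positive since $V_+ \geq 0$ and $V_+ \in L^1_{\mathsf{loc}}(\Rd)$ (the closedness of $\mathfrak{q}_{V_+}$ on $\mathcal{D}(\mathfrak{q}_{V_+})$ follows from Fatou's lemma applied to Cauchy sequences in the form norm). The sum $\mathfrak{h}_A + \mathfrak{q}_{V_+}$, defined on $\mathcal{D}(\mathfrak{h}_A) \cap \mathcal{D}(\mathfrak{q}_{V_+})$, is then again symmetric, positive, and closed, as the sum of two closed positive forms is closed on the intersection of the domains.

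The main step is to derive the form-level diamagnetic inequality
\begin{equation*}
u \in \mathcal{D}(\mathfrak{h}_A) \ \Longrightarrow\ |u| \in \mathcal{D}(\mathfrak{h}_0)\ \text{ and }\ \mathfrak{h}_0(|u|) \leq \mathfrak{h}_A(u)
\end{equation*}
from the semigroup bound (\ref{doica}). Using the spectral representation $\mathfrak{h}_A(u) = \lim_{t \searrow 0} t^{-1}\langle (I - e^{-tH_A})u,u\rangle$ (and similarly for $\mathfrak{h}_0$), together with the Cauchy-Schwarz-type estimate
$\operatorname{Re} \langle e^{-tH_A} u, u\rangle \leq \int |u|\, |e^{-tH_A}u|\, dx \leq \int |u|\,(e^{-tH_0}|u|)\, dx = \langle e^{-tH_0}|u|,|u|\rangle$,
one gets $\langle (I-e^{-tH_A})u,u\rangle \geq \langle (I-e^{-tH_0})|u|,|u|\rangle$; dividing by $t$ and letting $t \searrow 0$ yields the claim and, in particular, the membership $|u| \in \mathcal{D}(\mathfrak{h}_0)$.

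Next I would transfer the bound on $\mathfrak{q}_{V_-}$. By hypothesis there exist $a < 1$ and $C \geq 0$ such that $\mathfrak{q}_{V_-}(w) \leq a\,\mathfrak{h}_0(w) + C\|w\|^2$ for every $w \in \mathcal{D}(\mathfrak{h}_0)$. Given $u \in \mathcal{D}(\mathfrak{h}_A)$, pointwise one has $\mathfrak{q}_{V_-}(u) = \int V_-|u|^2 = \mathfrak{q}_{V_-}(|u|)$, so applying the bound to $|u|$ and using the form diamagnetic inequality above gives
\begin{equation*}
\mathfrak{q}_{V_-}(u) \leq a\,\mathfrak{h}_0(|u|) + C\|u\|^2 \leq a\,\mathfrak{h}_A(u) + C\|u\|^2 \leq a\,(\mathfrak{h}_A + \mathfrak{q}_{V_+})(u) + C\|u\|^2,
\end{equation*}
so $\mathfrak{q}_{V_-}$ is $(\mathfrak{h}_A + \mathfrak{q}_{V_+})$-form bounded with relative bound $\leq a < 1$. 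The KLMN theorem then guarantees that $\mathfrak{h}_A + \mathfrak{q}_{V_+} - \mathfrak{q}_{V_-}$ is symmetric, closed, and bounded from below on $\mathcal{D}(\mathfrak{h}_A) \cap \mathcal{D}(\mathfrak{q}_{V_+})$, and thus defines a unique lower-semibounded self-adjoint operator $H(A;V) = H_A \dotplus V$.

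The main obstacle is the second step: translating the semigroup diamagnetic inequality (\ref{doica}) into the form-domain statement $\mathfrak{h}_0(|u|) \leq \mathfrak{h}_A(u)$, since this requires justifying the limit argument and, in particular, the inclusion $|u| \in \mathcal{D}(\mathfrak{h}_0)$ whenever $u \in \mathcal{D}(\mathfrak{h}_A)$; once this is in hand, the remaining verifications are routine applications of KLMN.
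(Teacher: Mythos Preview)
Your proof is correct and reaches the same conclusion as the paper, but the route is genuinely different at the key step.

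The paper proves that $\mathfrak q_{V_-}$ is small with respect to $\mathfrak h_A+\mathfrak q_{V_+}$ by working at the \emph{resolvent} level for the operator $H_+:=H_A\dotplus V_+$: it invokes the Kato--Trotter product formula for $e^{-tH_+}$, combines it with the semigroup diamagnetic inequality (\ref{doica}), and integrates against $t^{r-1}e^{-\lambda t}$ to obtain the pointwise bound $|(H_++\lambda)^{-1/2}f|\le(H_0+\lambda)^{-1/2}|f|$. The relative form bound is then read off from this by setting $f=(H_++\lambda)^{1/2}v$.

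You instead extract a \emph{form-level} diamagnetic inequality $\mathfrak h_0(|u|)\le\mathfrak h_A(u)$ directly from (\ref{doica}) via the spectral identity $\mathfrak h_A(u)=\lim_{t\searrow 0}t^{-1}\langle(I-e^{-tH_A})u,u\rangle$ and the monotonicity of $t\mapsto t^{-1}(1-e^{-t\lambda})$; then you add $\mathfrak q_{V_+}\ge 0$ trivially. This bypasses Kato--Trotter entirely and is more elementary. The paper's detour, on the other hand, yields as a by-product the stronger pointwise resolvent inequality (\ref{diamagnetic-pert}) for $H_+$ (not just $H_A$), which is of independent interest even though it is not cited again in the paper. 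Your worry about the ``main obstacle'' is unfounded: the monotone-limit argument you sketch is standard and rigorous, since for a non-negative self-adjoint $H$ one has $u\in\mathcal D(\mathfrak h)$ if and only if $\sup_{t>0}t^{-1}\langle(I-e^{-tH})u,u\rangle<\infty$, with equality of the sup and $\mathfrak h(u)$.
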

\begin{proof}
 The sesquilinear form $\mathfrak{h}_A+\mathfrak{q}_{V_+}$ (defined on the intersection of the form domains) is
 clearly positive, symmetric and closed. We shall prove now that the sesquilinear form $\mathfrak{q}_{V_-}$ is
 $\mathfrak{h}_A+\mathfrak{q}_{V_+}$-bounded with bound strictly less then 1, so that the conclusion of the
 proposition follows by standard arguments.

 Let us denote by $H_+:=H_A\dotplus V_+$ the unique positive
 self-adjoint operator associated to the sesquilinear form $\mathfrak{h}_A+\mathfrak{q}_{V_+}$ by the
 representation theorem 2.6 in \S VI.2 of \cite{K}. As $V_+\in L^1_{\mathsf{loc}}(\Rd)$, we have
 $\mathcal{C}^\infty_0(\Rd)\subset\mathcal{D}(\mathfrak{h}_A)\bigcap\mathcal{D}(\mathfrak{q}_{V_+})$
 and thus we can use the form version of the Kato-Trotter formula from \cite{KM}:

\begin{equation}\label{Kato-Trotter}
 e^{-tH_+}\ =\ \underset{n\rightarrow\infty}{s-\lim}\left(e^{-(t/n)H_A}\,e^{-(t/n)V_+}\right)^n,\qquad
 \forall t\geq0.
\end{equation}

Let us recall the formula ($r>0$ and $\lambda>0$)

\begin{equation}\label{rez-sgroup}
 (H_++\lambda)^{-r}\ =\ \Gamma(r)^{-1}\int_0^\infty dt\ t^{r-1}\,e^{-t\lambda}\,e^{-tH_+}.
\end{equation}

Combining the above two equalities we obtain

\begin{equation}\label{diamagnetic-pert}
 \left|(H_++\lambda)^{-r}f\right|\ \leq\ \Gamma(r)^{-1}\int_0^\infty dt\ t^{r-1}\,e^{-t\lambda}\,
 \left|e^{-tH_+}f\right|\ =
\end{equation}

$$
=\ \Gamma(r)^{-1}\int_0^\infty dt\ t^{r-1}\,\left|\underset{n\rightarrow\infty}{s-\lim}\left(e^{-(t/n)H_A}
\,e^{-(t/n)V_+}\right)^nf\right|\ \leq
$$
$$
\leq\ (H_0+\lambda)^{-r}|f|,
$$
by using the second point of Proposition \ref{diamagnetic}.

Taking $u=(H_0+\lambda)^{-1/2}g$ with $g\in L^2(\Rd)$ arbitrary and $\lambda>0$ large enough and using the
hypothesis on $V_-$ we deduce that there exists $a\in[0,1)$, $b\geq0$ and $a^\prime\in[0,1)$ such that
$$
\mathfrak{q}_{V_-}(u)\leq a\|H_0^{1/2}u\|^2+b\|u\|^2=a\|H_0^{1/2}(H_0+\lambda)^{-1/2}g\|^2+b\|(
H_0+\lambda)^{-1/2}g\|^2\leq
$$
\begin{equation}\label{q-relmarg}
 \leq (a+b/\lambda)\|g\|^2\leq a^\prime\|g\|^2.
\end{equation}
For any $v\in\mathcal{D}(\mathfrak{h}_A)\bigcap\mathcal{D}(\mathfrak{q}_{V_+})$ let $f:=(H_++\lambda)^{1/2}v$
and $g:=|f|$. Using now (\ref{diamagnetic-pert}) with $r=1/2$, (\ref{q-relmarg}) and the explicit form of
$\mathfrak{q}_{V_-}$ we conclude that
\begin{equation}
 \mathfrak{q}_{V_-}(v)= \mathfrak{q}_{V_-}\left((H_++\lambda)^{-1/2}f\right)\leq\mathfrak{q}_{V_-}
 \left((H_0+\lambda)^{-1/2}g\right)\leq
\end{equation}
$$
\leq a^\prime\|g\|^2\ =\ a^\prime\left\|(H_++\lambda)^{1/2}v\right\|^2\ =\ a^\prime\left[\mathfrak{h}_A(v)+
\mathfrak{q}_+(v)+\lambda\|v\|^2\right].
$$
\end{proof}

\begin{definition}\label{def-Ham-rel-pert}
 For a potential function $V$ satisfying the hypothesis of Proposition \ref{Ham-rel-pert}, we call the
 operator $H=H(A;V)$ introduced in the same proposition the relativistic Hamiltonian with potential $V$
 and magnetic vector potential $A$.
\end{definition}

The spectral properties of $H$ only depend on the magnetic field $B$,
different choices of a gauge giving unitarly equivalent
Hamiltonians, due to the gauge covariance of our quantization
procedure.

\begin{proposition}\label{HMagnSEss}
 Let $B$ be a magnetic field with $\mathcal{C}^\infty_{\mathsf{pol}}(\Rd)$ components and $A$ a vector
 potential for $B$ also having $\mathcal{C}^\infty_{\mathsf{pol}}(\Rd)$ components.
 Assume that $V:\Rd\rightarrow\mathbb{R}$ is a measurable function that can be decomposed as $V=V_+-V_-$ with
 $V_\pm\geq0$, $V_+\in L^1_{\mathsf{loc}}(\Rd)$ and $V_-\in L^p(\Rd)$ with $p\geq d$. Then
\begin{enumerate}
 \item $\mathfrak{q}_{V_-}$ is a $\mathfrak{h}_0$-bounded sesquilinear form with relative bound 0;
 \item the Hamiltonian $H$ defined in Definition \ref{def-Ham-rel-pert} is bounded from below and we have
 $\sigma_{\mathsf{ess}}(H)=\sigma_{\mathsf{ess}}(H_A\dotplus V_+)\subset[0,\infty)$.
\end{enumerate}
\end{proposition}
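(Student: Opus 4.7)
For part (1), I would combine the Sobolev embedding with a simple truncation of $V_-$. The form domain $\mathcal{D}(\mathfrak{h}_0)$ coincides with $\mathcal{H}^{1/2}(\Rd)$ via the identity $\mathfrak{h}_0(u)+\|u\|_{L^2}^2=\|u\|_{\mathcal{H}^{1/2}}^2$, and the Sobolev embedding $\mathcal{H}^{1/2}(\Rd)\hookrightarrow L^{2d/(d-1)}(\Rd)$ (valid for $d\geq 2$) injects the form domain continuously into $L^{2d/(d-1)}$. To extract relative bound zero from $V_-\in L^p(\Rd)$, $p\geq d$, I split $V_-=V_-^{\leq N}+V_-^{>N}$ according to the height of $V_-$. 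Then $V_-^{\leq N}$ is bounded by $N$, while $\|V_-^{>N}\|_{L^d(\Rd)}\to 0$ as $N\to\infty$ (by dominated convergence if $p=d$, and by the elementary estimate $\|V_-^{>N}\|_{L^d}^d\leq N^{d-p}\|V_-\|_{L^p}^p$ if $p>d$). Hölder combined with the Sobolev inequality then gives
\begin{equation*}
\int_{\Rd}V_-\,|u|^2\,dx\ \leq\ N\|u\|_{L^2}^2+C_d\,\|V_-^{>N}\|_{L^d}\bigl(\mathfrak{h}_0(u)+\|u\|_{L^2}^2\bigr),
\end{equation*}
and choosing $N$ large makes the coefficient of $\mathfrak{h}_0(u)$ arbitrarily small.

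For part (2), the lower-boundedness of $H$ is an immediate consequence of part (1) plugged into Proposition \ref{Ham-rel-pert}, and the inclusion $\sigma_{\mathsf{ess}}(H_+)\subset[0,\infty)$, with $H_+:=H_A\dotplus V_+$, is obvious since $H_+\ge 0$ by construction. The nontrivial point is the equality $\sigma_{\mathsf{ess}}(H)=\sigma_{\mathsf{ess}}(H_+)$, which I would reduce to the compactness of $B:=(H_++\lambda)^{-1/2}V_-(H_++\lambda)^{-1/2}$ as a bounded operator on $L^2(\Rd)$ for $\lambda$ large enough. Once this is known, the algebraic identity
\begin{equation*}
(H+\lambda)^{-1}-(H_++\lambda)^{-1}\ =\ (H_++\lambda)^{-1/2}(I-B)^{-1}B\,(H_++\lambda)^{-1/2}
\end{equation*}
exhibits the resolvent difference as compact, and Weyl's theorem on invariance of the essential spectrum under relatively compact perturbations of the resolvent closes the argument.

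Compactness of $B$ is established by approximation. Let $V_-^{(k)}$ be a sequence of bounded, compactly supported functions converging to $V_-$ in $L^p(\Rd)$; the argument of part (1) applied to $V_--V_-^{(k)}$ shows that $B$ is the operator-norm limit of $B_k:=(H_++\lambda)^{-1/2}V_-^{(k)}(H_++\lambda)^{-1/2}$, so it suffices to show each $B_k$ compact. Writing $B_k=T_k^*T_k$ with $T_k:=(V_-^{(k)})^{1/2}(H_++\lambda)^{-1/2}$ and applying the diamagnetic estimate (\ref{unica}) extended to $H_+$ via the Kato-Trotter formula (\ref{Kato-Trotter}) exactly as in (\ref{diamagnetic-pert}), I get the pointwise domination $|T_k u|(x)\leq (V_-^{(k)})^{1/2}(x)\,\bigl[(H_0+\lambda)^{-1/2}|u|\bigr](x)$. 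The dominating operator sends $L^2(\Rd)$ continuously into $\mathcal{H}^{1/2}(\Rd)$ and is then composed with multiplication by a bounded, compactly supported weight, so it is compact by Rellich-Kondrachov. The main obstacle is the final step: upgrading the pointwise inequality to operator compactness of the magnetic operator $T_k$ itself, which I would handle either via a Dodds-Fremlin type lattice-domination principle for compact operators on $L^2$, or, more concretely, by promoting the pointwise diamagnetic bound to a Hilbert-Schmidt estimate on $T_k$ using the heat-kernel inequality $|e^{-tH_+}(x,y)|\leq \overset{\circ}{\wp}_t(x-y)$ furnished by the Feynman-Kac-Itô formula of Section 4.
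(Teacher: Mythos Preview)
Your argument for part (1) is correct and in fact more direct than the paper's: you split $V_-$ by height and use Sobolev--H\"older to get relative bound $0$ in one stroke, whereas the paper first proves that $V_-^{1/2}:\mathcal H^{1/2}(\Rd)\to L^2(\Rd)$ is \emph{compact} (approximating $V_-^{1/2}$ in $L^{2p}$ by $C^\infty_0$ functions and using Rellich) and then invokes a standard abstract fact (exercise 39, Ch.~XIII of \cite{RS}) that form-compactness forces relative bound $0$. The trade-off is that the paper's compactness statement is immediately recyclable in part (2), while your route postpones that work.

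For part (2) the two arguments converge on the same mechanism---pointwise diamagnetic domination plus a lattice-theoretic principle that transfers compactness from the dominating operator to the dominated one---but the paper packages it more efficiently. It simply observes that $V_-^{1/2}(H_0+1)^{-1/2}$ is compact (from part (1)), applies the diamagnetic bound (\ref{diamagnetic-pert}), and then quotes Pitt's theorem \cite{P} to conclude that $V_-^{1/2}(H_++1)^{-1/2}$ is compact; the essential-spectrum invariance then follows from the same Reed--Simon exercise. Your resolvent identity and approximation by bounded compactly supported $V_-^{(k)}$ are fine, and your ``Dodds--Fremlin type'' step is exactly Pitt's theorem in disguise, so that branch works. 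However, your alternative Hilbert--Schmidt route does not: the kernel of $(H_0+\lambda)^{-1/2}$ behaves like $|x-y|^{-(d-1/2)}$ near the diagonal, which is not locally square-integrable for $d\ge 2$, so $T_k=(V_-^{(k)})^{1/2}(H_++\lambda)^{-1/2}$ is compact but not Hilbert--Schmidt in general. Drop that option and cite Pitt (or Dodds--Fremlin) directly.
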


\begin{proof}
\noindent{1.} Using Observation 3 in
\S 2.8.1 from \cite{T}, we conclude that for $d>1$, the Sobolev
space $\mathcal{H}^{1/2}(\Rd)$ (that is the domain of the
sesquilinear form $\mathfrak{h}_0$) is continuously embedded in
$L^r(\Rd)$ for $2\leq r\leq 2d/(d-1)<\infty$. Also using
H\"{o}lder inequality, we deduce that for
$r=2p/(p-1)\in[2,2d/(d-1)]$, for $p\geq d$

\begin{equation}\label{3.14}
 \|V_-^{1/2}u\|_2^2\,\leq\,\|V_-\|_p\|u\|_r^2\,\leq\,c\|V_-\|_p\|u\|_{\mathcal{H}^{1/2}(\Rd)}^2,
\end{equation}
$\forall u\in\mathcal{H}^{1/2}(\Rd)=\mathcal{D}(\mathfrak{h}_0)$.
Thus $V_-^{1/2}\in\mathbb{B} (\mathcal{H}^{1/2}(\Rd);L^2(\Rd))$; now let us prove that it is even compact.
Let us observe that for $d\leq p<\infty$,
$\mathcal{C}^\infty_0(\Rd)$ is dense in $L^p(\Rd)$. Thus, for
$d\leq p<\infty$ let
$\{W_\epsilon\}_{\epsilon>0}\subset\mathcal{C}^\infty_0(\Rd)$ be
an approximating family for $V_-^{1/2}$ in $L^{2p}(\Rd)$, i.e.
$\|V_-^{1/2}-W_\epsilon\|_{2p}\leq\epsilon$. Moreover, for any sequence $\{u_j\}\subset\mathcal{H}^{1/2}(\mathbb{R}^d)$ contained in the unit ball (i.e. $\|u_j\|_{\mathcal{H}^{1/2}}\leq1$) we may suppose that it converges to $u\in\mathcal{H}^{1/2}(\mathbb{R}^d)$ for the weak topology on $\mathcal{H}^{1/2}(\mathbb{R}^d)$ and thus $\|u\|_{\mathcal{H}^{1/2}}\leq1$. It follows that $W_\epsilon u_j$ converges to $W_\epsilon u$ in $L^2(\mathbb{R}^d)$ and due to (\ref{3.14}) we have:
$$
\|(V_-^{1/2}-W_\epsilon)(u-u_j)\|\leq C^{1/2}\|V_-^{1/2}-W_\epsilon\|_{L^{2p}}\|u-u_j\|_{\mathcal{H}^{1/2}}\leq2c^{1/2}\epsilon,\quad\forall j\geq1.
$$
We conclude that $V_-^{1/2}u_j$ converges in $L^2(\mathbb{R}^d)$ to $V_-^{1/2}u$ and using the duality we also get that $V_-$ is a compact operator from $\mathcal{H}^{1/2}(\mathbb{R}^d)$ to $\mathcal{H}^{-1/2}(\mathbb{R}^d)$. Using exercise
39 in ch. XIII of \cite{RS} we deduce that $\mathfrak{q}_-$ has zero relative bound with respect to $\mathfrak{h}_0$.

\noindent{2.} The conclusion of point 1 implies that the
operator $V_-^{1/2}(H_0+1)^{-1/2}\in\mathbb{B} [L^2(\Rd)]$ is
compact. Using the first  point of Proposition \ref{diamagnetic}
with $\lambda=-1$ and $r=1/2$, and Pitt Theorem in \cite{P}, we
conclude that the operator $V_-^{1/2}(H_A\dotplus
V_++1)^{-1/2}\in\mathbb{B}[L^2(\Rd)]$ is also compact. Thus
$V_-:\mathcal{D}(\mathfrak{h}_A+\mathfrak{q}_{V_+})\rightarrow\mathcal{D}(\mathfrak{h}_A+\mathfrak{q}_{V_+})$
is compact and the conclusion (2) follows from exercise
39 in ch. XIII of \cite{RS}.
\end{proof}

\section{The Feynman-Kac-It\^{o} formula.}
\label{sec:FKI}

In this section we gather some probabilistic notions and results needed in the proof of Theorem \ref{Main}. The main idea is that we obtain a Feynman-Kac-It\^{o} formula (following \cite{IT2}) for the semigroup defined by $H(A,V)$ and this allows us to reduce the problem to the case $B=0$.
For this last one we repeat then the proof in \cite{D} giving all the necessary details for the case of singular potentials $V$; here an essential point is an explicit formula for the integral kernel of the operator $e^{-tH(0,V)}$ in terms of a L\'{e}vy process.

Let $(\Omega,\mathfrak{F},\mathsf{P})$ be a probability space, i.e. $\mathfrak{F}$ is a $\sigma$-algebra of
subsets of $\Omega$ and $\mathsf{P}$ is a non-negative $\sigma$-aditive function on $\mathfrak{F}$ with
$\mathsf{P}(\Omega)=1$. For any integrable random variable $X:\Omega\rightarrow\mathbb{R}$ we denote its
expectation value by

\begin{equation}
 \mathsf{E}(X)\,:=\,\int_\Omega\ X(\omega)\mathsf{P}(d\omega).
\end{equation}

For any sub-$\sigma$-algebra $\mathfrak{G}\subset\mathfrak{F}$ we denote its associated conditional expectation
by $\mathsf{E}(X\mid\mathfrak{G})$; this is the unique $\mathfrak{G}$-measurable random variable
$Y:\Omega\rightarrow\mathbb{R}$ satisfying

\begin{equation}
\int_B Y(\omega)\mathsf{P}(d\omega)\,=\,\int_B X(\omega)\mathsf{P}(d\omega),\qquad\forall B\in\mathfrak{G}.
\end{equation}
Let us recall the following properties of the conditional
expectation (see for example \cite{J}):

\begin{equation}\label{expect-cond-expect}
 \mathsf{E}\left( \mathsf{E}(X\mid\mathfrak{G})\right)\ =\ \mathsf{E}(X),
\end{equation}

\begin{equation}\label{cond-expect-prod}
 \mathsf{E}(XZ\mid\mathfrak{G})\ =\ Z\mathsf{E}(X\mid\mathfrak{G}),
\end{equation}
for any $\mathfrak{G}$-measurable random variable $Z:\Omega\rightarrow\mathbb{R}$, such that $ZX$ is integrable.

We also recall the Jensen inequality (\cite{S1}, \cite{J}): for
any convex function $\varphi:\mathbb{R}\rightarrow\mathbb{R}$, and for any lower bounded random variable
$X:\Omega\rightarrow\mathbb{R}$ the following inequality is valid
\begin{equation}
 \varphi(\mathsf{E}(X))\ \leq\ \mathsf{E}(\varphi(X)).
\end{equation}

Following \cite{DvC}, we can associate to our Feller semigroup
$\{P(t)\}_{t\geq0}$, defined in Section 2, a Markov process
$\left\{(\Omega,\mathfrak{F},\mathsf{P}_x),\{X_t\}_{t\geq0},\{\theta_t\}_{t\geq0}\right\}$;
that we briefly recall here:
\begin{itemize}
 \item $\Omega$ is the set of {\sl "cadlag"} functions on $[0,\infty)$, i.e. functions
 $\omega:[0,\infty)\rightarrow\Rd$ (paths) that are continuous to the right and have a limit to the left
 in any point of $[0,\infty)$.
 \item $\mathfrak{F}$ is the smallest $\sigma$-algebra for which all the {\sl coordinate functions}
 $\{X_t\}_{t\geq0}$, with $X_t(\omega):=\omega(t)$, are measurable.
 \item $\mathsf{P}_x$ is a probability on $\Omega$ such that for any $n\in\mathbb{N}^*$, for any
 ordered set $\{0< t_1\leq\ldots\leq t_n\}$ and any family $\{B_1,\ldots,B_n\}$ of Borel subsets in $\Rd$,
 we have
\begin{equation}
 \mathsf{P}_x\left\{X_{t_1}\in B_1,\ldots ,X_{t_n}\in B_n\right\}\ =
\end{equation}
$$
\hspace{-1.5cm}
=\ \int_{B_1}dx_1\,\overset{\circ}{\wp}_{_{t_1}}(x-x_1)\int_{B_2}dx_2\,\overset{\circ}{\wp}_{_{t_2-t_1}}
(x_1-x_2)\,\ldots\,\int_{B_n}dx_n\,\overset{\circ}{\wp}_{_{t_n-t_{n-1}}}(x_{n-1}-x_n).
$$
One can deduce that, if $\mathsf{E}_x$ denotes the
expectation value with respect to $\mathsf{P}_x$, then for any
$f\in\Cinf(\Rd)$ and for any $t\geq0$ one has
\begin{equation}
 \mathsf{E}_x(f\circ X_t)\ =\ \left[P(t)f\right](x).
\end{equation}
We also remark that $\mathsf{P}_x$ is the image of the probability
$\mathsf{P}_0\equiv\mathsf{P}$ under the map
$S_x:\Omega\rightarrow\Omega$ defined by
$\left[S_x\omega\right](t):=x+\omega(t)$.
 \item For any $t\geq0$, the map $\theta_t:\Omega\rightarrow\Omega$ is defined by
 $\left[\theta_t\omega\right](s):=\omega(s+t)$. If we denote by $\mathfrak{F}_t$ the sub-$\sigma$-algebra of
 $\mathfrak{F}$ generated by the processes $\{X_s\}_{0\leq s\leq t}$, then for any $t\geq0$ and
 any bounded random variable $Y:\Omega\rightarrow\mathbb{R}$
\begin{equation}\label{Markov}
 \mathsf{E}_x\left(Y\circ\theta_t\mid\mathfrak{F}_t\right)(\omega)\ =\ \mathsf{E}_{X_t(\omega)}(Y),\quad
 \mathsf{P}_x-a.e.\ {\rm on}\ \Omega.
\end{equation}
\end{itemize}

We use the fact that (see \cite{IW}, \cite{IT2}) the probability $\mathsf{P}_x$ is concentrated on the set
of paths $X_t$ such that $X_0=x$ and by the L\'{e}vy-Ito Theorem:
\begin{equation}
 X_t=x+\int_0^{t_+}\int_{\mathbb{R}^d}y\,\tilde{N}_X(ds\,dy).
\end{equation}
Here $\tilde{N}_X(ds\,dy):=N_X(ds\,dy)-\hat{N}_X(ds\,dy)$, $\hat{N}_X(ds\,dy):=\mathsf{E}_x(N_X(ds\,dy))=ds\,\mathsf{n}(dy)$ with 
$\mathsf{n}(dy)$ the L\'{e}vy measure appearing in (\ref{Levy-mes}) and $N_X$ a 'counting measure' on $[0,\infty)\times\Rd$ that for $0<t<t^\prime$ and $B$ a Borel subset of $\Rd$ is defined as $N_X((t,t^\prime]\times B)\ :=$
\begin{equation}
:=\ \card\left\{s\in(t,t^\prime]\,\mid\,X_s\ne X_{s-},\,X_s\- X_{s-}\in B\right\}.
\end{equation}

Following the procedure developped in \cite{IT2} by Ichinose and Tamura one obtains a Feynman-Kac-It\^{o} formula for Hamiltonians of the type $H=H_A\dotplus V$. In fact we have

\begin{proposition}\label{FKI}
 Under the same conditions as in Definition \ref{def-Ham-rel-pert}, for any function $u\in L^2(\Rd)$ we have
\begin{equation}
 \left(e^{-tH}u\right)(x)\ =\ \mathsf{E}_x\left((u\circ X_t)\,e^{-S(t,X)}\right),\quad t\geq0,x\in\mathbb{R}^d
\end{equation}
where
$$
S(t,X)\ :=\ i\int_0^{t_+}\int_{\mathbb{R}^d}\tilde{N}_X(ds\,dy)
\left<\int_0^1dr\,\left(A(X_{s_-}+ry)\right),\,y \right>\,+
$$
$$
+\,i\int_0^{t}\int_{\mathbb{R}^d}\hat{N}_X(ds\,dy)
\left<\left(\int_0^1dr\,A(X_s+ry)-A(X_s)\right),\,y \right>\,+
$$
\begin{equation}\label{def-S}
+\,\int_0^tds\,V(X_s).
\end{equation}
\end{proposition}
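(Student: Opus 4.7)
The plan is to adapt the strategy of Ichinose-Tamura \cite{IT2} from the Weyl quantization $\mathfrak{Op}_A$ to our gauge-covariant quantization $\mathfrak{Op}^A$; this amounts to systematically replacing $A((x+y)/2)$ by the circulation $\Gamma^A(x,x+y)$ throughout. The argument naturally splits into two parts: first establish the formula in the purely magnetic case ($V=0$), then incorporate $V$ via the Kato-Trotter identity (\ref{Kato-Trotter}).

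For $V=0$, I would define the candidate semigroup
\begin{equation*}
[T(t)f](x) := \mathsf{E}_x\!\left((f\circ X_t)\,e^{-S_0(t,X)}\right),\qquad f\in L^2(\Rd),
\end{equation*}
where $S_0(t,X)$ denotes the sum of the first two terms of (\ref{def-S}). Since $S_0$ is purely imaginary, $|e^{-S_0(t,X)}|=1$, giving the pointwise diamagnetic bound $|[T(t)f](x)|\leq [P(t)|f|](x)$ and making $T(t)$ a strongly continuous $L^2$-contraction semigroup. The semigroup property reduces via the Markov property (\ref{Markov}) to the cocycle $S_0(t+s,\omega) = S_0(t,\omega) + S_0(s,\theta_t\omega)$, which holds precisely because the subtraction of $A(X_s)$ in the compensator term of (\ref{def-S}) restores time-translation additivity of the deterministic correction. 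To identify $T(t)=e^{-tH_A}$, I would compute the generator on $f\in\mathcal{C}^\infty_0(\Rd)$ by applying It\^{o}'s formula for the cadlag semimartingale $X$ to $Z_t:=f(X_t)e^{-S_0(t,X)}$ and taking $\mathsf{E}_x$ at $t=0$. The compensated Poisson integral contributes a martingale of zero expectation, while the drifts combine so that the $\int_0^1 A(X_s+ry)\,dr$ terms cancel between the Itô jump expansion and the explicit compensator in $S_0$, leaving
\begin{equation*}
-\!\int_{\Rd}\!\mathsf{n}(dy)\!\left[e^{-iy\cdot\Gamma^A(x,x+y)}f(x+y)-f(x)-I_{\{|z|<1\}}(y)\,y\cdot(\partial_x-iA(x))f(x)\right],
\end{equation*}
which is exactly $-[H_Af](x)$ by the formula derived in Section \ref{sec:pertHam}. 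Since $\mathcal{C}^\infty_0(\Rd)$ is a core for $H_A$, uniqueness of the semigroup yields $T(t)=e^{-tH_A}$.

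To incorporate $V$, apply (\ref{Kato-Trotter}) to $H=H_A\dotplus V$ and iterate the $V=0$ formula using the Markov property at times $kt/n$, obtaining
\begin{equation*}
\left[(e^{-(t/n)H_A}e^{-(t/n)V})^n u\right]\!(x) = \mathsf{E}_x\!\left((u\circ X_t)\,e^{-S_0(t,X)}\exp\!\Big(\!-\tfrac{t}{n}\sum_{k=1}^{n} V(X_{kt/n})\Big)\right).
\end{equation*}
Since paths are cadlag, the Riemann sum converges $\mathsf{P}_x$-a.e.\ to $\int_0^t V(X_s)\,ds$. Dominated convergence (with $V_+\geq0$ bounding the positive exponential by $1$, and the $\mathfrak{h}_0$-smallness of $\mathfrak{q}_{V_-}$ from Proposition \ref{HMagnSEss} combined with the diamagnetic estimate controlling the negative part uniformly in $n$) passes this to $L^2$-convergence, yielding (\ref{def-S}).

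The main obstacle lies in the It\^{o} calculation, where the polynomially growing $A$ enters the stochastic integrand. Integrability is secured by two matching features of the Lévy measure: near $y=0$ one has $\int_0^1 A(X_{s^-}+ry)\,dr - A(X_{s^-}) = O(|y|)$, so the deterministic compensator is integrable against $\min(1,|y|^2)\mathsf{n}(dy)$; for $|y|\to\infty$ the estimate (\ref{iata}) forces $\mathsf{n}(dy)$ to decay exponentially, taming every polynomial in $A$. A secondary subtlety is the extension from smooth compactly supported $V$ (where every manipulation is straightforward) to merely $V_+\in L^1_{\mathsf{loc}}$ with $V_-\in L^d\cap L^{d/2}$; this is handled by a standard truncation argument combined with the strong resolvent convergence implicit in the form construction of $H$ in Proposition \ref{Ham-rel-pert}.
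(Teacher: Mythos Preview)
Your sketch is correct and coincides with what the paper intends: the paper does not give a detailed proof of this proposition but simply states that it is obtained ``following the procedure developed in \cite{IT2} by Ichinose and Tamura,'' with the systematic replacement of $A((x+y)/2)$ by $\Gamma^A(x,x+y)$ already flagged in Section~\ref{sec:pertHam}. Your two-step scheme (identify the generator of the candidate semigroup via It\^{o}'s formula for the pure magnetic case, then add $V$ by Kato--Trotter and a truncation/monotone-convergence argument) is precisely the Ichinose--Tamura strategy, and the integrability remarks you make about the L\'evy measure near $0$ and at infinity are the right technical points.
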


In the sequel we shall take $A=0$ and
$V\in C_0^\infty(\Rd)$. As it is proved in \cite{DvC}, the operator $e^{-t(H_0\dotplus V)}$ has an integral
kernel that can be described in the following way. Let us denote by $\mathfrak{F}_{t-}$ the sub-$\sigma$-algebra
of $\mathfrak{F}$ generated by the random variables $\{X_s\}_{0\leq s<t}$. For any pair $(x,y)\in[\Rd]^2$ and
any $t>0$ we define a measure $\mu^{t,y}_{0,x}$ on the Borel space $(\Omega,\mathfrak{F}_{t-})$ by the equality
\begin{equation}\label{Def-mu}
 \mu^{t,y}_{0,x}(M)\ :=\ \mathsf{E}_x\left[\chi_M\,\overset{\circ}{\wp}_{t-s}(X_s-y)\right],
\end{equation}
for any $M\in\mathfrak{F}_{s}$ and $0\leq s<t$, where $\chi_M$ is the characteristic function of $M$. This
measure is concentrated on the family of {\sl 'paths'} $\{\omega\in\Omega\mid X_0(\omega)=x,X_{t-}(\omega)=y\}$
and we have $\mu^{t,y}_{0,x}(\Omega)=\overset{\circ}{\wp}_t(x-y)$.

\begin{proposition}\label{FK}
 Let $F:\Omega\rightarrow\mathbb{R}$ be a non-negative $\mathfrak{F}_{t-}$-measurable random variable and
 let $f:\Rd\rightarrow\mathbb{R}$ be a positive borelian function. Then the following equality holds for any
 $t>0$ and any $x\in\Rd$:
\begin{equation}
 \int_{\Rd}dy\left\{\int_\Omega\mu^{t,y}_{0,x}(d\omega)\,F(\omega)\,e^{-\int_0^t ds\,V(X_s)}\right\}\,f(y)\ =
\end{equation}
$$
=\ \mathsf{E}_x\left(F\,e^{-\int_0^t ds\,V(X_s)}\,f(X_t)\right).
$$
\end{proposition}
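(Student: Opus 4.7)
My plan is to absorb the Feynman-Kac factor into $F$ and then verify the resulting identity by a monotone class argument, modelled on the analogous step for Brownian motion in \cite{S1}. Setting $\tilde F:=F\,e^{-\int_0^t V(X_s)\,ds}$, the map $\omega\mapsto\int_0^t V(X_s(\omega))\,ds$ is defined pathwise (since $V\in C_0^\infty(\Rd)$ is bounded and $X$ is cadlag) and is $\mathfrak{F}_{t-}$-measurable as the pointwise limit of the $\mathfrak{F}_s$-measurable variables $\int_0^s V(X_r)\,dr$ as $s\uparrow t$. Hence $\tilde F$ is a non-negative, $\mathfrak{F}_{t-}$-measurable random variable, and the claim reduces to
\begin{equation*}
\int_{\Rd}dy\,f(y)\int_\Omega\tilde F\,d\mu^{t,y}_{0,x}\ =\ \mathsf{E}_x\bigl(\tilde F\,f(X_t)\bigr).
\end{equation*}
Since $\bigcup_{0\le s<t}\mathfrak{F}_s$ generates $\mathfrak{F}_{t-}$, by linearity and monotone convergence it suffices to establish this equality when $\tilde F=\chi_M$ with $M\in\mathfrak{F}_s$ for some $s<t$.

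For such $M$, the definition (\ref{Def-mu}) together with Tonelli's theorem give
\begin{equation*}
\int_{\Rd}dy\,f(y)\,\mu^{t,y}_{0,x}(M)\ =\ \mathsf{E}_x\Bigl[\chi_M\int_{\Rd}dy\,\overset{\circ}{\wp}_{t-s}(X_s-y)\,f(y)\Bigr]\ =\ \mathsf{E}_x\bigl[\chi_M\,(P(t-s)f)(X_s)\bigr],
\end{equation*}
where the last step uses that $\overset{\circ}{\wp}$ is the convolution kernel of $P(\cdot)$. For the right-hand side I would invoke the Markov property (\ref{Markov}) in the form $\mathsf{E}_x(Y\circ\theta_s\mid\mathfrak{F}_s)=(P(t-s)f)(X_s)$ with $Y:=f\circ X_{t-s}$, combined with (\ref{expect-cond-expect}) and (\ref{cond-expect-prod}) applied to the $\mathfrak{F}_s$-measurable factor $\chi_M$:
\begin{equation*}
\mathsf{E}_x\bigl[\chi_M\,f(X_t)\bigr]\ =\ \mathsf{E}_x\bigl[\chi_M\,\mathsf{E}_x(Y\circ\theta_s\mid\mathfrak{F}_s)\bigr]\ =\ \mathsf{E}_x\bigl[\chi_M\,(P(t-s)f)(X_s)\bigr],
\end{equation*}
matching the LHS. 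A minor caveat is that $\mu^{t,y}_{0,x}$ is concentrated on $\{X_{t-}=y\}$, so that the LHS naturally pairs with $f(X_{t-})$ rather than $f(X_t)$; however, a L\'evy process has no fixed-time discontinuities, so $X_{t-}=X_t$ holds $\mathsf{P}_x$-a.s. and the ambiguity is harmless.

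The principal obstacles I anticipate are measure-theoretic rather than conceptual: one must check that the family $\{\mu^{t,y}_{0,x}\}_{y\in\Rd}$ defines a consistent disintegration, i.e.\ that the value of $\mu^{t,y}_{0,x}(M)$ supplied by (\ref{Def-mu}) is independent of the particular $s<t$ chosen so that $M\in\mathfrak{F}_s$ (which is in fact the Markov property applied to cylinder sets and is what ultimately drives the proof), that $y\mapsto\int_\Omega\tilde F\,d\mu^{t,y}_{0,x}$ is Borel measurable so the outer $dy$-integral makes sense, and that the monotone class extension from $\mathfrak{F}_s$-cylinders to arbitrary non-negative $\mathfrak{F}_{t-}$-measurable $\tilde F$ is compatible with Tonelli in the outer integral. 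Once these routine items are settled, the calculation on cylinder sets above closes the argument.
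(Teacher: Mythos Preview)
Your proof is correct. The paper's own proof is a one-line citation to relations (2.29) and (2.33) of Demuth--van~Casteren \cite{DvC}; your monotone-class reduction to cylinder sets $M\in\mathfrak{F}_s$ followed by the Markov property is exactly the standard argument underlying that citation, so there is nothing substantively different to compare.
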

\begin{proof}
 This is a direct consequence of relations (2.29) and (2.33) from \cite{DvC}.
\end{proof}

Let us now take $A=0$ in Proposition \ref{FKI} and $F=1$ in Proposition \ref{FK} in order to deduce
that the operator $e^{-t(H_0\dotplus V)}$ is an integral operator with integral kernel given by the function
\begin{equation}
\wp_t(x,y)\,:=\, \int_\Omega\mu^{t,y}_{0,x}(d\omega)\,e^{-\int_0^t ds\,V(X_s)},\ \ \;t>0,\ (x,y)\in\Rd\times\Rd.
\end{equation}
Proposition 3.3 from \cite{DvC} implies that the function
$[0,\infty)\times\Rd\times\Rd\ni(t,x,y)\mapsto\wp_t(x,y)\in\mathbb{R}$ is non-negative, continuous and verifies
$\wp_t(x,y)=\wp_t(y,x)$. We shall also need the following result.

\begin{proposition}\label{4.3}
 For any $t>0$, any $x\in\Rd$ and any function $g:\Omega\rightarrow\mathbb{R}$ that is integrable with respect
 to the measure $\mu^{t,x}_{0,x}$ we have the equality:
\begin{equation}
 \int_\Omega\mu^{t,x}_{0,x}(d\omega)\,g(\omega)\ =\ \int_\Omega\mu^{t,0}_{0,0}(d\omega)\,g(x+\omega).
\end{equation}
\end{proposition}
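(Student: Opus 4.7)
The plan is to show that the bridge-type measure $\mu^{t,x}_{0,x}$ is simply the pushforward of $\mu^{t,0}_{0,0}$ under the path-shift map $S_x:\Omega\to\Omega$, $[S_x\omega](s):=x+\omega(s)$. Once this is established, the claim is the ordinary change-of-variables formula for pushforward measures: $\int g\,d\mu^{t,x}_{0,x}=\int g\circ S_x\,d\mu^{t,0}_{0,0}=\int g(x+\cdot)\,d\mu^{t,0}_{0,0}$, applied to the given integrable $g$.

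The key input is the identity $\mathsf{P}_x=(S_x)_\ast \mathsf{P}_0$ recalled just after the definition of the Markov process, which yields $\mathsf{E}_x(Y)=\mathsf{E}_0(Y\circ S_x)$ for every integrable random variable $Y$. To set up the verification, I would first check the pushforward equality on the generating algebra $\mathfrak{F}_s$ (for $s<t$). Fix $M\in\mathfrak{F}_s$; since $X_u\circ S_x=x+X_u$ for every $u\ge 0$, one has $\chi_M\circ S_x=\chi_{S_x^{-1}(M)}$ and
\begin{equation*}
\mu^{t,x}_{0,x}(M)=\mathsf{E}_x\!\left[\chi_M\,\overset{\circ}{\wp}_{t-s}(X_s-x)\right]=\mathsf{E}_0\!\left[\chi_{S_x^{-1}(M)}\,\overset{\circ}{\wp}_{t-s}(X_s)\right]=\mu^{t,0}_{0,0}(S_x^{-1}(M)),
\end{equation*}
using definition (\ref{Def-mu}) twice. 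Since $\bigcup_{s<t}\mathfrak{F}_s$ is an algebra generating $\mathfrak{F}_{t-}$ and both sides are finite measures on $\mathfrak{F}_{t-}$, a standard monotone-class (Dynkin $\pi$-$\lambda$) argument extends the equality to all $M\in\mathfrak{F}_{t-}$, so that $\mu^{t,x}_{0,x}=(S_x)_\ast\mu^{t,0}_{0,0}$.

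Having established the pushforward identity, the conclusion for an arbitrary $\mu^{t,x}_{0,x}$-integrable $g:\Omega\to\mathbb{R}$ is obtained in the standard way: first for $g=\chi_M$ with $M\in\mathfrak{F}_{t-}$ (this is exactly the displayed identity above), then for non-negative simple functions by linearity, then for non-negative measurable $g$ by monotone convergence, and finally for general integrable $g$ by splitting into positive and negative parts. The formula $\int_\Omega g\,d\mu^{t,x}_{0,x}=\int_\Omega g\circ S_x\,d\mu^{t,0}_{0,0}=\int_\Omega g(x+\omega)\,\mu^{t,0}_{0,0}(d\omega)$ then follows.

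The only delicate point is the extension from $\mathfrak{F}_s$ with $s<t$ to the full $\sigma$-algebra $\mathfrak{F}_{t-}$; once one observes that $S_x$ is a bijection which respects the filtration (it commutes with the coordinate maps up to the constant translation), this is purely measure-theoretic. Everything else is a direct unwinding of the definition of $\mu^{t,y}_{0,x}$ together with the translation invariance $\mathsf{P}_x=(S_x)_\ast\mathsf{P}_0$ of the underlying L\'{e}vy process.
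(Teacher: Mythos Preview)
Your proof is correct and follows essentially the same approach as the paper: both reduce to the pushforward identity $\mu^{t,x}_{0,x}(M)=\mu^{t,0}_{0,0}(S_x^{-1}(M))$ for $M\in\mathfrak{F}_s$, $s<t$, via $\mathsf{P}_x=(S_x)_\ast\mathsf{P}_0$ and $X_s\circ S_x=X_s+x$. If anything, you are more careful than the paper, which simply declares the reduction to sets in $\mathfrak{F}_s$ ``evidently sufficient'' without spelling out the monotone-class extension to $\mathfrak{F}_{t-}$ and the passage to general integrable $g$.
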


\begin{proof}
 It is evidently sufficient to prove that for any $s\in[0,t)$ and any $M\in\mathfrak{F}_s$ we have
$$
\mu^{t,x}_{0,x}(M)\ =\ \left(\mu^{t,0}_{0,0}\circ S_x^{-1}\right)(M)
$$
where the map $S_x:\Omega\rightarrow\Omega$ is defined by $(S_x(\omega)(t):=x+\omega(t)$. We noticed previously
the identity $\mathsf{P}_x=\mathsf{P}_0\circ S_x^{-1}$; thus for any function
$F:\Omega\rightarrow\mathbb{R}$ integrable with respect to $\mathsf{P}_x$ we have
$\mathsf{E}_x(F)=\mathsf{E}_0(F\circ S_x)$. We remark that $X_s(\omega+x)=\omega(s)+x=X_s(\omega)+x$, and using
the definition of the measure $\mu^{t,x}_{0,x}$ in (\ref{Def-mu}), we obtain
\begin{equation}
 \mu^{t,x}_{0,x}(M)=\mathsf{E}_x\left[\chi_M\,\overset{\circ}{\wp}_{t-s}(X_s-x)\right]=
 \mathsf{E}_0\left[(\chi_M\circ S_x)\,\overset{\circ}{\wp}_{t-s}(X_s)\right]=
\end{equation}
$$
=\mathsf{E}_0\left[(\chi_{S_x^{-1}(M)}\,\overset{\circ}{\wp}_{t-s}(X_s)\right]=
\mu^{t,0}_{0,0}\left(S_x^{-1}(M)\right)=\left[\mu^{t,0}_{0,0}\circ S_x^{-1}\right](M).
$$
\end{proof}

\section{Proof of the bound for $N(0;V)$.}

In this Section we will consider $A=0$ and we shall work only with
a potential $V=V_+-V_-$ satisfying the properties:
\begin{itemize}
 \item $V_\pm\geq0$,
 \item $V_+\in L^1_{\mathsf{loc}}(\Rd)$,
 \item $V_-\in L^d(\Rd)\cap L^{d/2}(\Rd)$.
\end{itemize}

\noindent We shall use the notations $H:=H_0\dotplus V$, $H_+:=H_0\dotplus V_+$, $H_-:=H_0\dotplus(-V_-)$ for
the operators associated to the sesquilinear forms $\mathfrak{h}=\mathfrak{h}_0+\mathfrak{q}_{V}$,
$\mathfrak{h}_+=\mathfrak{h}_0+\mathfrak{q}_{V_+}$, $\mathfrak{h}_-=\mathfrak{h}_0-\mathfrak{q}_{V_-}$.

Due to the results of Proposition \ref{HMagnSEss} we have $\sigma_{\mathsf{ess}}(H)=\sigma_{\mathsf{ess}}
(H_+)\subset\sigma(H_+)\subset[0,\infty)$ and $\sigma_{\mathsf{ess}}(H_-)=\sigma_{\mathsf{ess}}(H_0)=\sigma(H_0)=
[0,\infty)$.

For any potential function $W$ verifying the same conditions as $V$ above, we denote by $N(W)$ the number of
strictly negative eigenvalues (counted with their multiplicity) of the operator $H_0\dotplus W$. The following
result reduces our study to the case $V_+=0$.

\begin{lemma}
 The following inequality is true:
$$
N(V)\ \leq\ N(-V_-).
$$
In particular we have that $N(V)=\infty$ implies that $N(-V_-)=\infty$.
\end{lemma}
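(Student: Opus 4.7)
The plan is to compare $H = H_0 \dotplus V$ with $H_- = H_0 \dotplus (-V_-)$ through their sesquilinear forms and then invoke the min-max principle. The underlying intuition is that, since $V_+ \geq 0$ merely lifts the spectrum, dropping it can only create more negative eigenvalues.

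First I would record the form comparison. The form domain of $\mathfrak{h}$ is $\mathcal{D}(\mathfrak{h}_0)\cap\mathcal{D}(\mathfrak{q}_{V_+})$, while by Proposition \ref{HMagnSEss}(1) we have $\mathcal{D}(\mathfrak{h}_-) = \mathcal{D}(\mathfrak{h}_0)$. In particular
\begin{equation*}
\mathcal{D}(\mathfrak{h})\;\subset\;\mathcal{D}(\mathfrak{h}_-),
\end{equation*}
and for every $u$ in this smaller domain,
\begin{equation*}
\mathfrak{h}(u)\;=\;\mathfrak{h}_0(u)+\mathfrak{q}_{V_+}(u)-\mathfrak{q}_{V_-}(u)\;\geq\;\mathfrak{h}_0(u)-\mathfrak{q}_{V_-}(u)\;=\;\mathfrak{h}_-(u),
\end{equation*}
because $V_+\geq 0$ gives $\mathfrak{q}_{V_+}(u)\geq 0$.

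Next I would apply the min-max principle. Let $n:=N(V)$, assuming first that $n<\infty$. By Proposition \ref{HMagnSEss}(2), the essential spectrum of $H$ lies in $[0,\infty)$, so the strictly negative spectrum of $H$ consists of isolated eigenvalues of finite multiplicity, and the spectral subspace $\mathcal{E}:=\mathrm{Ran}\,E_H((-\infty,0))$ has dimension $n$ and is contained in $\mathcal{D}(H)\subset\mathcal{D}(\mathfrak{h})$. For any unit vector $u\in\mathcal{E}$ we have $\mathfrak{h}(u)=(Hu,u)<0$, and by the inequality above also $\mathfrak{h}_-(u)<0$. Thus $\mathcal{E}$ is an $n$-dimensional subspace of $\mathcal{D}(\mathfrak{h}_-)$ on which $\mathfrak{h}_-$ is strictly negative. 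Since $\sigma_{\mathsf{ess}}(H_-)=[0,\infty)$, the min-max characterization yields $\mu_n(H_-)<0$, where $\mu_n(H_-)$ is the $n$-th min-max value of $H_-$. Being strictly below the essential spectrum, $\mu_n(H_-)$ is an isolated eigenvalue of finite multiplicity, so $H_-$ admits at least $n$ strictly negative eigenvalues, i.e.\ $N(-V_-)\geq n = N(V)$.

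Finally I would handle the case $N(V)=\infty$ by applying the above argument to each finite $n$ in turn: for every $n\in\mathbb{N}^{*}$ one produces an $n$-dimensional subspace of $\mathcal{D}(\mathfrak{h}_-)$ on which $\mathfrak{h}_-<0$, forcing $N(-V_-)\geq n$ for every $n$, and hence $N(-V_-)=\infty$.

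The only subtle point is the domain issue: one must verify that the spectral subspace associated to the negative eigenvalues of $H$ actually sits inside the form domain $\mathcal{D}(\mathfrak{h})$ (not merely inside $\mathcal{D}(\mathfrak{h}_-)$), so that the form comparison $\mathfrak{h}\geq\mathfrak{h}_-$ may be applied on it. This is immediate from $\mathcal{D}(H)\subset\mathcal{D}(\mathfrak{h})$, but it is the one place where the distinction between the form domains of $\mathfrak{h}_0+\mathfrak{q}_{V_+}-\mathfrak{q}_{V_-}$ and $\mathfrak{h}_0-\mathfrak{q}_{V_-}$ actually matters, and I expect it to be the main conceptual obstacle in writing the argument cleanly.
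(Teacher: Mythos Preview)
Your proposal is correct and follows exactly the paper's own argument: the paper's proof simply notes that $\mathcal{D}(\mathfrak{h}_-)=\mathcal{D}(\mathfrak{h}_0)\supset\mathcal{D}(\mathfrak{h})$ and $\mathfrak{h}_-\leq\mathfrak{h}$, then invokes the min-max principle to conclude that $H_-$ has at least $N(V)$ strictly negative eigenvalues. You have merely spelled out the details (the form inequality via $\mathfrak{q}_{V_+}\geq 0$, the role of the essential spectrum, and the passage to $N(V)=\infty$) that the paper leaves implicit.
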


\begin{proof}
 We apply the {\sl Min-Max} principle (see Theorem XIII.2 in \cite{RS}) noticing that
 $\mathcal{D}(\mathfrak{h_-})=\mathcal{D}(\mathfrak{h_0})\supset\mathcal{D}(\mathfrak{h})$ and
 $\mathfrak{h}_-\leq\mathfrak{h}$ and we deduce that the operator $H_-$ has at least $N(V)$ strictly negative
 eigenvalues.
\end{proof}

\noindent Thus we shall suppose from now on that $V_+=0$.

\subsection{Reduction to smooth, compactly supported potentials}

In this subsection we shall prove that we can suppose $V_-\in C_0^\infty(\Rd)$. This will be done by approximation,
using a result of the type of Theorem 4.1 from \cite{S3}.

\begin{lemma}\label{pr}
Let $V$ and $V_n$ ($n\ge 1$) functions as in proposition \ref{Ham-rel-pert}. In addition, $V_+=V_{n,+}=0$ for all $n\ge 1$ and
$\lim_{n\rightarrow\infty}V_{n,-}=V_-$ in $L^1_{\rm{loc}}(\Rd)$ and $V_{n,-}$ are uniformly $\mathfrak h_0$-bounded with relative
bound $<1$. We set $H_n:=H_A\dotplus V_n$. Then $H_n\rightarrow H$ when $n\rightarrow\infty$ in strong resolvent sense.
\end{lemma}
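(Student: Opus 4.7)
The plan is to prove strong resolvent convergence directly: for some fixed sufficiently large $\lambda>0$ and every $u\in L^2(\Rd)$, I will show that $(H_n+\lambda)^{-1}u\to(H+\lambda)^{-1}u$ in $L^2(\Rd)$.

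First I would fix the functional setup. By Proposition \ref{Ham-rel-pert} the form $\mathfrak{q}_{V_-}$ is $\mathfrak{h}_0$-bounded with relative bound strictly less than $1$, and by hypothesis $\mathfrak{q}_{V_{n,-}}$ satisfies the same inequality uniformly in $n$, with constants $a<1$ and $b\ge 0$. Running through the diamagnetic argument from the proof of Proposition \ref{Ham-rel-pert} (which rests on Proposition \ref{diamagnetic}) with these uniform constants transfers the bound to $\mathfrak{h}_A$,
\[
\mathfrak{q}_{V_{n,-}}(v)\;\le\;a\,\mathfrak{h}_A(v)+b\|v\|^2,\qquad\forall v\in\mathcal{D}(\mathfrak{h}_A),\;\forall n,
\]
and likewise for $V_-$. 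All form domains therefore coincide with $\mathcal{D}(\mathfrak{h}_A)$, and for $\lambda$ large enough one obtains the uniform coercivity
\[
(\mathfrak{h}_n+\lambda)(v)\;\ge\;(1-a)(\mathfrak{h}_A+\lambda)(v)\;\ge\;(\lambda-b)\|v\|^2,
\]
with the analogous bound for $\mathfrak{h}+\lambda$.

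Next I would derive a quantitative resolvent comparison. Setting $w_n:=(H_n+\lambda)^{-1}u$ and $w:=(H+\lambda)^{-1}u$, and testing the identity $(\mathfrak{h}_n+\lambda)(w_n,\phi)=(\mathfrak{h}+\lambda)(w,\phi)=\langle u,\phi\rangle$ against $\phi=w_n-w$ yields
\[
(\mathfrak{h}_n+\lambda)(w_n-w)\;=\;\int_{\Rd}(V_{n,-}-V_-)\,w\,\overline{(w_n-w)}\,dx.
\]
A Cauchy-Schwarz in $L^2(|V_{n,-}-V_-|\,dx)$, the pointwise bound $|V_{n,-}-V_-|\le V_{n,-}+V_-$, and the uniform $\mathfrak{h}_A$-form bound combined with the coercivity above let me absorb one factor $(\mathfrak{h}_n+\lambda)(w_n-w)^{1/2}$ on the right, producing
\[
\|w_n-w\|_{L^2}^2\;\le\;C\int_{\Rd}|V_{n,-}(x)-V_-(x)|\,|w(x)|^2\,dx,
\]
with $C$ independent of $n$.

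Finally I would show the right-hand side tends to zero; this is where the main technical subtlety lies, since $w$ belongs only to $\mathcal{D}(\mathfrak{h}_A)$ and need not be locally bounded, so one cannot couple $|w|^2$ directly with the $L^1_{\mathsf{loc}}$ convergence of $V_{n,-}-V_-$. Noting that $\mathcal{C}^\infty_0(\Rd)$ is an operator core for $H_A$ (by the essential self-adjointness established in Section \ref{sec:pertHam}), hence a form core for $\mathfrak{h}_A$, I would approximate: given $\varepsilon>0$ pick $\psi\in\mathcal{C}^\infty_0(\Rd)$ with $(\mathfrak{h}_A+\lambda)(w-\psi)<\varepsilon$ and split $|w|^2\le 2|w-\psi|^2+2|\psi|^2$. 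The first piece is bounded uniformly in $n$ by $2\,\mathfrak{q}_{V_{n,-}+V_-}(w-\psi)\le C'\varepsilon$ via the uniform form bound (this is exactly where a single constant $a<1$ is essential). The second piece is dominated by $2\|\psi\|_\infty^2\int_{\mathrm{supp}\,\psi}|V_{n,-}-V_-|\,dx$ and tends to $0$ by the $L^1_{\mathsf{loc}}$ hypothesis. Letting $n\to\infty$ and then $\varepsilon\to 0$ then concludes the argument.
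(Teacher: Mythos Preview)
Your argument is correct and complete. The resolvent identity you derive,
\[
(\mathfrak{h}_n+\lambda)(w_n-w,w_n-w)=\int_{\Rd}(V_{n,-}-V_-)\,w\,\overline{(w_n-w)}\,dx,
\]
together with the uniform form bound and the form-core density of $\mathcal{C}^\infty_0(\Rd)$ in $\mathcal{D}(\mathfrak{h}_A)$, indeed yields the stated $L^2$-estimate and its vanishing. Convergence of $(H_n+\lambda)^{-1}$ for one real $\lambda$ below the common lower bound suffices for strong resolvent convergence.

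Your route differs from the paper's. The paper argues by weak compactness: it shows that $u_n:=(H_n+i)^{-1}f$ is bounded in $\mathcal{D}(\mathfrak{h}_A)$, extracts weak limits of $u_n$, $H_A^{1/2}u_n$, and $V_{n,-}^{1/2}u_n$, identifies the limit as $(H+i)^{-1}f$ by testing against $g\in\mathcal{C}^\infty_0$, and finally upgrades weak to strong convergence via the resolvent identity $\|(H_n+i)^{-1}f\|^2=\tfrac{i}{2}\big((f,(H_n-i)^{-1}f)-(f,(H_n+i)^{-1}f)\big)$. Your approach is more direct and quantitative: it avoids weak compactness entirely and gives an explicit bound for $\|w_n-w\|^2$ in terms of $\int|V_{n,-}-V_-|\,|w|^2$. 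The paper's soft argument, on the other hand, makes the structure of the limiting operator more transparent and would adapt more readily if the perturbations were not of multiplication type.
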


\begin{proof}
We denote by $\mathfrak h_n$ the quadratic form associated to $H_n$, i.e.
$\mathfrak h_n=\mathfrak h_A-\mathfrak q_{n,-}$, where $\mathfrak q_{n,-}$ is associated to
$V_{n,-}$ by (\ref{qformV}). We have $D(h_n)=D(h_A)\subset D(q_{n,-})$, and according to Proposition \ref{Ham-rel-pert} there exist
$\alpha\in (0,1)$ and $\beta>0$ such that

\begin{equation}\label{forme}
\mathfrak q_{n,-}(v)\le \alpha \mathfrak h_A(v)+\beta\parallel v\parallel,
\ \ \forall v\in D(\mathfrak h_A),\ \forall n\ge 1.
\end{equation}

It follows that $\mathfrak h_n$ are uniformly lower bounded and the norms defined on $D(\mathfrak h_A)$ by
$\mathfrak h_A$ and $\mathfrak h_n$ are
equivalent, uniformly with respect to $n\ge 1$. Moreover, $C_0^\infty(\Rd)$ is a core for $H_A$, thus for $\mathfrak h_A$,
$\mathfrak h$ and $\mathfrak h_n$ also.

Let $f\in L^2(\Rd)$ and $u_n:=(H_n+i)^{-1}f\in D(H_n)\subset D(\mathfrak h_A)$, $n\ge 1$. We have clearly

\begin{equation}\label{afara}
\parallel u_n\parallel\le\parallel f\parallel,\ \ |\mathfrak h_n(u_n)|=|(H_nu_n,u_n)|\le\parallel f\parallel,
\ \ \forall n\ge 1.
\end{equation}

From (\ref{forme}), the subsequent comments and (\ref{afara}) it follows that the sequence $(u_n)_{n\ge 1}$
is bounded in $D(\mathfrak h_A)$, while the sequence $\l(V_{n,-}^{1/2}u_n\r)_{n\ge 1}$ is bounded in $L^2(\Rd)$.
Let $u\in L^2(\Rd)$ be a limit point of the sequence $(u_n)_{n\ge 1}$ with respect to the weak topology on $L^2(\Rd)$. 
By restricting maybe to a subsequence,
we may assume that there exist $\psi,\eta\in L^2(\Rd)$ such that $H_A^{1/2}u_n\underset{n\rightarrow\infty}{\rightarrow}\psi$ and
$V_{n,-}^{1/2}u_n\underset{n\rightarrow\infty}{\rightarrow}\eta$ in
the weak topology of $L^2(\Rd)$.
For
$g\in D\l(H_A^{1/2}\r)$ we have
$$
\l(H_A^{1/2}g,u\r)=\lim_{n\rightarrow\infty}\l(H_A^{1/2}g,u_n\r)=\lim_{n\rightarrow\infty}\l(g,H_A^{1/2}u_n\r)=
(g,\psi),
$$
thus $u\in D(H_A^{1/2})$ and $H_A^{1/2}u=\psi$. Then $u\in D(\mathfrak q_-)$ and for any $g\in C^\infty_0(\Rd)$
$$
(\eta,g)=\lim_{n\rightarrow \infty}\l(V_{n,-}^{1/2}u_n,g\r)=\lim_{n\rightarrow \infty}\l(u_n,V_{n,-}^{1/2}g\r)=
\l(u,V_{-}^{1/2}g\r)=\l(V_{-}^{1/2}u,g\r),
$$
implying $V_-^{1/2}u=\eta$.

It follows that for every $g\in C^\infty_0(\Rd)$ we have
$$
(g,f)=(g,(H_n+i)u_n)=\mathfrak h_n(g,u_n)-i(g,u_n)=
$$
$$
=\l(H_A^{1/2}g,H_A^{1/2}u_n\r)-\l(V_{n,-}^{1/2}g,V^{1/2}_{n,-}u_n\r)
-i(g,u_n)\rightarrow \mathfrak h(g,u)-i(g,u).
$$
Consequently, $u\in D(H)$ and $(H+i)u=f$. Thus the sequence $(u_n)_{n\ge 1}$ has the single limit point
$u=(H+i)^{-1}f$ for the weak topology of $L^2(\Rd)$. It follows that $(H_n\pm i)^{-1}f\rightarrow
(H\pm i)^{-1}f$ weakly in $L^2(\Rd)$ for $n\rightarrow\infty$.

By the resolvent identity we get
$$
\parallel(H_n+i)^{-1}f\parallel^2=\frac{i}{2}\l((f,(H_n-i)^{-1}f)-(f,(H_n+i)^{-1}f)\r)\rightarrow
\parallel(H+i)^{-1}f\parallel^2,
$$
therefore $(H_n+i)^{-1}f\rightarrow (H+i)^{-1}f$ in $L^2(\Rd)$.
\end{proof}

A direct consequence of Lemma \ref{pr} and Theorem VIII.20 from \cite{RS} is

\begin{corollary}
Under the hypothesis of Lemma \ref{pr}, for any function $f$ bounded and continuous on $\mathbb R$ and
any $u\in L^2(\Rd)$, we have $f(H_n)u\rightarrow f(H)u$.
\end{corollary}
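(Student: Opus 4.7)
The plan is to recognize the Corollary as an immediate application of a classical theorem from the functional calculus of unbounded self-adjoint operators. Lemma \ref{pr} has just delivered the input: $H_n\to H$ in the strong resolvent sense, i.e.\ $(H_n\pm i)^{-1}f\to(H\pm i)^{-1}f$ in $L^2(\Rd)$ for every $f\in L^2(\Rd)$. Theorem VIII.20 of Reed--Simon then states that strong resolvent convergence of self-adjoint operators is equivalent to strong convergence $f(H_n)u\to f(H)u$ for every bounded continuous $f:\mathbb R\to\mathbb C$ and every $u\in L^2(\Rd)$. Invoking this directly gives the corollary, so almost no new work is needed beyond a clean citation.

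If one prefers to verify the functional-calculus step by hand rather than cite it, I would proceed in three stages. First, the strong resolvent convergence extends via elementary algebra (products, sums, complex conjugation) to $p\bigl((H_n+i)^{-1},(H_n-i)^{-1}\bigr)u\to p\bigl((H+i)^{-1},(H-i)^{-1}\bigr)u$ for every polynomial $p$. By Stone--Weierstrass, the $*$-algebra generated by the two resolvent functions $x\mapsto(x\pm i)^{-1}$ is uniformly dense in $C_\infty(\mathbb R)$, so $g(H_n)u\to g(H)u$ strongly for every $g\in C_\infty(\mathbb R)$.

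Second, to pass from $C_\infty(\mathbb R)$ to bounded continuous $f$, I would exploit the uniform lower bound on the family $\{H_n\}$ recorded in the proof of Lemma \ref{pr}: estimate (\ref{forme}) yields a constant $\beta>0$ independent of $n$ with $H_n\ge -\beta$, and the same bound holds for $H$. Pick a smooth cutoff $\chi_k\in C_c(\mathbb R)$ with $0\le\chi_k\le 1$ and $\chi_k\equiv 1$ on $[-\beta,k]$, so that $f\chi_k\in C_\infty(\mathbb R)$ and hence $(f\chi_k)(H_n)u\to(f\chi_k)(H)u$ strongly by the first step. It remains to control the tail $\|f(1-\chi_k)(H_n)u\|$ uniformly in $n$.

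The only real obstacle—and it is a mild one—is this uniformity. I would handle it by a spectral-theorem comparison: $\|f(1-\chi_k)(H_n)u\|^2\le\|f\|_\infty^2\,\langle u,\mathbf 1_{(k,\infty)}(H_n)u\rangle$, and on the spectral subset $\{\lambda>k\}$ one has $1\le(k+\beta+1)(H_n+\beta+1)^{-1}$, giving $\langle u,\mathbf 1_{(k,\infty)}(H_n)u\rangle\le(k+\beta+1)\bigl\langle u,(H_n+\beta+1)^{-1}u\bigr\rangle$; the right-hand side converges as $n\to\infty$ to the analogous expression for $H$ by strong resolvent convergence, and the resulting uniform-in-$n$ tail tends to $0$ as $k\to\infty$ by monotone convergence for the limit measure. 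A standard $3\varepsilon$-argument then closes the proof. In the write-up proper I would simply cite Theorem VIII.20 of \cite{RS} and avoid this book-keeping.
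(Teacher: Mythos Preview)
Your primary proof---invoking Theorem VIII.20 of Reed--Simon on top of the strong resolvent convergence from Lemma \ref{pr}---is exactly what the paper does; the corollary is recorded there as ``a direct consequence of Lemma \ref{pr} and Theorem VIII.20 from \cite{RS}'' with no further argument.

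One slip in your optional by-hand sketch, in case you ever want to use it: the inequality ``on $\{\lambda>k\}$ one has $1\le(k+\beta+1)(H_n+\beta+1)^{-1}$'' is backwards (for $\lambda>k$ the quotient $(k+\beta+1)/(\lambda+\beta+1)$ is strictly less than $1$), and the resulting bound $(k+\beta+1)\langle u,(H_n+\beta+1)^{-1}u\rangle$ diverges as $k\to\infty$ rather than vanishing. The usual repair is to apply your first step to a $C_\infty$ cutoff $g_k$ equal to $1$ on $[-\beta,k]$, so that $\langle u,(1-g_k)(H_n)u\rangle\to\langle u,(1-g_k)(H)u\rangle$ for each fixed $k$, and then run the $3\varepsilon$ argument by first choosing $k$ large (so the $H$-tail is small) and then $n$ large.
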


Approximating $V_-$ is done by the standard procedures: cutoffs and regularization. The first of the lemmas
below is obvious.

\begin{lemma}\label{cut-off-a}
 Let $V_-\in L^1_{\mathsf{loc}}(\Rd)$ with $V_-\geq0$ and assume that its associated sesquilinear form is
 $\mathfrak{h}_0$-bounded with relative bound strictly less then 1. Let $\theta\in C_0^\infty([0,\infty))$
 satisfy the following: $0\leq\theta\leq1$, $\theta$ is a decreasing function, $\theta(t)=1$ for
 $t\in[0,1]$ and $\theta(t)=0$ for $t\in[2,\infty)$.

 If we denote by $\theta^{n}(x):=\theta(|x|/n)$ and
 $V_-^{n}=\theta^{n}V_-$, then $V_-^{n}\rightarrow V_-$ in $L^1_{\mathsf{loc}}(\Rd)$,
 $0\leq V_-^{n}\leq V_-^{n+1}$ and the sesquilinear forms associated to
 $V_-^{n}$ are $\mathfrak{h}_0$-bounded with relative bound strictly less then 1, uniformly in $n\in\mathbb{N}^*$,
 .

 Moreover, if we denote by $\mathfrak{h}^{n}$ the sesquilinear
 form associated to the operator $H_A\dotplus (-V_-^{n})$, we have
 $\mathfrak{h}^{(n)}\geq\mathfrak{h}^{(n+1)}\geq\mathfrak{h}$ and $\mathfrak{h}^{(n)}(u)
 \underset{n\rightarrow\infty}{\rightarrow}\mathfrak{h}(u)$ for any $u\in\mathcal{D}(\mathfrak{h}_A)$.

 If, in addition, $V_-\in L^p(\Rd)$, $p\ge 1$, then $V^n_-\in L^p_{\rm{comp}}(\Rd)$,
 $\| V^n_-\|_{L^p}\le\| V_-\|_{L^p}$ for any $n\ge 1$, and $V^n_-\rightarrow V_-$
 in $L^p(\Rd)$.
\end{lemma}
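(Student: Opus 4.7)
My plan is to dispatch the five assertions one after another; each is a one-line consequence of an elementary property of the cutoff $\theta^n$ combined with monotone or dominated convergence, which is why the authors flag the lemma as obvious.

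The crucial pointwise facts are that $\theta^n(x) = \theta(|x|/n)$ satisfies $0 \leq \theta^n \leq \theta^{n+1} \leq 1$ everywhere (since $\theta$ is decreasing and $|x|/n \geq |x|/(n+1)$), $\theta^n \equiv 1$ on $\overline{B(0,n)}$, and $\theta^n \equiv 0$ outside $\overline{B(0,2n)}$. Multiplying by $V_- \geq 0$ gives the chain $0 \leq V_-^n \leq V_-^{n+1} \leq V_-$ with $V_-^n \to V_-$ pointwise, and in fact $V_-^n = V_-$ on any fixed ball for $n$ large. Assertion~1 follows because on any compact $K$ one has $V_-^n = V_-$ eventually, so $\|V_-^n - V_-\|_{L^1(K)} = 0$ for $n$ large.

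For the uniform relative bound of assertion~3, the pointwise inequality $V_-^n \leq V_-$ yields $\mathfrak{q}_{V_-^n}(u) \leq \mathfrak{q}_{V_-}(u)$, so the same constants $a<1$ and $b$ witnessing the $\mathfrak{h}_0$-bound of $\mathfrak{q}_{V_-}$ work for every $\mathfrak{q}_{V_-^n}$, uniformly in $n$. For assertion~4 the monotonicity of the forms $\mathfrak{h}^n = \mathfrak{h}_A - \mathfrak{q}_{V_-^n}$ on the common domain $\mathcal{D}(\mathfrak{h}_A)$ is immediate from $\mathfrak{q}_{V_-^n} \leq \mathfrak{q}_{V_-^{n+1}} \leq \mathfrak{q}_{V_-}$. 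Pointwise convergence $\mathfrak{h}^n(u) \to \mathfrak{h}(u)$ for $u \in \mathcal{D}(\mathfrak{h}_A)$ reduces to $\mathfrak{q}_{V_-^n}(u) \to \mathfrak{q}_{V_-}(u)$, which is the monotone convergence theorem applied to $V_-^n |u|^2 \nearrow V_- |u|^2$; the limit is integrable thanks to the relative bound, which, as in the proof of Proposition~\ref{Ham-rel-pert} together with the diamagnetic inequality of Proposition~\ref{diamagnetic}, guarantees $\mathcal{D}(\mathfrak{h}_A) \subset \mathcal{D}(\mathfrak{q}_{V_-})$.

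The additional $L^p$ claim is equally direct: $V_-^n$ has support in the closed ball of radius $2n$ since $\theta^n$ does; $\|V_-^n\|_{L^p} \leq \|V_-\|_{L^p}$ because $\theta^n \leq 1$; and $V_-^n \to V_-$ in $L^p$ by dominated convergence with majorant $V_- \in L^p$ and pointwise limit $\theta^n \to 1$. There is no genuine obstacle anywhere in the argument; the only mildly non-trivial point is invoking the diamagnetic inequality to ensure that the limiting form $\mathfrak{h}$ is finite on all of $\mathcal{D}(\mathfrak{h}_A)$.
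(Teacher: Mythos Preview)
Your proof is correct and fills in precisely the elementary verifications the authors had in mind when they declared the lemma ``obvious'' without proof. Each step---the monotone chain from the monotonicity of $\theta$, the uniform relative bound from $V_-^n\le V_-$, the form convergence via monotone convergence, and the $L^p$ claims via dominated convergence---is the natural argument, and your remark about needing $\mathcal D(\mathfrak h_A)\subset\mathcal D(\mathfrak q_{V_-})$ (handled as in Proposition~\ref{Ham-rel-pert}) is the only point requiring any care.
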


\begin{lemma}\label{ultima}
(a) Let $V_-\in L^1_{\rm{loc}}(\Rd)$, $V_-\ge 0$ and $\mathfrak h_0$-bounded with relative bound $<1$.
Let $\theta\in C^\infty_0(\Rd)$, $\theta\ge 0$ and $\int_{\Rd}\theta=1$. We set $\theta_n(x):=n^{d}\theta(nx)$,
$x\in\Rd$, $n\in \mathbb N^*$ and $V_{n,-}:=V_-\ast \theta_n\in C^\infty_0$. In particular, $V_{n,-}\in C^\infty_0(\Rd)$
if $V_-\in L^1_{\rm{comp}}(\Rd)$.

Then $V_{n,-}\rightarrow V_-$ in $L^1_{\rm{loc}}(\Rd)$ for $n\rightarrow\infty$ and the functions $V_{n,-}$ are non-negative and uniformly
$h_0$-bounded, with relative bound $<1$. Moreover, $\mathfrak h_n(u)\rightarrow \mathfrak h(u)$
for any $u\in D(\mathfrak h_A)$, where $\mathfrak h_n$
is the quadratic form associated to $H_n:=H_A\overset{\cdot}{+}(-V_n)$.

(b) If, in addition, $V_-\in L^p(\Rd)$ with $p\ge 1$, then $V_{n,-}\in L^p(\Rd)\cap C^\infty(\Rd)$,
$\parallel V_{n,-}\parallel_{L^p}\le\parallel V_-\parallel_{L^p}$, $\forall n\ge 1$ and $V_{n,-}\rightarrow V_-$
in $L^p(\Rd)$.
\end{lemma}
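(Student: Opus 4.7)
The plan is to separate the pointwise and integration-theoretic statements, which are standard mollifier facts, from the form-theoretic statements, and to reduce the latter to a single averaging identity that expresses $\mathfrak{q}_{V_{n,-}}$ as a convex combination of translates of $\mathfrak{q}_{V_-}$.

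First I would dispose of the easy parts. The inequality $V_{n,-}\ge 0$ holds because both factors are non-negative; $V_{n,-}\in C^\infty(\Rd)$ follows from differentiation under the integral sign; and $\mathrm{supp}\,V_{n,-}\subset\mathrm{supp}\,V_-+\mathrm{supp}\,\theta_n$ yields compact support whenever $V_-$ does. The $L^1_{\mathrm{loc}}$ convergence $V_{n,-}\to V_-$ is the classical Friedrichs mollifier theorem. For part (b), Young's inequality combined with $\|\theta_n\|_{L^1}=1$ gives $\|V_{n,-}\|_{L^p}\le\|V_-\|_{L^p}$, and the convergence $V_{n,-}\to V_-$ in $L^p(\Rd)$ is again a standard property of mollification.

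The heart of the argument is the identity
\begin{equation*}
\mathfrak{q}_{V_{n,-}}(u)\ =\ \int_{\Rd}\theta_n(z)\,\mathfrak{q}_{V_-}\bigl(u(\cdot+z)\bigr)\,dz,\qquad u\in\mathcal{H}^{1/2}(\Rd),
\end{equation*}
obtained by writing $\mathfrak{q}_{V_{n,-}}(u)=\int(V_-*\theta_n)(x)|u(x)|^2\,dx$, applying Fubini, and substituting $z=x-y$. Since $H_0=h(D)$ is a Fourier multiplier, it commutes with translations, so $\mathfrak{h}_0(u(\cdot+z))=\mathfrak{h}_0(u)$ and $\|u(\cdot+z)\|=\|u\|$. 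Plugging the hypothesis $\mathfrak{q}_{V_-}(v)\le\alpha\,\mathfrak{h}_0(v)+\beta\|v\|^2$ with $\alpha<1$ into the identity with $v=u(\cdot+z)$ and using $\int\theta_n=1$ then produces the same bound $\mathfrak{q}_{V_{n,-}}(u)\le\alpha\,\mathfrak{h}_0(u)+\beta\|u\|^2$ with constants independent of $n$, giving the uniform relative bound.

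For the convergence $\mathfrak{h}_n(u)\to\mathfrak{h}(u)$ on $\mathcal{D}(\mathfrak{h}_A)\subset\mathcal{H}^{1/2}(\Rd)$, since the magnetic form $\mathfrak{h}_A$ appears identically on both sides it suffices to show $\mathfrak{q}_{V_{n,-}}(u)\to\mathfrak{q}_{V_-}(u)$. Using the same identity, I would appeal to strong continuity of translations on $\mathcal{H}^{1/2}(\Rd)$ together with the fact that $\mathfrak{q}_{V_-}$ is a bounded (hence continuous) sesquilinear form on $\mathcal{H}^{1/2}(\Rd)$ by the relative bound; this makes $z\mapsto\mathfrak{q}_{V_-}(u(\cdot+z))$ continuous at the origin with uniform majorant $\alpha\,\mathfrak{h}_0(u)+\beta\|u\|^2$, so dominated convergence against the approximate identity $\theta_n$ yields the limit $\mathfrak{q}_{V_-}(u)$.

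The most delicate bookkeeping I expect is the careful justification of Fubini in the averaging identity, which hinges on observing that $(y,z)\mapsto V_-(y)\theta_n(z)|u(y+z)|^2$ is integrable (via the relative bound applied slice-by-slice in $z$ and the compact support of $\theta_n$). Once that is secured, both form-theoretic assertions follow from the translation invariance of $H_0$.
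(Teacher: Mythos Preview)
Your proposal is correct and follows essentially the same route as the paper: the central device in both is the averaging identity $\mathfrak{q}_{V_{n,-}}(u)=\int_{\Rd}\theta_n(y)\,\mathfrak{q}_{V_-}(u(\cdot+y))\,dy$ combined with the translation invariance of $H_0^{1/2}$ to transfer the relative bound uniformly to the mollified potentials, and the $L^p$ estimates are handled by Minkowski/Young in the obvious way. You actually supply more detail than the paper on the convergence $\mathfrak{h}_n(u)\to\mathfrak{h}(u)$, which the paper's proof leaves implicit; your use of strong continuity of translations on $\mathcal{H}^{1/2}(\Rd)$ plus dominated convergence is the right way to close that step (and is unproblematic here since in Section~5 one has $A=0$, so $\mathcal{D}(\mathfrak{h}_A)=\mathcal{H}^{1/2}(\Rd)$).
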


\begin{proof}
(a) We have for any $x\in\Rd$
\begin{equation}\label{needed}
V_{n,-}(x)=\int_{\Rd}dy\,\theta_n(y)V_-(x-y)=\int_{\Rd}dy\,\theta(y)V_-(x-n^{-1}y).
\end{equation}

By the Dominated Convergence Theorem, for any compact $K\subset \Rd$
$$
\int_{K}dx\,\vert V_{n,-}(x)-V_-(x)\vert\le\int_{\Rd}dy\,\theta(y)\int_K dx\,\vert V_-(x-n^{-1}y)-V_-(x)\vert
\rightarrow 0,
$$
hence $V_{n,-}$ converges to $V_-$ in $L^1_{\rm{loc}}(\Rd)$ when $n\rightarrow\infty$.

If $V_-$ is relatively small with respect to $\mathfrak{h}_0$, we use the fact that $H_0^{1/2}$ is a convolution operator (hence it commutes with
translations) and using the comments after inequality (\ref{forme}), we deduce that for any $u\in C_0^\infty(\Rd)$ there exists $\alpha\in(0,1)$ and $\beta\geq0$ such that
$$
\int_{\Rd}dx\,V_{n,-}|u|^2=\int_{\Rd}dy\,\theta_n(y)\int_{\Rd}dz\, V_-(z)|u(z+y)|^2\le
$$
$$
\le\int_{\Rd}dy\,\theta_n(y)\l[\alpha\parallel H_0^{1/2}u(\cdot+y)\parallel^2+\beta
\parallel u(\cdot+y)\parallel^2\r]=
$$
$$
=\alpha\parallel H_0^{1/2}u\parallel^2
+\beta\parallel u\parallel^2.
$$

(b) From (\ref{needed}) it follows that
$$
\parallel V_{n,-}\parallel_{L^p}\le\int_{\Rd}dy\,\theta_n(y)\parallel V_-(\cdot-y)
\parallel_{L^p}\leq\parallel V_-\parallel_{L^p}.
$$
Also, using the Dominated Convergence Theorem, we infer that
$$
\parallel V_{n,-}-V_-\parallel_{L^p}\le\int_{\Rd}dy\,\theta(y)\parallel V_-(\cdot)-V_-(\cdot-n^{-1}y)
\parallel_{L^p}\rightarrow 0.
$$
\end{proof}

Thus  Lemmas \ref{cut-off-a} and \ref{ultima} imply, for a potential function $V_-$ satisfying the hypothesis of the Lemma, the existence of a sequence $(V_{n,-})_{n\ge 1}
\subset C^\infty_0(\Rd)$ such that $V_{n,-}\ge 0$, $\parallel V_{n,-}\parallel_{L^p}\le\parallel V_{-}
\parallel_{L^p}$, $\forall n\ge 1$, $V_{n,-}\rightarrow V_-$ in $L^p(\Rd)$ (for $p=d$ and $p=d/2$) when $n\rightarrow\infty$ and
the functions $V_{n,-}$ are uniformly $\mathfrak h_0$-bounded with relative bound $<1$.

\begin{lemma}\label{saturat}
Assume that there exists a constant $C>0$, such that the inequality
\begin{equation}
N(-V_{n,-})\le C\left(\int_{\Rd}dx\,|V_{n,-}(x)|^d\,+\,\int_{\Rd}dx\,|V_{n,-}(x)|^{d/2}\right)
\end{equation}
holds for any $n\ge 1$. Then one also has
\begin{equation}
N(-V_{-})\le C\left(\int_{\Rd}dx\,|V_{-}(x)|^d\,+\,\int_{\Rd}dx\,|V_{-}(x)|^{d/2}\right).
\end{equation}
\end{lemma}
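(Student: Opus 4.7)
The plan is to pass to the limit in the assumed inequality, using the $L^d\cap L^{d/2}$ convergence $V_{n,-}\to V_-$ supplied by Lemmas \ref{cut-off-a} and \ref{ultima}, together with the strong resolvent convergence $H_n\to H$ established in Lemma \ref{pr} (taken with $A=0$), where $H_n:=H_0\dotplus(-V_{n,-})$ and $H:=H_0\dotplus(-V_-)$. The convergence of the right-hand side is immediate: since $V_{n,-}\to V_-$ in $L^d(\Rd)$ and in $L^{d/2}(\Rd)$, one has
\begin{equation*}
\int_{\Rd} dx\, V_{n,-}(x)^d\ \longrightarrow\ \int_{\Rd} dx\, V_-(x)^d,
\qquad
\int_{\Rd} dx\, V_{n,-}(x)^{d/2}\ \longrightarrow\ \int_{\Rd} dx\, V_-(x)^{d/2},
\end{equation*}
so the full right-hand side converges to the desired expression.

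For the left-hand side, I would invoke the lower semicontinuity of the count of strictly negative eigenvalues under strong resolvent convergence. Fix $\varepsilon>0$ outside the at most countable set of eigenvalues of $H$. By Theorem VIII.24 of \cite{RS}, applied to a bounded continuous approximation of $\chi_{(-\infty,-\varepsilon)}$ whose jump is smoothed near $-\varepsilon$ (harmless precisely because $-\varepsilon$ is not an eigenvalue of $H$), one gets the strong convergence $E_{H_n}((-\infty,-\varepsilon))\to E_H((-\infty,-\varepsilon))$. If $\psi_1,\dots,\psi_k$ are orthonormal eigenvectors of $H$ with eigenvalues in $(-\infty,-\varepsilon)$, then $E_{H_n}((-\infty,-\varepsilon))\psi_j\to\psi_j$ strongly, so for $n$ large these $k$ vectors are nearly orthonormal inside the range of $E_{H_n}((-\infty,-\varepsilon))$. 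Hence
\begin{equation*}
k\ \le\ \liminf_{n\to\infty}\dim\mathrm{Ran}\,E_{H_n}((-\infty,-\varepsilon))\ \le\ \liminf_{n\to\infty}N(-V_{n,-}).
\end{equation*}
Taking the supremum over $k\le\dim\mathrm{Ran}\,E_H((-\infty,-\varepsilon))$ and then letting $\varepsilon\downarrow 0$ along admissible values yields
\begin{equation*}
N(-V_-)\ \le\ \liminf_{n\to\infty}N(-V_{n,-}).
\end{equation*}

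Combining the two steps gives
\begin{equation*}
N(-V_-)\ \le\ \liminf_{n\to\infty}N(-V_{n,-})\ \le\ C\left(\int_{\Rd} dx\, V_-(x)^d+\int_{\Rd} dx\, V_-(x)^{d/2}\right),
\end{equation*}
which is the claimed estimate. I expect the only delicate point to be the lower semicontinuity argument: one must restrict $\varepsilon$ to values that are not eigenvalues of $H$ in order to justify the strong convergence of spectral projections via Theorem VIII.24 of \cite{RS}, but since the excluded set is countable this causes no real difficulty. Everything else is a routine passage to the limit relying on the hypotheses already secured by the earlier approximation lemmas.
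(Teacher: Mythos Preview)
Your proof is correct and follows essentially the same strategy as the paper: both rely on the strong resolvent convergence $H_n\to H$ from Lemma \ref{pr}, pass to convergence of spectral projections $E_{H_n}((-\infty,\lambda))\to E_H((-\infty,\lambda))$ at thresholds $\lambda<0$ avoiding $\sigma(H_-)$, and thereby compare the eigenvalue counts. The only technical difference is that the paper upgrades the strong convergence of projections to \emph{norm} convergence (via \cite{K}, Ch.~VII, Lemmas 1.23--1.24, using that the limit projection has finite rank) and then argues by contradiction using the exact equality $N_\lambda(-V_-)=N_\lambda(-V_{n,-})$ for large $n$, together with the monotone bound $\|V_{n,-}\|_{L^p}\le\|V_-\|_{L^p}$; you instead extract directly the lower-semicontinuity $N(-V_-)\le\liminf_n N(-V_{n,-})$ from the strong convergence alone and use the $L^p$ convergence of $V_{n,-}$ for the right-hand side. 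Your route is slightly more elementary (it avoids the extra citation to \cite{K}), while the paper's gives the marginally stronger intermediate statement that the ranks eventually coincide; either way the conclusion is the same.
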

\begin{proof}
We set $H_{n,-}:=H_0\dotplus (-V_{n,-})$; $(E_{n,-}(\lambda))_{\lambda\in\mathbb R}$ will be the spectral
family of $H_{n,-}$ and $(E_{-}(\lambda))_{\lambda\in\mathbb R}$ the spectral
family of $H_{-}$. For $\lambda<0$, we denote by $N_\lambda(W)$ the number of eigenvalues of $H_0\dotplus W$
which are strictly smaller than $\lambda$ (for any potential function $W$ satisfying the hypothesis at the begining of this section). It suffices to show that for any $\lambda<0$ not belonging to
the spectrum of $H_-$, one has the inequality
\begin{equation}\label{gar}
N_\lambda(-V_{-})\le C\left(\int_{\Rd}dx\,|V_{-}(x)|^d\,+\,\int_{\Rd}dx\,|V_{-}(x)|^{d/2}\right).
\end{equation}
Since $V_{n,-}$ converges to $V_-$ in $L^1_{\rm{loc}}(\Rd)$, cf. Lemma \ref{pr}, $H_{n,-}$ will converge to $H_-$
in strong resolvent sense. By \cite{K}, Ch.VIII, Th.1.15, this implies the strong convergence of $E_{n,-}(\lambda)$
to $E_-(\lambda)$ for any $\lambda\notin\sigma(H_-)$. By Lemmas 1.23 and 1.24 from \cite{K}, Ch.VII, for $\lambda<0$,
$\lambda\notin \sigma(H_-)$, one also has $\parallel E_{n,-}(\lambda)-E_-(\lambda)\parallel\rightarrow 0$. 
Let us suppose that there exists some $\lambda<0$ not belonging to $\sigma(H_-)$ and such that for it the inequality (\ref{gar}) is not verified. Thus for the given $\lambda<0$ we have $\forall n\geq 1$:
$$
N(-V_{n,-})\le C\left(\int_{\Rd}dx\,|V_{-}(x)|^d\,+\,\int_{\Rd}dx\,|V_{-}(x)|^{d/2}\right)< N_\lambda(-V_{-}).
$$
But for $n$ large enough, one has
$
N_\lambda(-V_-)=N_{\lambda}(-V_{n,-})
$
and thus
$$
N_\lambda(-V_-)=N_{\lambda}(-V_{n,-})\,\le\, N(-V_{n,-})\le
$$
$$
\leq C\left(\int_{\Rd}dx\,|V_{n,-}(x)|^d\,+\,\int_{\Rd}dx\,|V_{n,-}(x)|^{d/2}\right)\le
$$
$$
\le C\left(\int_{\Rd}dx\,|V_{-}(x)|^d\,+\,\int_{\Rd}dx\,|V_{-}(x)|^{d/2}\right)
$$
that is a contradiction with our initial hypothesis.
\end{proof}

\subsection{Proof of the Theorem \ref{Main} for $\boldsymbol{B=0}$}

We shall assume from now on that $V_+=0$ and $0\le V_-\in C^\infty_0(\Rd)$. We check a Birman-Schwinger principle.
For $\alpha>0$ we set $K_\alpha:=V_-^{1/2}(H_0+\alpha)^{-1}V_-^{1/2}$; it is a positive compact operator
on $L^2(\Rd)$.

\begin{lemma}\label{olema}
\begin{equation}\label{olem}
N_{-\alpha}(-V_-)\le\;\card\{\mu>1\mid \mu \ {\rm eigenvalue\  of}\  K_\alpha\}.
\end{equation}
\end{lemma}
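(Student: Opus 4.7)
The plan is to run the standard Birman--Schwinger argument adapted to the form-sum $H_-=H_0\dotplus(-V_-)$. Set $N:=N_{-\alpha}(-V_-)$, and let $\mathcal{M}$ be the spectral subspace of $H_-$ corresponding to $(-\infty,-\alpha)$. By hypothesis $\dim\mathcal{M}=N$ and $\mathcal{M}\subset\mathcal{D}(\mathfrak{h}_-)=\mathcal{D}(\mathfrak{h}_0)$. By the min--max principle applied to $H_-$, for every nonzero $\psi\in\mathcal{M}$ one has $\mathfrak{h}_0(\psi)-\mathfrak{q}_{V_-}(\psi)<-\alpha\|\psi\|^2$, i.e.
$$\mathfrak{q}_{V_-}(\psi)\;>\;\langle\psi,(H_0+\alpha)\psi\rangle.$$

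Next I would perform the substitution $\phi:=(H_0+\alpha)^{1/2}\psi$. Since $(H_0+\alpha)^{1/2}$ is a bijection from $\mathcal{D}(\mathfrak{h}_0)$ onto $L^2(\Rd)$, the image $\widetilde{\mathcal{M}}:=(H_0+\alpha)^{1/2}\mathcal{M}$ is an $N$-dimensional subspace of $L^2(\Rd)$. Because $V_-\in C_0^\infty(\Rd)$ is bounded, $V_-^{1/2}$ is a bounded multiplication operator and $\mathfrak{q}_{V_-}(\psi)=\langle \phi,\widetilde K_\alpha\phi\rangle$ with
$$\widetilde K_\alpha\;:=\;(H_0+\alpha)^{-1/2}V_-(H_0+\alpha)^{-1/2}.$$
The previous inequality thus reads $\langle\phi,\widetilde K_\alpha\phi\rangle>\|\phi\|^2$ for every $\phi\in\widetilde{\mathcal{M}}\setminus\{0\}$.

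The third step invokes the min--max principle for the bounded self-adjoint compact operator $\widetilde K_\alpha$: the existence of an $N$-dimensional subspace on which $\widetilde K_\alpha>\mathrm{Id}$ in the quadratic-form sense forces $\widetilde K_\alpha$ to have at least $N$ eigenvalues strictly greater than $1$, counted with multiplicity. Finally, setting $A:=V_-^{1/2}(H_0+\alpha)^{-1/2}\in\mathbb{B}(L^2(\Rd))$, we have the factorisations $K_\alpha=AA^*$ and $\widetilde K_\alpha=A^*A$, so $K_\alpha$ and $\widetilde K_\alpha$ share the same nonzero spectrum with the same multiplicities. Consequently $K_\alpha$ also has at least $N$ eigenvalues strictly greater than $1$, which is precisely (\ref{olem}).

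There is no real obstacle in this argument: the delicate questions of form-domain inclusions and compactness of $K_\alpha$ have already been settled in Proposition \ref{HMagnSEss} and in the preceding reduction to $V_-\in C_0^\infty(\Rd)$, which makes $V_-^{1/2}$ bounded and $(H_0+\alpha)^{1/2}$ a bona fide bijection between $\mathcal{D}(\mathfrak{h}_0)$ and $L^2(\Rd)$. The only point worth stating explicitly in the proof is the identification $\mathfrak{q}_{V_-}(\psi)=\langle\phi,\widetilde K_\alpha\phi\rangle$ after the change of variable, since the left-hand side is defined via the quadratic form while the right-hand side is a genuine inner product with the compact operator $\widetilde K_\alpha$.
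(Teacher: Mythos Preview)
Your argument is correct. It is, however, a genuinely different route from the one taken in the paper. The paper proves the lemma via the \emph{coupling--constant} version of Birman--Schwinger: it introduces the eigenvalue branches $\mu_n(\lambda)$ of $H_0-\lambda V_-$, uses their continuity and strict monotonicity to show that each $n$ with $\mu_n(1)<-\alpha$ determines a unique $\lambda\in(0,1)$ with $\mu_n(\lambda)=-\alpha$, and then converts the eigenvalue equation $(H_0+\alpha)\varphi=\lambda V_-\varphi$ into $K_\alpha\psi=\lambda^{-1}\psi$ with $\psi=V_-^{1/2}\varphi$, producing an eigenvalue $\lambda^{-1}>1$ of $K_\alpha$ for each such $n$.

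Your proof instead runs the \emph{variational} Birman--Schwinger argument: you transport the spectral subspace $\mathcal M$ of $H_-$ through $(H_0+\alpha)^{1/2}$, land on an $N$-dimensional subspace on which $\widetilde K_\alpha>\mathrm{Id}$, apply min--max to the compact operator $\widetilde K_\alpha$, and finish with the $AA^*/A^*A$ isospectrality to pass from $\widetilde K_\alpha$ to $K_\alpha$. This is shorter and sidesteps the discussion of continuity/monotonicity of the branches $\mu_n(\cdot)$; on the other hand, the paper's approach yields an explicit one--to--one correspondence between eigenvalues of $H_-$ below $-\alpha$ and eigenvalues of $K_\alpha$ above $1$, which is slightly more informative even though only the counting inequality is needed here. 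One cosmetic remark: the strict inequality $\mathfrak h_-(\psi)<-\alpha\|\psi\|^2$ on $\mathcal M$ follows directly from the spectral theorem (since $\mathcal M$ is the spectral subspace for $(-\infty,-\alpha)$), so invoking ``the min--max principle'' there is not quite the right label, though the conclusion is of course correct.
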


\begin{proof}
We introduce the sequence of functions $\mu_n:[0,\infty)\rightarrow (-\infty,0]$, $n\ge 1$, where $\mu_n(\lambda)$ is
the n'th eigenvalue of $H_0-\lambda V_-$ if this operator has at least $n$ strictly negative eigenvalues and
$\mu_n(\lambda)=0$ if not. Cf. \cite{RS} \S XIII.3, $\mu_n$ is continuous and decreasing (even strictly
decreasing on intervals on which it is strictly negative). Obviously, we have
$N_{-\alpha}(-V_-)\le\,\card\{n\ge 1\mid \mu_n(1)<-\alpha\}$. Now fix some $n$ such that $\mu_n(1)<-\alpha$ and recall
that $\mu_n(0)=0$. The function $\mu_n$ is continuous and injective on the interval $[\epsilon_n,1]$,
where $\epsilon_n:=\sup\{\lambda\ge 0\mid \mu_n(\lambda)=0\}$, therefore it exists a unique $\lambda\in(0,1)$
such that $\mu_n(\lambda)=-\alpha$. Thus
$$
N_{-\alpha}(-V_-)=\;\card\{\lambda\in(0,1)\mid \exists n\ge 1\ s.t.\ \mu_n(\lambda)=-\alpha\}=
$$
$$
=\;\card\{\lambda\in(0,1)\mid \exists \varphi\in D(H_0)\setminus\{0\}\ s.t.\ (H_0-\lambda V_-)\varphi=
-\alpha\varphi\}\le
$$
$$
\le\;\card\{\lambda\in(0,1)\mid \exists \psi\in L^2(\Rd)\setminus\{0\}\ s.t.\ K_\alpha\psi=\lambda^{-1}\psi\},
$$
where for the last inequality we set $\psi:=V_-^{1/2}\varphi$, noticing that the equality
$(H_0+\alpha)\varphi=\lambda V_-\varphi$ implies $\psi\ne 0$.
\end{proof}

\begin{lemma}\label{dolema}
Let $F:[0,\infty)\rightarrow[0,\infty)$ be a strictly increasing continuous function with $F(0)=0$.
Then $F(K_\alpha)$ is a positive compact operator and the next inequality holds:
$$
N_{-\alpha}(-V_-)\le F(1)^{-1}\sum_{F(\mu)\in\sigma[F(K_\alpha)], F(\mu)> F(1)}F(\mu).
$$
\end{lemma}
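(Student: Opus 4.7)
The plan is to reduce the statement to Lemma~\ref{olema} by a direct application of the functional calculus for compact self-adjoint operators. Since $K_\alpha=V_-^{1/2}(H_0+\alpha)^{-1}V_-^{1/2}$ is positive and compact on $L^2(\Rd)$, it admits a complete system of eigenvectors with eigenvalues $\mu_1\ge\mu_2\ge\dots\ge 0$ (repeated according to multiplicity), with $\mu_j\to 0$. First I would invoke the continuous functional calculus for bounded self-adjoint operators: because $F$ is continuous on $[0,\|K_\alpha\|]$, the operator $F(K_\alpha)$ is well-defined, self-adjoint and non-negative (since $F\ge 0$ on $[0,\infty)$), and its spectrum is $\{F(\mu_j)\}\cup\{F(0)\}=\{F(\mu_j)\}\cup\{0\}$; since $\mu_j\to 0$ and $F$ is continuous with $F(0)=0$, we get $F(\mu_j)\to 0$, hence $F(K_\alpha)$ is compact.

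Next I would combine strict monotonicity of $F$ with Lemma~\ref{olema}. Strict monotonicity gives the equivalence $\mu_j>1\Longleftrightarrow F(\mu_j)>F(1)$, so counting with multiplicity
\begin{equation*}
\#\{j\mid \mu_j>1\}\ =\ \#\{j\mid F(\mu_j)>F(1)\}.
\end{equation*}
Every eigenvalue $F(\mu_j)$ appearing in the sum on the right-hand side of the claimed inequality satisfies $F(\mu_j)>F(1)$, and therefore
\begin{equation*}
F(1)\cdot\#\{j\mid F(\mu_j)>F(1)\}\ \le\ \sum_{F(\mu_j)>F(1)}F(\mu_j).
\end{equation*}
Putting these two observations together and invoking Lemma~\ref{olema},
\begin{equation*}
N_{-\alpha}(-V_-)\ \le\ \#\{j\mid \mu_j>1\}\ =\ \#\{j\mid F(\mu_j)>F(1)\}\ \le\ F(1)^{-1}\!\!\sum_{F(\mu_j)>F(1)}\!\!F(\mu_j),
\end{equation*}
which is the desired bound.

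There is no real obstacle here: the only minor point of care is to make sure the eigenvalues are counted with multiplicity consistently on both sides, and that $F(1)>0$ (which follows from strict monotonicity and $F(0)=0$) so that division by $F(1)$ is legitimate. The substantive work of the section lies rather in choosing a specific $F$ (to be done in the sequel, likely an appropriate power or a function tailored so that $\mathrm{Tr}\,F(K_\alpha)$ can be estimated by the Feynman--Kac representation of $e^{-tH_0}$ together with the Feynman--Kac--It\^o formula of Section~\ref{sec:FKI}), but that choice is not part of the present lemma.
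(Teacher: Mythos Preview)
Your proof is correct and follows essentially the same route as the paper: invoke Lemma~\ref{olema}, use the strict monotonicity of $F$ to convert the count of eigenvalues $\mu>1$ into the count of $F(\mu)>F(1)$, and then bound that count by $F(1)^{-1}$ times the corresponding sum. You give a little more detail on the compactness of $F(K_\alpha)$ and on $F(1)>0$ than the paper does, but the argument is otherwise identical.
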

\begin{proof}
The first part is obvious. Using (\ref{olem}) and $F$'s monotony, we get
$$
N_{-\alpha}(-V_-)\le\sharp\{\mu>1\mid\mu\in\sigma(K_\alpha)\}=
\;\card\{F(\mu)\mid\mu>1, F(\mu)\in\sigma[F(K_\alpha)]\}=
$$
$$
\sum_{\mu>1, F(\mu)\in\sigma[F(K_\alpha)]}\frac{F(\mu)}{F(\mu)}\le
F(1)^{-1}\sum_{\mu>1, F(\mu)\in\sigma[F(K_\alpha)]}F(\mu).
$$
\end{proof}
So, we shall be interested in finding functions $F$ having the properties in the statement above, such that
$F(K_\alpha)\in B_1$ (the ideal of trace-class operators in $L^2(\Rd)$)
and such that $\rm{Tr}\l[F(K_\alpha)\r]$ is conveniently estimated.

Using an idea from \cite{S1}, we are going to consider functions of the form
$$
F(t):=t\int_0^{\infty}ds\,e^{-s}g(ts),\ \ t\ge 0,
$$
where $g:[0,\infty)\rightarrow[0,\infty)$ is continuous, bounded and $g\equiv\!\!\!\!\!\!\!\!\diagup\, 0$. Plainly,
$F:[0,\infty)\rightarrow[0,\infty)$ is continuous, $F(0)=0$, satisfies $F(t)\le Ct$ for some $C>0$ and the identity
$$
F(t)=\int_0^{\infty}dr\,e^{-rt^{-1}}g(r)
$$
implies that $F$ is strictly increasing. We shall use the notations $F=\Phi(g)$, $\tilde{g}(t):=tg(t)$.

In particular, $g_\lambda(t)=e^{-\lambda t}$,
$\lambda>0$ leads to $F_\lambda(t)=t(1+\lambda t)^{-1}$. In the sequel,
relations valid for this particular case will be extended to the following case, that we shall be interested in:
\begin{equation}\label{util}
g_\infty:[0,\infty)\rightarrow[0,\infty),\ \  g_\infty(t)=0\ {\rm if} \ 0\le t\le 1, \ \ g_\infty(t)=1-1/t\ {\rm
if}\ t>1,
\end{equation}
by using an approximation that we now introduce. The first lemma is obvious.
\begin{lemma}\label{la}
Let $g_\infty$ be given by (\ref{util}). For $n\ge 1$ we define $g_n:[0,\infty)\rightarrow[0,1]$,
$g_n(t)=g(t)$ for $0\le t\le n$, $g_n(t)=\frac{2n-1}{t}-1$ for $n\le t\le 2n-1$, $g_n(t)=0$ for $t\ge 2n-1$.
Then $g_n\in C_0((0,\infty))$, $0\le g_n\le g_{n+1}\le g_\infty$, $\forall n$ and $g_n\rightarrow g_\infty$ when
$n\rightarrow\infty$ uniformly on any compact subset of $[0,\infty)$.
\end{lemma}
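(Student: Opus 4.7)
The statement is essentially a bookkeeping check, and since the author himself calls it obvious, the plan is simply to verify each of the three conclusions by splitting the half-line into the relevant subintervals $[0,1]$, $[1,n]$, $[n,2n-1]$, $[2n-1,\infty)$ and comparing piecewise.

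First I would confirm that $g_n$ is well-defined and continuous on $[0,\infty)$ by checking the matching at the breakpoints: at $t=1$ both the zero piece and $g_\infty(t)=1-1/t$ give $0$; at $t=n$ one computes $1-1/n$ from either side; at $t=2n-1$ the formula $(2n-1)/t - 1$ vanishes, agreeing with the zero tail. Since $g_n$ is identically zero off $[1,2n-1]$, it has compact support in $(0,\infty)$, which is the content of $g_n\in C_0((0,\infty))$.

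Next I would verify the chain of inequalities $0\le g_n \le g_{n+1}\le g_\infty$ piecewise. On $[0,n]$ the functions $g_n$ and $g_{n+1}$ both coincide with $g_\infty$, so nothing happens. Non-negativity of $g_n$ on $[n,2n-1]$ is exactly the condition $t\le 2n-1$. For $g_n\le g_\infty$ on $[n,2n-1]$ one reduces $(2n-1)/t-1 \le 1-1/t$ to $2n\le 2t$, which holds since $t\ge n$; this is the only mildly non-trivial algebraic step. For $g_n\le g_{n+1}$ the same reduction handles the overlap $[n,n+1]$, while on $[n+1,2n-1]$ one compares two functions of the form $\alpha/t-1$ with $\alpha_n=2n-1<2n+1=\alpha_{n+1}$, and on $[2n-1,2n+1]$ one has $g_n=0\le g_{n+1}$.

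For the uniform convergence, the crucial observation is stronger than needed: $g_n$ agrees identically with $g_\infty$ on the whole interval $[0,n]$, not merely converges to it there. Thus if $K\subset[0,\infty)$ is compact and $M:=\sup K<\infty$, then $g_n|_K = g_\infty|_K$ for every $n\ge M$, so $\|g_n-g_\infty\|_{\infty,K}$ is eventually zero. There is no genuine obstacle in the argument; the entire proof is a sequence of piecewise comparisons with no analytic content, and the single ``delicate'' line is the equivalence $(2n-1)/t-1\le 1-1/t \Leftrightarrow t\ge n$.
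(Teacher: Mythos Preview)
Your verification is correct and is exactly the routine piecewise check the paper has in mind; the paper itself gives no proof beyond declaring the lemma obvious, so there is nothing to compare against beyond the direct computation you carried out. The one edge case worth a sentence is $n=1$, where $2n-1=1$ collapses the middle interval and $g_1\equiv 0$, but this fits your scheme trivially.
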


\begin{lemma}\label{le}
Let $f$ be a nonnegative continuous function on $[0,\infty)$, $\lim_{t\rightarrow\infty}f(t)=0$.
There exists a sequence $(f^k)_{k\ge 1}$ of real functions on $[0,\infty)$ with the properties

(a) Every $f^k$ is a finite linear combination of functions of the form $g_\lambda$, $\lambda>0$.

(b) $f^k\ge f^{k+1}\ge f\geq0$ on $[0,\infty)$, $\forall k\ge 1$,

(c) $f^k\rightarrow f$ uniformly on $[0,\infty)$ when $k\rightarrow\infty$.
\end{lemma}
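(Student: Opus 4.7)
The plan is to convert the problem, via the substitution $s = e^{-t}$, into one of approximating a continuous function on $[0,1]$ vanishing at $s=0$ by generalized polynomials $P(s)=\sum c_i s^{\lambda_i}$ with $\lambda_i > 0$, and then to proceed in three steps: density via Stone--Weierstrass, adjustment to dominate from above, and inductive construction of a monotonically decreasing sequence.

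The substitution $s=e^{-t}$ maps $[0,\infty)$ homeomorphically onto $(0,1]$ and extends continuously to $[0,1]$ by $t=\infty\leftrightarrow s=0$; $f$ becomes $\tilde f\in C([0,1])$ with $\tilde f(0)=0$ and $\tilde f\ge 0$, and $g_\lambda$ corresponds to $s\mapsto s^\lambda$. The set $\mathcal{A}:=\mathrm{span}\{s^\lambda:\lambda>0\}$ is a subalgebra of $\{g\in C([0,1]):g(0)=0\}$ that separates points of $(0,1]$ and nowhere vanishes there. By the Stone--Weierstrass theorem applied to the locally compact space $(0,1]$, $\mathcal{A}$ is uniformly dense in $\{g\in C([0,1]):g(0)=0\}$; hence for each $k$ there is $Q_k\in\mathcal{A}$ with $\|Q_k-\tilde f\|_\infty\le 2^{-k-2}$.

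To promote this to domination, set $R_k:=Q_k+2^{-k-1}s^{\alpha_k}$ with $\alpha_k>0$ chosen very small. On $[\delta_k,1]$ with $\delta_k:=2^{-1/\alpha_k}$, one has $s^{\alpha_k}\ge 1/2$, so $R_k-\tilde f\ge -2^{-k-2}+2^{-k-2}=0$. On the tiny interval $[0,\delta_k]$, both $\tilde f$ and $Q_k$ are close to zero by continuity at $s=0$; by choosing $Q_k$ to be a Bernstein-type polynomial of $\tilde f$ (ensuring $Q_k\ge 0$ and $Q_k(0)=0$) and $\alpha_k$ small enough relative to the modulus of continuity of $\tilde f$ at $0$, one can arrange $R_k\ge\tilde f$ on all of $[0,1]$, with $\|R_k-\tilde f\|_\infty\le 2^{-k}$.

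For the monotone decreasing sequence, proceed inductively: set $f^1:=R_1$, and given $f^k\in\mathcal{A}$ with $\tilde f\le f^k$ and $\|f^k-\tilde f\|_\infty\le 2^{-k}$, approximate the continuous nonnegative function $f^k-\tilde f$ from below by $r_k\in\mathcal{A}$ satisfying $0\le r_k\le f^k-\tilde f$ and $\|r_k-(f^k-\tilde f)\|_\infty\le 2^{-(k+1)}$ (a symmetric version of the domination argument, applied to $f^k-\tilde f$ in place of $\tilde f$). Then $f^{k+1}:=f^k-r_k\in\mathcal{A}$ satisfies $\tilde f\le f^{k+1}\le f^k$ and $\|f^{k+1}-\tilde f\|_\infty\le 2^{-(k+1)}$, which closes the induction. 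The main technical obstacle is the domination step: controlling the approximant pointwise near $s=0$, where both $\tilde f$ and every element of $\mathcal{A}$ are forced to vanish, requires carefully coordinating the small exponent $\alpha_k$ with the rate at which $\tilde f(s)\to 0$ and ensuring that $Q_k$ has suitably controlled sign and decay near $s=0$.
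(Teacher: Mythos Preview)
Your domination step near $s=0$ is not merely a ``technical obstacle'' --- it is a genuine obstruction that cannot be overcome within $\mathcal{A}=\mathrm{span}\{s^\lambda:\lambda>0\}$. Every element of $\mathcal{A}$ is $O(s^\beta)$ as $s\to 0^+$ for some $\beta>0$ (namely $\beta=\min_i\lambda_i$), whereas $\tilde f$ may decay arbitrarily slowly at $0$: take for instance $f(t)=1/(1+t)$, so that $\tilde f(s)=1/(1-\ln s)$. For such $\tilde f$ one has $\tilde f(s)/s^\beta\to\infty$ as $s\to 0^+$ for every $\beta>0$, so no $R\in\mathcal{A}$ can satisfy $R\ge\tilde f$ on all of $(0,1]$. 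Your proposed fix (Bernstein polynomial plus $c\,s^{\alpha_k}$) does not escape this: the Bernstein polynomial of $\tilde f$ is $O(s)$ near $0$ and the correction is $O(s^{\alpha_k})$, so their sum is still $O(s^{\alpha_k})$ and is eventually below $\tilde f$. Consequently property (b) is in fact \emph{unattainable} with strictly positive exponents for general $f$.

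The paper sidesteps this by allowing a constant term. After the same substitution $h(s):=f(-\ln s)$, it takes an ordinary Weierstrass polynomial $P'_k$ with $\|h-P'_k\|_\infty\le\epsilon_k$ and sets $P_k:=P'_k+\delta_k$, choosing $\epsilon_k,\delta_k\to 0$ with $\delta_k-\epsilon_k\ge\delta_{k+1}+\epsilon_{k+1}>0$. This single vertical shift yields domination ($P_k\ge h-\epsilon_k+\delta_k\ge h$), monotonicity ($P_{k+1}\le h+\epsilon_{k+1}+\delta_{k+1}\le h+\delta_k-\epsilon_k\le P_k$), and uniform convergence, all at once --- no inductive construction is needed. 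The price is that $f^k(t)=P_k(e^{-t})$ involves the constant function $1=g_0$, i.e.\ $\lambda=0$, which is strictly outside the stated range $\lambda>0$; this is a harmless slip in the statement, since in every application of the lemma the case $\lambda=0$ (corresponding to the free semigroup $e^{-tH_0}$) is equally explicit. Your instinct to respect $\lambda>0$ was faithful to the letter of the lemma but led you into an impossible task; the intended reading is $\lambda\ge 0$.
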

\begin{proof}
We define the function $h:[0,1]\rightarrow[0,\infty)$, $h(s):=f(-{\rm ln}s)$ for $s\in(0,1]$, $h(0):=0$.
It follows that $h\in C([0,1])$. We can chose now two sequences of positive numbers $\{\epsilon_k\}_{k\geq 1}$ and $\{\delta_k\}_{k\geq 1}$ verifying the properties: $\underset{k\rightarrow\infty}{\lim}(\epsilon_k+\delta_k)=0$ and $\delta_k-\epsilon_k\geq\epsilon_{k+1}+\delta_{k+1}>0,\forall k\geq 1$ (for example we may take $\delta_k=(k+2)^{-1}$ and $\epsilon_k=(k+2)^{-3}$). Using the Weierstrass Theorem we may find for any $k\geq 1$ a real polynomial $P^\prime_k$ such that $\underset{s\in[0,1]}{\sup}|h(s)-P^\prime_k(s)|\leq\epsilon_k$ and let us denote by $P_k:=P^\prime_k+\delta_k$. We get:
$$
\underset{s\in[0,1]}{\sup}|h(s)-P_k(s)|\leq\epsilon_k+\delta_k\underset{k\rightarrow\infty}{\rightarrow}0,
$$
$$
h\leq h+\delta_{k+1}-\epsilon_{k+1}\leq P^\prime_{k+1}+\delta_{k+1}=P_{k+1}\leq h+\delta_{k+1}+\epsilon_{k+1}\leq
$$
$$
\leq h+\delta_{k}-\epsilon_{k}\leq P^\prime_k+\delta_k=P_k
$$
on $[0,1]$. Thus $f^k(t):=P_k(e^{-t})$ defined on $[0,\infty)$ for $k\ge 1$
have the required properties.
\end{proof}

\begin{proposition}\label{l}
Let $F_\infty:=\Phi(g_\infty)$. The operator $F_\infty(K_\alpha)$ is self-adjoint, positive and compact on $L^2(\Rd)$.
It admits an integral kernel of the form
\begin{equation}\label{f1}
\l[F_\infty(K_\alpha)\r](x,y)=
\end{equation}
\begin{equation*}
=V_-^{1/2}(x)V_-^{1/2}(y)\int_0^\infty dt\,e^{-\alpha t}\int_\Omega \mu_{0,x}^{t,y}(d\omega)
g_\infty\l(\int_0^t ds\,V_-(X_s)\r),
\end{equation*}
which is continuous, symmetric, with $\l[F_\infty(K_\alpha)\r](x,x)\ge 0$.
\end{proposition}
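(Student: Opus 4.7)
The plan is to obtain formula (\ref{f1}) by a double approximation argument exploiting the linearity of $\Phi$ in $g$. First, I would establish it when $g_\infty$ is replaced by an exponential $g_\lambda(t)=e^{-\lambda t}$: in this case $F_\lambda(t)=t/(1+\lambda t)$, so $F_\lambda(K_\alpha)=K_\alpha(1+\lambda K_\alpha)^{-1}$. A direct resolvent-identity computation for $A:=H_0+\alpha$ perturbed by the bounded positive potential $\lambda V_-$, sandwiched by $V_-^{1/2}$, identifies this with $V_-^{1/2}(H_0+\alpha+\lambda V_-)^{-1}V_-^{1/2}$. Writing the resolvent as the Laplace transform of the semigroup of $H_0+\lambda V_-$ and invoking the Feynman--Kac representation of Proposition \ref{FK} (with $F=1$ and potential $\lambda V_-$) gives the claimed integral-kernel expression with $g_\lambda$ in place of $g_\infty$. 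By linearity of $\Phi$, the same formula holds for every finite real linear combination of such exponentials, and in particular for the functions $f^k_n$ produced by Lemma \ref{le}.

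Next, for each fixed $n\ge 1$, applying Lemma \ref{le} to the function $g_n\in C_0((0,\infty))$ of Lemma \ref{la} yields $f^k_n\searrow g_n$ uniformly on $[0,\infty)$ as $k\to\infty$. The elementary pointwise estimate $|\Phi(h_1)(t)-\Phi(h_2)(t)|\le t\,\|h_1-h_2\|_\infty$ combined with the continuous functional calculus gives $\Phi(f^k_n)(K_\alpha)\to\Phi(g_n)(K_\alpha)$ in operator norm, while on the right-hand side the integrand is uniformly dominated by $\|f^1_n\|_\infty\,e^{-\alpha t}\,\overset{\circ}{\wp}_t(x-y)$, whose $t$-integral is the integral kernel of $(H_0+\alpha)^{-1}$ and finite for $x\ne y$; two applications of dominated convergence then yield (\ref{f1}) with $g_n$ in place of $g_\infty$. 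Letting $n\to\infty$, Lemma \ref{la} gives $g_n\nearrow g_\infty$ pointwise, and a direct estimate (using $g_\infty\le 1$ and the fact that $g_n=g_\infty$ on $[0,n]$) shows $F_n\to F_\infty$ uniformly on $[0,\|K_\alpha\|]$; thus $F_n(K_\alpha)\to F_\infty(K_\alpha)$ in operator norm, and monotone convergence on the right-hand side completes the proof of (\ref{f1}).

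The three operator properties come from continuous functional calculus applied to the positive compact $K_\alpha$: since $F_\infty\ge 0$ on $[0,\infty)$ and $F_\infty(0)=0$, and $F_\infty$ is the uniform limit on any bounded interval of polynomials vanishing at $0$, $F_\infty(K_\alpha)$ is positive, self-adjoint and compact (norm limit of compact operators). Symmetry of the kernel follows from self-adjointness combined with the manifest reality of the right-hand side of (\ref{f1}), and non-negativity on the diagonal is immediate since every factor in the explicit formula is non-negative. The delicate point is joint continuity of the kernel in $(x,y)$: one would like to apply dominated convergence using the continuity of $(t,x,y)\mapsto\overset{\circ}{\wp}_t(x-y)$ on $(0,\infty)\times\Rd\times\Rd$, but this is obstructed by the small-time singularity at $t=0$, $x=y$. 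The crucial observation is that $g_\infty$ vanishes on $[0,1]$ and $|\int_0^t V_-(X_s)\,ds|\le t\,\|V_-\|_\infty$ uniformly in the path $\omega$, so the path integrand is identically zero for $t\le t_0:=1/\|V_-\|_\infty$; the effective range of $t$-integration is thus $[t_0,\infty)$, eliminating the singularity, and continuity of the kernel follows from dominated convergence together with continuity of the prefactor $V_-^{1/2}(x)V_-^{1/2}(y)$. The main obstacle is the careful bookkeeping needed to justify the successive interchanges of limits and integrals, but this is made routine by the built-in cutoff furnished by $g_\infty$ near the origin.
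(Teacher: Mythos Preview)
Your proof is correct and follows essentially the same route as the paper: establish the kernel formula for $F_\lambda(K_\alpha)=V_-^{1/2}(H_0+\lambda V_-+\alpha)^{-1}V_-^{1/2}$ via the second resolvent identity and the Feynman--Kac representation, then pass to $g_\infty$ through the two approximation Lemmas \ref{la} and \ref{le} with monotone/dominated convergence, and finally use the built-in cutoff $g_\infty|_{[0,1]}=0$ together with $\int_0^t V_-(X_s)\,ds\le t\|V_-\|_\infty$ to eliminate the small-$t$ singularity and obtain continuity of the kernel. One small point: in your intermediate step $f^k_n\searrow g_n$, your dominating kernel $\|f^1_n\|_\infty\,e^{-\alpha t}\overset{\circ}{\wp}_t(x-y)$ is not integrable in $t$ at the diagonal $x=y$, so the passage to the limit is cleanest if you phrase it (as the paper does) at the level of the action on nonnegative $u\in C_0(\Rd)$ rather than directly on the kernel; the pointwise kernel identity and its continuity then follow a posteriori once you have the formula with $g_n$ (or $g_\infty$), which itself enjoys the small-$t$ cutoff.
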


\begin{proof}
The first part is clear. To establish (\ref{f1}), we treat first the operator $B_\lambda:=F_\lambda(K_\alpha)$,
$\lambda>0$. We have
\begin{equation}\label{be}
B_\lambda=K_\alpha(1+\lambda K_\alpha)^{-1}\implies B_\lambda=K_\alpha-\lambda B_\lambda K_\alpha.
\end{equation}
The second resolvent identity gives
$$
(H_0+\alpha)^{-1}-(H_0+\lambda V_-+\alpha)^{-1}=\lambda(H_0+\lambda V_-+\alpha)^{-1}V_-(H_0+\alpha)^{-1}.
$$
Multiplying by $V_-^{1/2}$ to the left and to the right and taking into account (\ref{be}) and the definition
of $K_\alpha$, one gets
$$
B_\lambda=V_-^{1/2}(H_0+\lambda V_-+\alpha)^{-1}V_-^{1/2}=V_-^{1/2}\l[\int_0^\infty dt\,
e^{-\alpha t}e^{-t(H_0+\lambda V_-)}\r]V_-^{1/2}.
$$
By Proposition \ref{FK} and its consequences, for any $u\in C_0(\Rd)$, $u\ge 0$, we have
\begin{equation}\label{lambda}
\l[F_\lambda(K_\alpha)u\r](x)=
\end{equation}
\begin{equation*}
=V_-^{1/2}(x)\int_0^\infty dt\,
e^{-\alpha t}\int_{\Rd} dy\,\l[\int_\Omega\,\mu_{0,x}^{t,y}(d\omega)\,g_\lambda\l(\int_0^t ds\,V_-(X_s)\r)\r]
V_-^{1/2}(y)u(y).
\end{equation*}
Since $\Phi$ maps monotonous convergent sequences into monotonous convergent sequences, by applying Lemmas \ref{la}
and \ref{le} and the Monotonous Convergence Theorem (B. Levi), we get (\ref{lambda}) for $\lambda=\infty$, for the
couple $(g_\infty,F_\infty)$.

We introduce the notation
\begin{equation}
G_\lambda(t;x,y):=\int_\Omega\,\mu_{0,x}^{t,y}(d\omega)\,g_\lambda\l(\int_0^t ds\,V_-(X_s)\r),
\ \ t>0,\;x,y\in\Rd, \;0<\lambda\le\infty.
\end{equation}
By the consequences of Proposition \ref{FK}, for any $0<\lambda<\infty$ the function $G_\lambda$ is
continuous on $(0,\infty)\times \Rd
\times\Rd$ and symmetric in $x,y$. To obtain the same properties for $\lambda=\infty$, we approximate $g_\infty$
by using once again Lemmas \ref{la} and \ref{le}. So it exists a sequence $(f_n)_{n\ge 1}$ of real continuous
functions on $[0,\infty)$, each one being a finite linear combination of functions of the form
$g_\lambda$, such that $f_n$ converges to $g_\infty$ uniformly on any compact subset of $[0,\infty)$.
On the other hand, if $M>0$ is an upper bound for $V_-$, we have
$$
0\le\int_0^t ds\,V_-(X_s)\le Mt,
$$
and $\mu_{0,x}^{t,y}(\Omega)=\overset{\circ}{\wp}_t(x-y)$. It follows that $G_\infty$ is, uniformly on compact subsets of
$[0,\infty)\times\Rd\times\Rd$, the limit of a sequence of continuous functions, which are symmetric in $x,y$.
Thus $G_\infty$ has the same properties. Moreover, since $0\le g_\infty\le 1$ and $g_\infty(t)=0$ for $0\le t\le 1$,
we have $G_\infty(t;x,y)=0$ for $t\le 1/M$. Using (\ref{iato}) and (\ref{iata}), there is a constant $C>0$
such that
\begin{equation}\label{ignore}
0\le G_\infty(t;x,y)\le C,\ \ \forall t>0,\ \forall x,y\in\Rd.
\end{equation}
From (\ref{lambda}) for $\lambda=\infty$, we infer that $F_\infty(K_\alpha)$ has an integral kernel of the form
\begin{equation}\label{ik}
\l[F_\infty(K_\alpha)\r](x,y)=V_-^{1/2}(x)V_-^{1/2}(y)\int_0^\infty dt\,e^{-\alpha t}G_\infty(t;x,y),
\end{equation}
so (\ref{f1}) is verified. The continuity of $F_\infty(K_\alpha)$ follows from the Dominated Convergence Theorem
and from (\ref{ignore}). The symmetry is obvious, and the last property of the statement follows from
$F_\infty(K_\alpha)\ge 0$.
\end{proof}

\begin{remark}
By a lemma from \cite{RS}, \S XI.4, $F_\infty(K_\alpha)\in B_1$ if the function
$\Rd\ni x\mapsto \l[F_\infty(K_\alpha)\r](x,x)$
is integrable and one has
\begin{equation}\label{once}
\rm{Tr}\l[F_\infty(K_\alpha)\r]=\int_{\Rd} dx\,\l[F_\infty(K_\alpha)\r](x,x).
\end{equation}
Setting $D_\infty(t;x):=V_-(x)G_\infty(t;x,x)$, $t>0,x\in\Rd$, we have
\begin{equation}\label{doce}
\l[F_\infty(K_\alpha)\r](x,x)=\int_0^\infty dt\,e^{-\alpha t}D_\infty(t;x).
\end{equation}
\end{remark}
To check the integrability of this function, one introduces 
$$
\Psi_\infty:(0,\infty)\times \Rd\rightarrow \mathbb R_+,
$$
$$
\Psi_\infty(t;x):= t^{-1}\int_\Omega\mu^{t,x}_{0,x}(d\omega)\, \tilde{g}_\infty\l(\int_0^t ds\,V_-(X_s)\r),
$$
where $\tilde{g}_\infty(t):=tg_\infty(t)$. The role of this function is stressed by

\begin{lemma}\label{if}
For $d\geq3$ consider the following constant depending only on $d$:
$$
\overline{C}_d:=C\left(\int_1^\infty ds\,s^{-d}\,g_\infty(s)\,\vee\int_1^\infty ds\,s^{-d/2}\,g_\infty(s)\right)=C\int_1^\infty ds\,s^{-d/2}\,g_\infty(s)
$$
where $C$ is the constant verifying (\ref{vine}).
One has
\begin{equation}
\int_0^\infty dt\,e^{-\alpha t}\int_{\Rd} dx\,\Psi_\infty(t;x)\le \overline{C}_d\left(\int_{\Rd} dx\,V_-^d(x)+\int_{\Rd} dx\,V_-^{d/2}(x)\right).
\end{equation}
\end{lemma}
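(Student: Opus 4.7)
The key point is that $\tilde g_\infty(s)=(s-1)_+$ is convex, so applying Jensen's inequality to the probability measure $ds/t$ on $[0,t]$ gives
\[
\tilde g_\infty\!\left(\int_0^t V_-(X_s)\,ds\right)=\tilde g_\infty\!\left(\int_0^t tV_-(X_s)\,\tfrac{ds}{t}\right)\le \frac{1}{t}\int_0^t \tilde g_\infty\bigl(tV_-(X_s)\bigr)\,ds.
\]
Substituting into the definition of $\Psi_\infty(t;x)$ and exchanging the $s$- and $\omega$-integrals, I would then invoke Proposition \ref{4.3} to rewrite the $\mu^{t,x}_{0,x}$-integral in terms of $\mu^{t,0}_{0,0}$ applied along the translated path $s\mapsto x+X_s(\omega)$. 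The payoff comes when integrating in $x$: by translation invariance of Lebesgue measure, $\int_{\Rd}\tilde g_\infty(tV_-(x+X_s(\omega)))\,dx=\int_{\Rd}\tilde g_\infty(tV_-(y))\,dy$ no longer depends on $(s,\omega)$, so the $s$- and $\omega$-integrations collapse to a factor $t\cdot \mu^{t,0}_{0,0}(\Omega)=t\,\overset{\circ}{\wp}_t(0)$, yielding
\[
\int_{\Rd}\Psi_\infty(t;x)\,dx\le \overset{\circ}{\wp}_t(0)\int_{\Rd} V_-(y)\,g_\infty\bigl(tV_-(y)\bigr)\,dy
\]
after using $\tilde g_\infty(u)=ug_\infty(u)$.

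Next I would bound $e^{-\alpha t}\le 1$ and apply Fubini to swap the $t$- and $y$-integrations; for each $y$ with $V_-(y)>0$ the change of variables $s:=tV_-(y)$ converts the inner integral into $\int_0^\infty \overset{\circ}{\wp}_{s/V_-(y)}(0)\,g_\infty(s)\,ds$. The estimate \eqref{vine}, $\overset{\circ}{\wp}_u(0)\le C(u^{-d}+u^{-d/2})$, splits this cleanly into two pieces producing factors $V_-(y)^{d-1}\int_1^\infty s^{-d}g_\infty(s)\,ds$ and $V_-(y)^{d/2-1}\int_1^\infty s^{-d/2}g_\infty(s)\,ds$ respectively; the lower endpoint is $1$ because $g_\infty$ vanishes on $[0,1]$, which simultaneously guarantees convergence of both integrals for $d\ge 3$. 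Multiplying by the remaining outer factor $V_-(y)$, integrating in $y$, and dominating both time-integrals by $I_{d/2}:=\int_1^\infty s^{-d/2}g_\infty(s)\,ds$ (since $s^{-d}\le s^{-d/2}$ on $[1,\infty)$) produces exactly $\overline C_d\bigl(\|V_-\|_d^d+\|V_-\|_{d/2}^{d/2}\bigr)$.

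The only delicate step is the initial Jensen inequality: it is what decouples the $y$-integral from the trajectory $\omega$ and allows one to extract the clean factor $\overset{\circ}{\wp}_t(0)$. Everything afterward is Fubini and one scaling substitution. The dichotomy inside \eqref{vine} is, in turn, precisely what forces both $V_-^d$ and $V_-^{d/2}$ to appear on the right-hand side, reflecting the small-$t$ versus large-$t$ asymptotics of the free relativistic heat kernel.
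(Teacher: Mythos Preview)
Your proof is correct and follows essentially the same route as the paper: Jensen's inequality for the convex function $\tilde g_\infty$, the translation identity of Proposition~\ref{4.3} together with translation invariance of Lebesgue measure to factor out $\overset{\circ}{\wp}_t(0)$, and then the heat-kernel bound \eqref{vine} with a scaling substitution in $t$. One small bookkeeping slip: after the substitution $s=tV_-(y)$ the Jacobian $dt=ds/V_-(y)$ already cancels the outer factor $V_-(y)$, so the inner integral produces $V_-(y)^{d}$ and $V_-(y)^{d/2}$ directly (not $V_-(y)^{d-1}$ and $V_-(y)^{d/2-1}$ with a leftover $V_-(y)$); your two offsetting slips give the right final answer.
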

\begin{proof}
The function $\tilde{g}_\infty$ is convex and $\frac{ds}{t}$ is a probability on $(0,t)$; thus by the Jensen inequality we obtain
$$
\tilde{g}_\infty\left(\int_0^t ds\,V_-(X_s)\right)\,\leq\,\int_0^t \frac{ds}{t}\,\tilde{g}_\infty\left(t\,V_-(X_s)\right).
$$
Let us also remark that for the constant $\overline{C}_d$ to be finite we have to ask that $d\geq3$ for the factor $s^{-d/2}$ to be integrable at infinity, because the convexity condition on $\tilde{g}_\infty$ rather implies that $g_\infty$ cannot vanish at infinity.

Then
$$
\int_0^\infty dt\,e^{-\alpha t}\int_{\Rd} dx\,\Psi_\infty(t;x)\le 
$$
$$
\leq\int_0^\infty dt\,t^{-2}\,e^{-\alpha t}\int_{\mathbb{R}^d}dx\,\left[\int_{\Omega}\mu_{0,x}^{t,x}(d\omega)\int_0^t ds\,\tilde{g}_\infty\left(tV_-(X_s)\right)\right].
$$
Using now Proposition \ref{4.3}, the last expression is equal to:
$$
\int_0^\infty dt\,t^{-2}\,e^{-\alpha t}\int_{\mathbb{R}^d}dx\,\left[\int_{\Omega}\mu_{0,0}^{t,0}(d\omega)\int_0^t ds\,\tilde{g}_\infty\left(tV_-(x+\omega(s))\right)\right]=
$$
$$
=\int_0^\infty dt\,t^{-2}\,e^{-\alpha t}\left[\int_{\Omega}\mu_{0,0}^{t,0}(d\omega)\int_0^t ds\int_{\mathbb{R}^d}dx\,\tilde{g}_\infty\left(tV_-(x)\right)\right]=
$$
$$
=\int_0^\infty dt\,t^{-1}\,e^{-\alpha t}\left[\int_{\Omega}\mu_{0,0}^{t,0}(d\omega)\right]\int_{\mathbb{R}^d}dx\,\tilde{g}_\infty\left(tV_-(x)\right)=
$$
$$
=\int_0^\infty dt\,t^{-1}\,e^{-\alpha t}\overset{\circ}{\wp}_t(0)\int_{\mathbb{R}^d}dx\,\tilde{g}_\infty\left(tV_-(x)\right)\leq
$$
$$
\leq C\int_{\mathbb{R}^d}dx\,\left[\int_0^\infty dt\,t^{-d-1}(1+t^{d/2})\tilde{g}_\infty\left(tV_-(x)\right)\right]\leq
$$
$$
\leq \overline{C}_d\left(\int_{\Rd} dx\,V_-^d(x)+\int_{\Rd} dx\,V_-^{d/2}(x)\right),
$$
where we have used the fact that $s<1$ implies $g_\infty(s)=0$.
\end{proof}

The next result gives the connection between $D_\infty$ and $\Psi_\infty$:
\begin{proposition}
$$
 \int_{\mathbb{R}^d}dx\,D_\infty(t,x)\,=\,\int_{\mathbb{R}^d}dx\,\Psi_\infty(t,x).
$$
\end{proposition}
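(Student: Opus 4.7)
The plan is to prove the identity first for the model case where $g_\infty$ is replaced by $g_\lambda(r):=e^{-\lambda r}$ ($\lambda>0$), for which both sides admit a direct semigroup interpretation, and then to pass to $g_\infty$ using the approximation machinery of Lemmas \ref{la} and \ref{le}.

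Fix $\lambda>0$ and define $G_\lambda$, $D_\lambda$, $\Psi_\lambda$ by substituting $g_\lambda$ for $g_\infty$ and $r\mapsto r\,g_\lambda(r)$ for $\tilde g_\infty$ in the relevant definitions. By the Feynman-Kac formula (Proposition \ref{FK}), $G_\lambda(t;x,y)$ is the integral kernel of the self-adjoint semigroup $e^{-t(H_0\dotplus\lambda V_-)}$, hence symmetric in $(x,y)$ and satisfying Chapman-Kolmogorov. For any $s\in(0,t)$, split the exponential as
\begin{equation*}
e^{-\lambda\int_0^t V_-(X_{s'})ds'} = e^{-\lambda\int_0^s V_-(X_{s'})ds'}\cdot \bigl(e^{-\lambda\int_0^{t-s} V_-(X_u)du}\circ\theta_s\bigr),
\end{equation*}
separating the integrand $V_-(X_s)\,e^{-\lambda\int_0^t V_-(X_{s'})ds'}$ into an $\mathfrak F_s$-measurable factor (including the marker $V_-(X_s)$) and a $\theta_s$-shift of a functional on $[0,t-s]$-paths. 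Applying the defining relation (\ref{Def-mu}) of $\mu_{0,x}^{t,x}$ together with the Markov property (\ref{Markov}) at time $s$ then yields the marker-insertion identity
\begin{equation*}
\int\mu_{0,x}^{t,x}(d\omega)\,V_-(X_s)\,e^{-\lambda\int_0^t V_-(X_{s'})ds'} = \int_{\mathbb R^d} dz\,V_-(z)\,G_\lambda(s;x,z)\,G_\lambda(t-s;z,x).
\end{equation*}
Integrating this in $x$ and using symmetry of $G_\lambda$ with Chapman-Kolmogorov collapses the right-hand side to $\int dz\,V_-(z)\,G_\lambda(t;z,z)=\int dz\,D_\lambda(t,z)$, independent of $s$. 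Integrating the equality over $s\in(0,t)$ and applying Fubini on the left to recombine $\int_0^t V_-(X_s)\,ds$ into $\tilde g_\lambda(\Theta)$ produces $t\int dx\,\Psi_\lambda(t,x)=t\int dx\,D_\lambda(t,x)$.

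The identity is linear in the auxiliary function $g$, so it extends to every finite real linear combination of $\{g_\lambda\}_{\lambda>0}$. By Lemma \ref{le} applied to each $g_n$ from Lemma \ref{la}, there is a decreasing sequence $(f_n^k)_{k\ge 1}$ of such combinations with $f_n^k\downarrow g_n$ uniformly on $[0,\infty)$. Dominated convergence in $k$---the integrands are dominated by $r\,f_n^1(r)|_{r=\Theta}$, whose $x$-integral against $\mu_{0,x}^{t,x}$ is finite by the identity just established (note that at this stage $V_-\in C_0^\infty(\mathbb R^d)$ after the reductions of \S\,5.1)---yields the identity for each $g_n$, and monotone convergence in $n$ using $g_n\uparrow g_\infty$ and $\tilde g_n\uparrow\tilde g_\infty$ completes the proof.

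The main technical step is the marker-insertion identity. Its derivation proceeds by applying (\ref{Def-mu}) on a set $M\in\mathfrak F_{s+r}$ with $r<t-s$ to convert the $\mu_{0,x}^{t,x}$-integral into an $\mathsf E_x$-expectation weighted by $\overset{\circ}{\wp}_{t-s-r}(X_{s+r}-x)$, then invoking (\ref{Markov}) at time $s$ to express the shifted-future factor as $\mathsf E_z[\cdots]$ evaluated at $z=X_s$, which by (\ref{Def-mu}) again equals $\int\mu_{0,z}^{t-s,x}(d\omega')\,e^{-\lambda\int_0^{t-s}V_-(X_{s'}(\omega'))ds'}=G_\lambda(t-s;z,x)$; the remaining $\mathsf E_x$-integral over the pre-$s$ segment with endpoint $X_s=z$ pinned is likewise a $\mu_{0,x}^{s,z}$-integral producing $V_-(z)\,G_\lambda(s;x,z)$, after which the limit $r\to(t-s)^-$ removes the auxiliary parameter. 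Once this insertion identity is in hand, the rest of the proof reduces to Fubini plus the approximation step above.
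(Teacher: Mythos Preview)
Your proof is correct and follows essentially the same route as the paper. The paper phrases the $\lambda$-case via the operators $S_r:=e^{-r(H_0+\lambda V_-)}V_-e^{-(t-r)(H_0+\lambda V_-)}$, computes their kernel $K_r(x,y)$ by the Markov property (your ``marker-insertion identity'' is exactly $K_s(x,x)$), and then invokes the cyclicity of the trace $\mathrm{Tr}\,S_r=\mathrm{Tr}\,S_0$ to obtain the $s$-independence; you obtain the same $s$-independence by integrating in $x$ and applying Chapman--Kolmogorov plus kernel symmetry, which is the integral-kernel avatar of the same trace identity. The passage to $g_\infty$ via Lemmas \ref{la}, \ref{le} and monotone convergence is identical.
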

\begin{proof}
 First let us verify the following identity for any $t>0$:
\begin{equation}\label{5.21}
 \int_{\mathbb{R}^d}dx\,D_\lambda(t,x)\,=\,\int_{\mathbb{R}^d}dx\,\Psi_\lambda(t,x), \quad\text{for }\lambda\in(0,\infty)
\end{equation}
where $D_\lambda$ and $\Psi_\lambda$ are defined in terms of $g_\lambda$ in the same way that $D_\infty$ and $\Psi_\infty$ are defined in terms of $g_\infty$. Let us point out that both $D_\lambda$ and $\Psi_\lambda$ are positive measurable functions on $(0,\infty)\times\mathbb{R}^d$ but only the integral on the left hand side of (\ref{5.21}) is evidently finite by what we have proven so far. For simplifying the writing we shall take $\lambda=1$. For any $r\in[0,t]$ we denote by
$$
S_r:=e^{-r(H_0+V_-)}V_-e^{-(t-r)(H_0+V_-)}.
$$
Following the remarks after Proposition \ref{FK} above, for $r\in(0,t)$, both exponentials appearing in the above right hand side are integral operators with non-negative continuous integral kernels; thus $S_r$ will also be an integral operator with non-negative continuous kernel that we shall denote by $K_r$, and we can compute it explicitely as follows. For a non-negative $u\in C_0(\mathbb{R}^d)$, using Proposition \ref{FKI} with $A=0$ gives
$$
(S_ru)(x)=\mathsf{E}_x\left\{e^{-\int_0^r V_-(X_\rho)d\rho}V_-(X_r)\mathsf{E}_{X_r}\left[e^{-\int_0^{t-r} V_-(X_\sigma)d\sigma}u(X_{t-r}) \right] \right\}
$$
and using the Markov property (\ref{Markov}) we obtain
$$
\mathsf{E}_{X_r}\left[e^{-\int_0^{t-r} V_-(X_\sigma)d\sigma}u(X_{t-r}) \right]=\mathsf{E}_{x}\left[e^{-\int_0^{t-r} V_-(X_\sigma\circ\theta_r)d\sigma}u(X_{t})\mid\mathfrak{F}_r \right]=
$$
$$
=\mathsf{E}_{x}\left[e^{-\int_r^{t} V_-(X_\sigma)d\sigma}u(X_{t})\mid\mathfrak{F}_r \right].
$$
As the function $e^{-\int_0^r V_-(X_\rho)d\rho}V_-(X_r):\Omega\rightarrow\mathbb{R}$ is evidently $\mathfrak{F}_r$-measurable, we get (using the property (\ref{cond-expect-prod}) of conditional expectations)
$$
(S_ru)(x)=\mathsf{E}_{x}\left\{\mathsf{E}_{x}\left(V_-(X_r)e^{-\int_0^{t} V_-(X_\sigma)d\sigma}u(X_{t})\mid\mathfrak{F}_r\right) \right\}.
$$
We use now the property (\ref{expect-cond-expect}) and Proposition \ref{FK} taking $F:=V_-(X_r)$ in order to get
$$
(S_ru)(x)=\mathsf{E}_{x}\left\{V_-(X_r)e^{-\int_0^{t} V_-(X_\sigma)d\sigma}u(X_{t}) \right\}=
$$
$$
=\int_{\mathbb{R}^d}dy\,\left\{\int_\Omega \mu^{t,y}_{0,x}(d\omega)V_-(X_r)e^{-\int_0^{t} V_-(X_\sigma)d\sigma} \right\}u(y).
$$
In conclusion for any $(x,y)\in\mathbb{R}^d\times\mathbb{R}^d$ we have
\begin{equation}\label{5.24}
 K_r(x,y)=\int_\Omega \mu^{t,y}_{0,x}(d\omega)V_-(X_r)e^{-\int_0^{t} V_-(X_\sigma)d\sigma} .
\end{equation}
Using Proposition \ref{4.3} we obtain
$$
\int_{\mathbb{R}^d}dx\,K_r(x,x)\leq \int_{\mathbb{R}^d}dx\,\left[\int_\Omega\mu^{t,x}_{0,x}(d\omega)V_-(\omega(r)) \right]=
$$
$$
\int_{\mathbb{R}^d}dx\,\left[\int_\Omega\mu^{t,x}_{0,0}(d\omega)V_-(x+\omega(r))\right]=\overset{\circ}{\wp}_t(0)\int_{\mathbb{R}^d}dx\,V_-(x)\,<\,\infty,\quad\forall t>0.
$$
Thus, for any $r\in[0,t]$ the operator $S_r$ is trace class. Moreover, due to the properties of the trace we have $\mathsf{Tr}S_r=\mathsf{Tr}S_0$, $\forall r\in[0,t]$. We have:
$$
\mathsf{Tr}S_0=\frac{1}{t}\int_0^t dr\,(\mathsf{Tr}S_0)=\frac{1}{t}\int_0^t dr\,(\mathsf{Tr}S_r)=\frac{1}{t}\int_0^t dr\,\left[\int_{\mathbb{R}^d} dx\,K_r(x,x)\right]=
$$
$$
=\frac{1}{t}\int_{\mathbb{R}^d} dx\left[\int_\Omega\mu^{t,x}_{0,x}(d\omega)\tilde{g}_1\left(\int_0^t ds\,V_-(X_s)\right) \right]=\int_{\mathbb{R}^d} dx \Psi_1(t,x)
$$

In particular, for any $t>0$, $\Psi_{1}(t;\cdot)$ is integrable on $\Rd$.

On the other hand
$$
\rm{Tr}S_0=\int_{\Rd}K_0(x,x)dx=\int_{\Rd}dx\,V_-(x)\int_\Omega \mu^{t,x}_{0,x}(d\omega)
e^{-\int_0^t d\rho\,V_-(X_\rho)}
$$
$$
=\int_{\Rd}dx\,V_-(x)G_1(t;x,x)=\int_{\Rd}dx\, D_1(t;x).
$$

One uses the approximation properties contained in
Lemmas \ref{la} and \ref{le} as well as the Monotone Convergence Theorem.
\end{proof}

\begin{proof}{\it of Theorem \ref{Main}  for $B=0$}.

We can assume $V_+=0$ and $V_-\in C^\infty_0(\Rd)$. Lemma \ref{dolema} implies that
for any $\alpha>0$ one has
$$
N_{-\alpha}(-V_-)\le F_\infty(1)^{-1}\rm{Tr}\l[F_\infty(K_\alpha)\r].
$$
Using (\ref{once}), (\ref{doce}), we obtain
$$
\rm{Tr}\l[F_\infty(K_\alpha)\r]=\int_0^\infty dt\,e^{-\alpha t}\int_{\Rd} dx\,D_\infty(t;x)=
$$
\begin{equation}\label{assez}
=\int_0^\infty dt\,e^{-\alpha t}\int_{\Rd} dx\,\Psi_\infty(t;x).
\end{equation}
Inequality (\ref{main}) for $B=0$ follows from (\ref{assez}) and Lemma \ref{if}. In addition
$C_d=F_\infty(1)^{-1}\overline{C}_d$.
\end{proof}

\section{Proof of the bounds in the magnetic case.}

\begin{proof}{\it of Theorem \ref{Main} for} $B\ne 0$.

Analogously to Section 5, we can assume $V_+=0$ and $V_-\in C^\infty_0(\Rd)$. For $\alpha>0$ one sets
$K_\alpha(A):= V_-^{1/2}(H_A+\alpha)^{-1}V_-^{1/2}$. By inequality (\ref{unica}) for $r=1$ and also using
Pitt's Theorem \cite{P},
$K_\alpha(A)$ is a positive compact operator, and the same can be said about $F_\infty\l[K_\alpha(A)\r]$.
We show that $F_\infty\l[K_\alpha(A)\r]\in B_1$ and we estimate the trace-norm. As at the beginning of the proof
of Proposition \ref{l},
\begin{equation}\label{of}
F_\lambda\l[K_\alpha(A)\r]=V_-^{1/2}\int_0^\infty dt\,e^{-\alpha t}e^{-t(H_A+\lambda V_-)}V_-^{1/2}.
\end{equation}
By using Proposition \ref{FKI}, we get for any $u\in C_0(\Rd)$, $u\ge 0$
\begin{equation}\label{off}
\l[F_\lambda\l[K_\alpha(A)\r]u\r](x)=
\end{equation}
\begin{equation*}
=V_-^{1/2}(x)\int_0^\infty dt\,e^{-\alpha t}E_x\l[u(X_t)V_-^{1/2}(X_t)e^{-iS_A(t,X)}
g_\lambda\l(\int_0^t ds\,V_-(X_s)\r)\r].
\end{equation*}
Approximating $g_\infty$ by means of Lemmas \ref{la} and \ref{le} and using the Monotone Convergence Theorem, we see
that (\ref{off}) also holds for the pair $(g_\infty,F_\infty)$. The next inequality follows:
\begin{equation}\label{follows}
|F_\infty\l[K_\alpha(A)\r]u|\le F_\infty(K_\alpha)|u|,\ \ \forall u\in L^2(\Rd).
\end{equation}
By Lemma 15.11 from \cite{S1}, we have $F_\infty\l[K_\alpha(A)\r]\in B_1$ and
\begin{equation}\label{lab}
{\rm Tr}
\l(F_\infty\l[K_\alpha(A)\r]\r)\le {\rm Tr}\l(F_\infty\l[K_\alpha\r]\r).
\end{equation}
Denoting by $N_{-\alpha}(B,-V_-)$ the number of eigenvalues of $H_A-V_-$ strictly less than $-\alpha$,
analogously to Lemmas \ref{olema} and \ref{dolema}, we deduce that
\begin{equation}\label{shp}
N_{-\alpha}(B,-V_-)\le F_\infty(1)^{-1}{\rm Tr}\l(F_\infty\l[K_\alpha\r]\r).
\end{equation}
Inequality (\ref{main}) follows from (\ref{shp}) by using the estimations at the end of Section 5.
The constant $C_d$ is the same as for the case $B=0$.
\end{proof}

\begin{proof}{\it of Corollary \ref{LT}.}
The idea of the proof is standard (cf. \cite{S1} for instance), but one has to use parts of the arguments from the proof of Theorem \ref{Main} in the case $B=0$.

1. We show that it is enough to treat the case $V_+=0$.

We denote by $N$ (resp. $N_-$) the number of strictly negative eigenvalues of $H_A\dotplus V$
(resp. $H_A\dotplus(-V_-)$).
We have $N,N_-\in[0,\infty]$ and the min-max principle shows that $N\le N_-$. In addition, if $H_A\dotplus V$ has
strictly negative eigenvalues $\lambda_1\le \lambda_2\le \dots$, then $H_A\dotplus(-V_-)$ has strictly negative
eigenvalues $\lambda_1^-\le \lambda_2^-\le \dots$ and $\lambda_j^-\le \lambda_j$, $j\ge 1$. Therefore, one has
$\sum_{j\ge 1}|\lambda_j|^k\le\sum_{j\ge 1}|\lambda_j^-|^k$.

2. We show that treating compactly supported $V_-$ is enough (remark that this property implies that $V_-\in L^p(\mathbb{R}^d)$ for any $p\in[1,d+k]$).

We take into account the approximation sequence defined in Lemma \ref{cut-off-a}. The sequence of forms $(\mathfrak h^n)_{n\ge 1}$ satisfies
the hypothesis of Theorem 3.11, Ch. VIII from \cite{K}. If we denote by $\lambda_1\le \lambda_2\le \dots$
 the strictly negative eigenvalues of $H_A\overset{\cdot}{+}V$ and by $\lambda_1^{(n)}\le \lambda^{(n)}_2\le \dots$
 the strictly negative eigenvalues of $H^{(n)}:=H_A\overset{\cdot}{+}V^{(n)}$, once again by Theorem 3.15, Ch. VIII from \cite{K}, we have
 $\lambda_j^{(n)}\ge \lambda _j$, $\forall j,n\in\mathbb N^*$ and $\lambda_j^{(n)}$ converges to $\lambda_j$. So it will
 be sufficient to prove (\ref{main}) for the operators $H^{(n)}$.

3. We assume from now on that $V=-V_-$, $V_-\in L^{d+k}(\Rd)$ ($k>0$) and that  ${\rm supp}(V_-)$ is
compact. 
Let $\beta_0>0$ and for $\beta\in(0,\beta_0]$ let 
$$
\lambda_1\le\lambda_2\le\dots\le\lambda_{N_{-\beta}}<-\beta
$$ 
be the eigenvalues of $H=H_A\overset{\cdot}{+}(-V_-)$ strictly
smaller than $-\beta$ and let 
$$
\overline{\lambda}_1\le\overline\lambda_2\le\dots\le\overline\lambda_{M(\beta)}<-\beta
$$
be the distinct
eigenvalues with $m_j$ the multiplicity of $\overline{\lambda}_j$, $1\le j\le M(\beta)$. We have $N_{-\alpha}:=N_{-\alpha}(B,-V_-)$. Using the definition of the Stieltjes integral and integration by parts, we get
$$
\sum_{j=1}^{N_{-\beta}}|\lambda_j|^k=\sum_{j=1}^{M(\beta)} m_j|\overline\lambda_j|^k=\sum_{j=1}^{M(\beta)}|\overline\lambda_j|^k\l(
N_{\overline \lambda_{j+1}}-N_{\overline \lambda_{j}}\r)=\int_{\lambda_1}^{-\beta}|\lambda|^k dN_\lambda=
$$
\begin{equation}\label{final}
=|\beta|^k N_{-\beta}+k\int_{\lambda_1}^{-\beta}|\lambda|^{k-1}N_\lambda \,d\lambda.
\end{equation}
We denote by $I$ the last integral and use (\ref{shp}) and (\ref{assez}) and the arguments in the proof of Lemma \ref{if} to estimate $I$:
$$
I=\int_\beta^{-\lambda_1}\alpha^{k-1}N_{-\alpha}d\alpha=\left[F_\infty(1)\right]^{-1}\int_\beta^{-\lambda_1}\alpha^{k-1}\mathsf{Tr}F_\infty(K_\alpha)d\alpha=
$$
$$
=\left[F_\infty(1)\right]^{-1}\int_{\mathbb{R}^d}dx\int_0^\infty dt\,\Psi_\infty(t,x)\int_\beta^{-\lambda_1}d\alpha\,\alpha^{k-1}e^{-\alpha t}\leq
$$
$$
\leq \left[F_\infty(1)\right]^{-1}\int_{\mathbb{R}^d}dx\int_0^\infty dt\,t^{-1}\overset{\circ}{\wp}_t(0)\tilde{g}_\infty(tV_-(x))\int_\beta^{-\lambda_1}d\alpha\,\alpha^{k-1}e^{-\alpha t}\leq
$$
$$
\leq C\left[F_\infty(1)\right]^{-1}\int_{\mathbb{R}^d}dx\int_0^\infty dt\left(t^{-d-1}+t^{-d/2-1} \right)
\tilde{g}_\infty(tV_-(x))\int_\beta^{-\lambda_1}d\alpha\,\alpha^{k-1}e^{-\alpha t}
$$
The $\alpha$ integral may be bounded by:
$$
\int_0^{\infty}d\alpha\,\alpha^{k-1}e^{-\alpha t}=t^{-k}\int_0^{\infty}ds\,s^{k-1}e^{-s}\leq Ct^{-k}.
$$
Recalling that $\tilde{g}_\infty(t)=0$ for $t\leq1$ and $\tilde{g}_\infty(t)=t-1$ for $t>1$, we get that
$\tilde{g}_\infty(tV_-(x))=0$ for $V_-(x)=0$ and for $V_-(x)>0$
$$
\int_0^\infty dt\,t^{-k}\left(t^{-d-1}+t^{-d/2-1} \right)\tilde{g}_\infty(tV_-(x))=
$$
$$
=\left[V_-(x)\right]^{d+k}\int_1^\infty\,s^{-d-k-1}(s-1)ds\,+\,\left[V_-(x)\right]^{d/2+k}\int_1^\infty\,s^{-d/2-k-1}(s-1)ds,
$$
the integrals being convergent for $d\geq2$.

Using these estimations in (\ref{final}) we conclude that
$$
\sum_{j=1}^{N_{-\beta}}\left(|\lambda_j|^k-|\beta|^k\right)\leq C\left\{\int_{\mathbb{R}^d}\left[V_-(x)\right]^{d+k}dx\,+\,\int_{\mathbb{R}^d}\left[V_-(x)\right]^{d/2+k}dx\right\},
$$
thus
$$
\sum_{j=1}^{N_{-{(\beta_0)}}}\left(|\lambda_j|^k-|\beta|^k\right)\leq C\left\{\int_{\mathbb{R}^d}\left[V_-(x)\right]^{d+k}dx\,+\,\int_{\mathbb{R}^d}\left[V_-(x)\right]^{d/2+k}dx\right\},
$$
with the constant $C$ not depending on $\beta$ or $\beta_0$. We end the proof by leting $\beta\searrow0$.
\end{proof}

\subsubsection*{Acknowledgements}
VI and RP acknowledge partial support from the Contract no. 2-CEx06-11-18/2006.

\end{document}